\setlist{nosep}
\g@addto@macro\normalsize{%
  \setlength\abovedisplayskip{5pt}
  \setlength\belowdisplayskip{5pt}
  \setlength\abovedisplayshortskip{2pt}
  \setlength\belowdisplayshortskip{5pt}
}
\newcommand\Op{\mathsf{Op}}
\newcommand\RL{\mathbf{FVect}_\mathds{R}}
\newcommand\QSS{\mathbf{QuasiSubStoch}}
\newcommand\SubS{\bf{SubStoch}}
\newcommand{\tikeq}[2][]{\begin{equation}\tikzfig{Diagrams/#2}\ #1\label{eq:#2}\end{equation}}
\newcommand\smallsquare{\mathsmaller{\mathsmaller{\mathsmaller\square}}}
\newcommand\smallboxtimes{\mathsmaller{\mathsmaller{\mathsmaller\boxtimes}}}
\tikzset{->-/.style={decoration={
  markings,
  mark=at position .5 with {\arrow{>}}},postaction={decorate}}}
\tikzset{-<-/.style={decoration={
  markings,
  mark=at position .5 with {\arrow{<}}},postaction={decorate}}}
\tikzstyle{bwSpider}=[
 \tikzstyle{wbSpider}=[
\tikzstyle{epiCopoint}=[regular polygon,regular polygon sides=3,draw,scale=0.75,inner sep=-0.5pt,minimum width=5mm,fill=white,regular polygon rotate=0,line width=1pt]
\tikzstyle{epiPoint}=[regular polygon,regular polygon sides=3,draw,scale=0.75,inner sep=-0.5pt,minimum width=5mm,fill=white,regular polygon rotate=180,line width=1pt]
\tikzstyle{epiPointWide}=[regular polygon,regular polygon sides=3,draw,scale=0.75,inner sep=-0.5pt,minimum width=8mm,fill=white,regular polygon rotate=180,line width=1pt]
\tikzstyle{epiBox}=[fill=white,draw, line width = 1pt,inner sep=0.6mm,font=\footnotesize,minimum height=3mm,minimum width=3mm]
\tikzstyle{epiBoxWide}=[fill=white,draw, line width = 1pt,inner sep=0.6mm,font=\footnotesize,minimum height=3mm,minimum width=5mm]
\tikzstyle{epiBoxVeryWide}=[fill=white,draw, line width = 1pt,inner sep=0.6mm,font=\footnotesize,minimum height=3mm,minimum width=7mm]
\tikzstyle{qWire}=[line width = 1pt, color=black]
\tikzstyle{cWire}=[color=gray,line width = .75pt]
\tikzstyle{CqWire}=[color=gray,line width = .75pt,->-]
\tikzstyle{CcWire}=[color=gray,line width = .75pt,->-]
\tikzstyle{RqWire}=[line width = 1pt, color=black,-<-]
\tikzstyle{RcWire}=[color=gray,line width = .75pt,-<-]
\tikzstyle{env}=[copoint,regular polygon rotate=0,minimum width=0.2cm, fill=black]
\tikzstyle{probs}=[shape=semicircle,fill=white,draw=black,shape border rotate=180,minimum width=1.2cm]
\tikzstyle{every picture}=[baseline=-0.25em,scale=0.5]
\tikzstyle{dotpic}=[] 
\tikzstyle{diredges}=[every to/.style={diredge}]
\tikzstyle{math matrix}=[matrix of math nodes,left delimiter=(,right delimiter=),inner sep=2pt,column sep=1em,row sep=0.5em,nodes={inner sep=0pt},text height=1.5ex, text depth=0.25ex]
\tikzstyle{inline text}=[text height=1.5ex, text depth=0.25ex,yshift=0.5mm]
\tikzstyle{label}=[font=\footnotesize,text height=1.5ex, text depth=0.25ex,yshift=0.5mm]
\tikzstyle{left label}=[label,anchor=east,xshift=1.5mm]
\tikzstyle{right label}=[label,anchor=west,xshift=-1mm]
\tikzstyle{up label}=[label,anchor=south,yshift=-1mm]
\tikzstyle{braceedge}=[decorate,decoration={brace,amplitude=2mm,raise=-1mm}]
\tikzstyle{small braceedge}=[decorate,decoration={brace,amplitude=1mm,raise=-1mm}]
\tikzstyle{doubled}=[line width=1.6pt] 
\tikzstyle{boldedge}=[doubled,shorten <=-0.17mm,shorten >=-0.17mm]
\tikzstyle{boldedgegray}=[doubled,gray,shorten <=-0.17mm,shorten >=-0.17mm]
\tikzstyle{singleedgegray}=[gray]
\tikzstyle{semidoubled}=[line width=1.4pt] 
\tikzstyle{semiboldedgegray}=[semidoubled,gray,shorten <=-0.17mm,shorten >=-0.17mm]
\tikzstyle{boxedge}=[semiboldedgegray]
\tikzstyle{boldedgedashed}=[very thick,dashed,shorten <=-0.17mm,shorten >=-0.17mm]
\tikzstyle{vboldedgedashed}=[doubled,dashed,shorten <=-0.17mm,shorten >=-0.17mm]
\tikzstyle{left hook arrow}=[left hook-latex]
\tikzstyle{right hook arrow}=[right hook-latex]
\tikzstyle{sembracket}=[line width=0.5pt,shorten <=-0.07mm,shorten >=-0.07mm]
\tikzstyle{causal edge}=[->,thick,gray]
\tikzstyle{causal nondir}=[thick,gray]
\tikzstyle{timeline}=[thick,gray, dashed]
\tikzstyle{cedge}=[<->,thick,gray!70!white]
\tikzstyle{empty diagram}=[draw=gray!40!white,dashed,shape=rectangle,minimum width=1cm,minimum height=1cm]
\tikzstyle{empty diagram small}=[draw=gray!50!white,dashed,shape=rectangle,minimum width=0.6cm,minimum height=0.5cm]
\tikzstyle{dot}=[inner sep=0mm,minimum width=2mm,minimum height=2mm,draw,shape=circle]
\tikzstyle{bigdot}=[inner sep=0mm,minimum width=5mm,minimum height=5mm,draw,shape=circle]
\tikzstyle{leak}=[white dot, shape=regular polygon, minimum size=3.3 mm, regular polygon sides=3, outer sep=-0.2mm, regular polygon rotate=270]
\tikzstyle{proj}=[regular polygon,regular polygon sides=4,draw,scale=0.75,inner sep=-0.5pt,minimum width=6mm,fill=white]
\tikzstyle{projOut}=[regular polygon,regular polygon sides=3,draw,scale=0.75,inner sep=-0.5pt,minimum width=7.5mm,fill=white,regular polygon rotate=180]
\tikzstyle{projIn}=[regular polygon,regular polygon sides=3,draw,scale=0.75,inner sep=-0.5pt,minimum width=7.5mm,fill=white]
\tikzstyle{Vleak}=[white dot, shape=regular polygon, minimum size=3.3 mm, regular polygon sides=3, outer sep=-0.2mm, regular polygon rotate=90]
\tikzstyle{dleak}=[white dot, line width=1.6pt, shape=regular polygon, minimum size=3.3 mm, regular polygon sides=3, outer sep=-0.2mm, regular polygon rotate=270]
\tikzstyle{Wsquare}=[white dot, shape=regular polygon, rounded corners=0.8 mm, minimum size=3.3 mm, regular polygon sides=3, outer sep=-0.2mm]
\tikzstyle{Wsquareadj}=[white dot, shape=regular polygon, rounded corners=0.8 mm, minimum size=3.3 mm, regular polygon sides=3, outer sep=-0.2mm, regular polygon rotate=180]
\tikzstyle{ddot}=[inner sep=0mm, doubled, minimum width=2.5mm,minimum height=2.5mm,draw,shape=circle]
\tikzstyle{clear dot}=[dot,fill=none,text depth=-0.2mm,draw=gray, line width = .75pt]
\tikzstyle{tall clear dot}=[dot,fill=none,text depth=-0.2mm,draw=gray, line width = .75pt,shape=ellipse, minimum height=5mm]
\tikzstyle{wide clear dot}=[dot,fill=none,text depth=-0.2mm,draw=gray, line width = .75pt, shape=ellipse, minimum width = 5mm]
\tikzstyle{very wide clear dot}=[dot,fill=none,text depth=-0.2mm,draw=gray, line width = .75pt, shape=ellipse, minimum width = 7mm ]
\tikzstyle{black dot}=[dot,fill=black]
\tikzstyle{white dot}=[dot,fill=white,,text depth=-0.2mm]
\tikzstyle{white Wsquare}=[Wsquare,fill=gray,,text depth=-0.2mm]
\tikzstyle{white Wsquareadj}=[Wsquareadj,fill=white,,text depth=-0.2mm]
\tikzstyle{green dot}=[white dot] 
\tikzstyle{gray dot}=[dot,fill=gray!40!white,,text depth=-0.2mm]
\tikzstyle{red dot}=[gray dot] 
\tikzstyle{black ddot}=[ddot,fill=black]
\tikzstyle{white ddot}=[ddot,fill=white]
\tikzstyle{gray ddot}=[ddot,fill=gray!40!white]
\tikzstyle{gray edge}=[gray!60!white]
\tikzstyle{small dot}=[inner sep=0.2mm,minimum width=0pt,minimum height=0pt,draw,shape=circle]
\tikzstyle{small black dot}=[small dot,fill=black]
\tikzstyle{small white dot}=[small dot,fill=white]
\tikzstyle{small gray dot}=[small dot,fill=gray,draw=gray]
\tikzstyle{causal dot}=[inner sep=0.4mm,minimum width=0pt,minimum height=0pt,draw=white,shape=circle,fill=gray!40!white]
\tikzstyle{phase dimensions}=[minimum size=5mm,font=\footnotesize,rectangle,rounded corners=2.5mm,inner sep=0.2mm,outer sep=-2mm]
\tikzstyle{dphase dimensions}=[minimum size=5mm,font=\footnotesize,rectangle,rounded corners=2.5mm,inner sep=0.2mm,outer sep=-2mm]
\tikzstyle{white phase dot}=[dot,fill=white,phase dimensions]
\tikzstyle{white phase ddot}=[ddot,fill=white,dphase dimensions]
\tikzstyle{white rect ddot}=[draw=black,fill=white,doubled,minimum size=5mm,font=\footnotesize,rectangle,rounded corners=2.5mm,inner sep=0.2mm]
\tikzstyle{gray rect ddot}=[draw=black,fill=gray!40!white,doubled,minimum size=6mm,font=\footnotesize,rectangle,rounded corners=3mm]
\tikzstyle{gray phase dot}=[dot,fill=gray!40!white,phase dimensions]
\tikzstyle{gray phase ddot}=[ddot,fill=gray!40!white,dphase dimensions]
\tikzstyle{grey phase dot}=[gray phase dot]
\tikzstyle{grey phase ddot}=[gray phase ddot]
\tikzstyle{small phase dimensions}=[minimum size=4mm,font=\tiny,rectangle,rounded corners=2mm,inner sep=0.2mm,outer sep=-2mm]
\tikzstyle{small dphase dimensions}=[minimum size=4mm,font=\tiny,rectangle,rounded corners=2mm,inner sep=0.2mm,outer sep=-2mm]
\tikzstyle{small gray phase dot}=[dot,fill=gray!40!white,small phase dimensions]
\tikzstyle{small gray phase ddot}=[ddot,fill=gray!40!white,small dphase dimensions]
\tikzstyle{small map}=[draw,shape=rectangle,minimum height=4mm,minimum width=4mm,fill=white]
\tikzstyle{cnot}=[fill=white,shape=circle,inner sep=-1.4pt]
\tikzstyle{asym hadamard}=[fill=white,draw,shape=NEbox,inner sep=0.6mm,font=\footnotesize,minimum height=4mm]
\tikzstyle{asym hadamard conj}=[fill=white,draw,shape=NWbox,inner sep=0.6mm,font=\footnotesize,minimum height=4mm]
\tikzstyle{asym hadamard dag}=[fill=white,draw,shape=SEbox,inner sep=0.6mm,font=\footnotesize,minimum height=4mm]
\tikzstyle{hadamard}=[fill=white,draw,inner sep=0.6mm,font=\footnotesize,minimum height=4mm,minimum width=4mm]
\tikzstyle{small hadamard}=[fill=white,draw,inner sep=0.6mm,minimum height=1.5mm,minimum width=1.5mm]
\tikzstyle{small hadamard rotate}=[small hadamard,rotate=45]
\tikzstyle{dhadamard}=[hadamard,doubled]
\tikzstyle{small dhadamard}=[small hadamard,doubled]
\tikzstyle{small dhadamard rotate}=[small hadamard rotate,doubled]
\tikzstyle{antipode}=[white dot,inner sep=0.3mm,font=\footnotesize]
\tikzstyle{scalar}=[diamond,draw,inner sep=0.5pt,font=\small]
\tikzstyle{dscalar}=[diamond,doubled, draw,inner sep=0.5pt,font=\small]
\tikzstyle{small box}=[rectangle,inline text,fill=white,draw,minimum height=5mm,yshift=-0.5mm,minimum width=5mm,font=\small]
\tikzstyle{small gray box}=[small box,fill=gray!30]
\tikzstyle{medium box}=[rectangle,inline text,fill=white,draw,minimum height=5mm,yshift=-0.5mm,minimum width=10mm,font=\small]
\tikzstyle{square box}=[small box] 
\tikzstyle{medium gray box}=[small box,fill=gray!30]
\tikzstyle{semilarge box}=[rectangle,inline text,fill=white,draw,minimum height=5mm,yshift=-0.5mm,minimum width=12.5mm,font=\small]
\tikzstyle{large box}=[rectangle,inline text,fill=white,draw,minimum height=5mm,yshift=-0.5mm,minimum width=15mm,font=\small]
\tikzstyle{large gray box}=[small box,fill=gray!30]
\tikzstyle{Bayes box}=[rectangle,fill=black,draw, minimum height=3mm, minimum width=3mm]
\tikzstyle{gray square point}=[small box,fill=gray!50]
\tikzstyle{dphase box white}=[dhadamard]
\tikzstyle{dphase box gray}=[dhadamard,fill=gray!50!white]
\tikzstyle{phase box white}=[hadamard]
\tikzstyle{phase box gray}=[hadamard,fill=gray!50!white]
\tikzstyle{point}=[regular polygon,regular polygon sides=3,draw,scale=0.75,inner sep=-0.5pt,minimum width=9mm,fill=white,regular polygon rotate=180]
\tikzstyle{infpoint}=[regular polygon,regular polygon sides=3,draw,scale=0.75,inner sep=-0.5pt,minimum width=9mm,fill=white,regular polygon rotate=90]
\tikzstyle{point nosep}=[regular polygon,regular polygon sides=3,draw,scale=0.75,inner sep=-2pt,minimum width=9mm,fill=white,regular polygon rotate=180]
\tikzstyle{infcopoint}=[regular polygon,regular polygon sides=3,draw,scale=0.75,inner sep=-0.5pt,minimum width=9mm,fill=white,regular polygon rotate=270]
\tikzstyle{copoint}=[regular polygon,regular polygon sides=3,draw,scale=0.75,inner sep=-0.5pt,minimum width=9mm,fill=white]
\tikzstyle{dpoint}=[point,doubled]
\tikzstyle{dcopoint}=[copoint,doubled]
\tikzstyle{pointgrow}=[shape=cornerpoint,kpoint common,scale=0.75,inner sep=3pt]
\tikzstyle{pointgrow dag}=[shape=cornercopoint,kpoint common,scale=0.75,inner sep=3pt]
\tikzstyle{wide copoint}=[fill=white,draw,shape=isosceles triangle,shape border rotate=90,isosceles triangle stretches=true,inner sep=0pt,minimum width=1.5cm,minimum height=6.12mm]
\tikzstyle{wide point}=[fill=white,draw,shape=isosceles triangle,shape border rotate=-90,isosceles triangle stretches=true,inner sep=0pt,minimum width=1.5cm,minimum height=6.12mm,yshift=-0.0mm]
\tikzstyle{wide point plus}=[fill=white,draw,shape=isosceles triangle,shape border rotate=-90,isosceles triangle stretches=true,inner sep=0pt,minimum width=1.74cm,minimum height=7mm,yshift=-0.0mm]
\tikzstyle{wide dpoint}=[fill=white,doubled,draw,shape=isosceles triangle,shape border rotate=-90,isosceles triangle stretches=true,inner sep=0pt,minimum width=1.5cm,minimum height=6.12mm,yshift=-0.0mm]
\tikzstyle{tinypoint}=[regular polygon,regular polygon sides=3,draw,scale=0.55,inner sep=-0.15pt,minimum width=6mm,fill=white,regular polygon rotate=180]
\tikzstyle{white point}=[point]
\tikzstyle{white dpoint}=[dpoint]
\tikzstyle{green point}=[white point] 
\tikzstyle{white copoint}=[copoint]
\tikzstyle{gray point}=[point,fill=gray!40!white]
\tikzstyle{gray dpoint}=[gray point,doubled]
\tikzstyle{red point}=[gray point] 
\tikzstyle{gray copoint}=[copoint,fill=gray!40!white]
\tikzstyle{gray dcopoint}=[gray copoint,doubled]
\tikzstyle{white point guide}=[regular polygon,regular polygon sides=3,font=\scriptsize,draw,scale=0.65,inner sep=-0.5pt,minimum width=9mm,fill=white,regular polygon rotate=180]
\tikzstyle{black point}=[point,fill=black,font=\color{white}]
\tikzstyle{black copoint}=[copoint,fill=black,font=\color{white}]
\tikzstyle{tiny gray point}=[tinypoint,fill=gray!40!white]
\tikzstyle{diredge}=[->]
\tikzstyle{ddiredge}=[<->]
\tikzstyle{rdiredge}=[<-]
\tikzstyle{thickdiredge}=[->, very thick]
\tikzstyle{pointer edge}=[->,very thick,gray]
\tikzstyle{pointer edge part}=[very thick,gray]
\tikzstyle{dashed edge}=[dashed]
\tikzstyle{thick dashed edge}=[very thick,dashed]
\tikzstyle{thick gray dashed edge}=[thick dashed edge,gray!40]
\tikzstyle{thick map edge}=[very thick,|->]
\newcommand{\boxshape}[3]{%
\pgfdeclareshape{#1}{
\inheritsavedanchors[from=rectangle] 
\inheritanchorborder[from=rectangle]
\inheritanchor[from=rectangle]{center}
\inheritanchor[from=rectangle]{north}
\inheritanchor[from=rectangle]{south}
\inheritanchor[from=rectangle]{west}
\inheritanchor[from=rectangle]{east}
\backgroundpath{
\southwest \pgf@xa=\pgf@x \pgf@ya=\pgf@y
\northeast \pgf@xb=\pgf@x \pgf@yb=\pgf@y

\@tempdima=#2
\@tempdimb=#3

\pgfpathmoveto{\pgfpoint{\pgf@xa - 5pt + \@tempdima}{\pgf@ya}}
\pgfpathlineto{\pgfpoint{\pgf@xa - 5pt - \@tempdima}{\pgf@yb}}
\pgfpathlineto{\pgfpoint{\pgf@xb + 5pt + \@tempdimb}{\pgf@yb}}
\pgfpathlineto{\pgfpoint{\pgf@xb + 5pt - \@tempdimb}{\pgf@ya}}
\pgfpathlineto{\pgfpoint{\pgf@xa - 5pt + \@tempdima}{\pgf@ya}}
\pgfpathclose
}
}}
\tikzstyle{cloud}=[shape=cloud,draw,minimum width=1.5cm,minimum height=1.5cm]
\tikzstyle{map}=[draw,shape=NEbox,inner sep=1pt,minimum height=4mm,fill=white]
\tikzstyle{dashedmap}=[draw,dashed,shape=NEbox,inner sep=2pt,minimum height=6mm,fill=white]
\tikzstyle{mapdag}=[draw,shape=SEbox,inner sep=1pt,minimum height=4mm,fill=white]
\tikzstyle{mapadj}=[draw,shape=SEbox,inner sep=2pt,minimum height=6mm,fill=white]
\tikzstyle{maptrans}=[draw,shape=SWbox,inner sep=2pt,minimum height=6mm,fill=white]
\tikzstyle{mapconj}=[draw,shape=NWbox,inner sep=2pt,minimum height=6mm,fill=white]
\tikzstyle{medium map}=[draw,shape=NEbox,inner sep=2pt,minimum height=6mm,fill=white,minimum width=7mm]
\tikzstyle{medium map dag}=[draw,shape=SEbox,inner sep=2pt,minimum height=6mm,fill=white,minimum width=7mm]
\tikzstyle{medium map adj}=[draw,shape=SEbox,inner sep=2pt,minimum height=6mm,fill=white,minimum width=7mm]
\tikzstyle{medium map trans}=[draw,shape=SWbox,inner sep=2pt,minimum height=6mm,fill=white,minimum width=7mm]
\tikzstyle{medium map conj}=[draw,shape=NWbox,inner sep=2pt,minimum height=6mm,fill=white,minimum width=7mm]
\tikzstyle{semilarge map}=[draw,shape=NEbox,inner sep=2pt,minimum height=6mm,fill=white,minimum width=9.5mm]
\tikzstyle{semilarge map trans}=[draw,shape=SWbox,inner sep=2pt,minimum height=6mm,fill=white,minimum width=9.5mm]
\tikzstyle{semilarge map adj}=[draw,shape=SEbox,inner sep=2pt,minimum height=6mm,fill=white,minimum width=9.5mm]
\tikzstyle{semilarge map dag}=[draw,shape=SEbox,inner sep=2pt,minimum height=6mm,fill=white,minimum width=9.5mm]
\tikzstyle{semilarge map conj}=[draw,shape=NWbox,inner sep=2pt,minimum height=6mm,fill=white,minimum width=9.5mm]
\tikzstyle{large map}=[draw,shape=NEbox,inner sep=2pt,minimum height=6mm,fill=white,minimum width=12mm]
\tikzstyle{large map conj}=[draw,shape=NWbox,inner sep=2pt,minimum height=6mm,fill=white,minimum width=12mm]
\tikzstyle{very large map}=[draw,shape=NEbox,inner sep=2pt,minimum height=6mm,fill=white,minimum width=17mm]
\tikzstyle{medium dmap}=[draw,doubled,shape=NEbox,inner sep=2pt,minimum height=6mm,fill=white,minimum width=7mm]
\tikzstyle{medium dmap dag}=[draw,doubled,shape=SEbox,inner sep=2pt,minimum height=6mm,fill=white,minimum width=7mm]
\tikzstyle{medium dmap adj}=[draw,doubled,shape=SEbox,inner sep=2pt,minimum height=6mm,fill=white,minimum width=7mm]
\tikzstyle{medium dmap trans}=[draw,doubled,shape=SWbox,inner sep=2pt,minimum height=6mm,fill=white,minimum width=7mm]
\tikzstyle{medium dmap conj}=[draw,doubled,shape=NWbox,inner sep=2pt,minimum height=6mm,fill=white,minimum width=7mm]
\tikzstyle{semilarge dmap}=[draw,doubled,shape=NEbox,inner sep=2pt,minimum height=6mm,fill=white,minimum width=9.5mm]
\tikzstyle{semilarge dmap trans}=[draw,doubled,shape=SWbox,inner sep=2pt,minimum height=6mm,fill=white,minimum width=9.5mm]
\tikzstyle{semilarge dmap adj}=[draw,doubled,shape=SEbox,inner sep=2pt,minimum height=6mm,fill=white,minimum width=9.5mm]
\tikzstyle{semilarge dmap dag}=[draw,doubled,shape=SEbox,inner sep=2pt,minimum height=6mm,fill=white,minimum width=9.5mm]
\tikzstyle{semilarge dmap conj}=[draw,doubled,shape=NWbox,inner sep=2pt,minimum height=6mm,fill=white,minimum width=9.5mm]
\tikzstyle{large dmap}=[draw,doubled,shape=NEbox,inner sep=2pt,minimum height=6mm,fill=white,minimum width=12mm]
\tikzstyle{large dmap conj}=[draw,doubled,shape=NWbox,inner sep=2pt,minimum height=6mm,fill=white,minimum width=12mm]
\tikzstyle{large dmap trans}=[draw,doubled,shape=SWbox,inner sep=2pt,minimum height=6mm,fill=white,minimum width=12mm]
\tikzstyle{large dmap adj}=[draw,doubled,shape=SEbox,inner sep=2pt,minimum height=6mm,fill=white,minimum width=12mm]
\tikzstyle{large dmap dag}=[draw,doubled,shape=SEbox,inner sep=2pt,minimum height=6mm,fill=white,minimum width=12mm]
\tikzstyle{very large dmap}=[draw,doubled,shape=NEbox,inner sep=2pt,minimum height=6mm,fill=white,minimum width=19.5mm]
\tikzstyle{muxbox}=[draw,shape=rectangle,minimum height=3mm,minimum width=3mm,fill=white]
\tikzstyle{dmuxbox}=[muxbox,doubled]
\tikzstyle{box}=[draw,shape=rectangle,inner sep=2pt,minimum height=6mm,minimum width=6mm,fill=white]
\tikzstyle{dbox}=[draw,doubled,shape=rectangle,inner sep=2pt,minimum height=6mm,minimum width=6mm,fill=white]
\tikzstyle{dmap}=[draw,doubled,shape=NEbox,inner sep=2pt,minimum height=6mm,fill=white]
\tikzstyle{dmapdag}=[draw,doubled,shape=SEbox,inner sep=2pt,minimum height=6mm,fill=white]
\tikzstyle{dmapadj}=[draw,doubled,shape=SEbox,inner sep=2pt,minimum height=6mm,fill=white]
\tikzstyle{dmaptrans}=[draw,doubled,shape=SWbox,inner sep=2pt,minimum height=6mm,fill=white]
\tikzstyle{dmapconj}=[draw,doubled,shape=NWbox,inner sep=2pt,minimum height=6mm,fill=white]
\tikzstyle{ddmap}=[draw,doubled,dashed,shape=NEbox,inner sep=2pt,minimum height=6mm,fill=white]
\tikzstyle{ddmapdag}=[draw,doubled,dashed,shape=SEbox,inner sep=2pt,minimum height=6mm,fill=white]
\tikzstyle{ddmapadj}=[draw,doubled,dashed,shape=SEbox,inner sep=2pt,minimum height=6mm,fill=white]
\tikzstyle{ddmaptrans}=[draw,doubled,dashed,shape=SWbox,inner sep=2pt,minimum height=6mm,fill=white]
\tikzstyle{ddmapconj}=[draw,doubled,dashed,shape=NWbox,inner sep=2pt,minimum height=6mm,fill=white]
\tikzstyle{smap}=[draw,shape=sNEbox,fill=white]
\tikzstyle{smapdag}=[draw,shape=sSEbox,fill=white]
\tikzstyle{smapadj}=[draw,shape=sSEbox,fill=white]
\tikzstyle{smaptrans}=[draw,shape=sSWbox,fill=white]
\tikzstyle{smapconj}=[draw,shape=sNWbox,fill=white]
\tikzstyle{dsmap}=[draw,dashed,shape=sNEbox,fill=white]
\tikzstyle{dsmapdag}=[draw,dashed,shape=sSEbox,fill=white]
\tikzstyle{dsmaptrans}=[draw,dashed,shape=sSWbox,fill=white]
\tikzstyle{dsmapconj}=[draw,dashed,shape=sNWbox,fill=white]
\tikzstyle{mmap}=[draw,shape=mNEbox]
\tikzstyle{mmapdag}=[draw,shape=mSEbox]
\tikzstyle{mmaptrans}=[draw,shape=mSWbox]
\tikzstyle{mmapconj}=[draw,shape=mNWbox]
\tikzstyle{mmapgray}=[draw,fill=gray!40!white,shape=mNEbox]
\tikzstyle{smapgray}=[draw,fill=gray!40!white,shape=sNEbox]
\pgfmathsetmacro{\pgf@shorten@left}{\pgfkeysvalueof{/tikz/shorten left}}
\pgfmathsetmacro{\pgf@shorten@right}{\pgfkeysvalueof{/tikz/shorten right}}
\pgfmathsetmacro{\pgf@shorten@left}{\pgfkeysvalueof{/tikz/shorten left}}
\pgfmathsetmacro{\pgf@shorten@right}{\pgfkeysvalueof{/tikz/shorten right}}
\tikzstyle{kpoint common}=[draw,fill=white,inner sep=1pt,minimum height=4mm]
\tikzstyle{kpoint sc}=[shape=cornerpoint,kpoint common]
\tikzstyle{kpoint adjoint sc}=[shape=cornercopoint,kpoint common]
\tikzstyle{kpoint}=[shape=cornerpoint,shorten left=5pt,kpoint common]
\tikzstyle{kpoint adjoint}=[shape=cornercopoint,shorten left=5pt,kpoint common]
\tikzstyle{kpoint conjugate}=[shape=cornerpoint,shorten right=5pt,kpoint common]
\tikzstyle{kpoint transpose}=[shape=cornercopoint,shorten right=5pt,kpoint common]
\tikzstyle{kpoint symm}=[shape=cornerpoint,shorten left=5pt,shorten right=5pt,kpoint common]
\tikzstyle{wide kpoint sc}=[shape=cornerpoint,kpoint common, minimum width=1 cm]
\tikzstyle{wide kpointdag sc}=[shape=cornercopoint,kpoint common, minimum width=1 cm]
\tikzstyle{black kpoint}=[shape=cornerpoint,shorten left=5pt,kpoint common,fill=black,font=\color{white}]
\tikzstyle{black kpoint sm}=[shape=cornerpoint,shorten left=5pt,kpoint common,fill=black,font=\color{white},scale=0.75]
\tikzstyle{black kpoint adjoint}=[shape=cornercopoint,shorten left=5pt,kpoint common,fill=black,font=\color{white}]
\tikzstyle{black kpointadj}=[shape=cornercopoint,shorten left=5pt,kpoint common,fill=black,font=\color{white}]
\tikzstyle{black kpointadj sm}=[shape=cornercopoint,shorten left=5pt,kpoint common,fill=black,font=\color{white},scale=0.75]
\tikzstyle{black dkpoint}=[shape=cornerpoint,shorten left=5pt,kpoint common,fill=black, doubled,font=\color{white}]
\tikzstyle{black dkpoint adjoint}=[shape=cornercopoint,shorten left=5pt,kpoint common,fill=black, doubled,font=\color{white}]
\tikzstyle{black dkpointadj}=[shape=cornercopoint,shorten left=5pt,kpoint common,fill=black, doubled,font=\color{white}]
\tikzstyle{black dkpoint sm}=[shape=cornerpoint,shorten left=5pt,kpoint common,fill=black, doubled,font=\color{white},scale=0.75]
\tikzstyle{black dkpointadj sm}=[shape=cornercopoint,shorten left=5pt,kpoint common,fill=black, doubled,font=\color{white},scale=0.75]
\tikzstyle{kpointdag}=[kpoint adjoint]
\tikzstyle{kpointadj}=[kpoint adjoint]
\tikzstyle{kpointconj}=[kpoint conjugate]
\tikzstyle{kpointtrans}=[kpoint transpose]
\tikzstyle{big kpoint}=[kpoint, minimum width=1.2 cm, minimum height=8mm, inner sep=4pt, text depth=3mm]
\tikzstyle{wide kpoint}=[kpoint, minimum width=1 cm, inner sep=2pt]
\tikzstyle{wide kpointdag}=[kpointdag, minimum width=1 cm, inner sep=2pt]
\tikzstyle{wide kpointconj}=[kpointconj, minimum width=1 cm, inner sep=2pt]
\tikzstyle{wide kpointtrans}=[kpointtrans, minimum width=1 cm, inner sep=2pt]
\tikzstyle{wider kpoint}=[kpoint, minimum width=1.25 cm, inner sep=2pt]
\tikzstyle{wider kpointdag}=[kpointdag, minimum width=1.25 cm, inner sep=2pt]
\tikzstyle{wider kpointconj}=[kpointconj, minimum width=1.25 cm, inner sep=2pt]
\tikzstyle{wider kpointtrans}=[kpointtrans, minimum width=1.25 cm, inner sep=2pt]
\tikzstyle{gray kpoint}=[kpoint,fill=gray!50!white]
\tikzstyle{gray kpointdag}=[kpointdag,fill=gray!50!white]
\tikzstyle{gray kpointadj}=[kpointadj,fill=gray!50!white]
\tikzstyle{gray kpointconj}=[kpointconj,fill=gray!50!white]
\tikzstyle{gray kpointtrans}=[kpointtrans,fill=gray!50!white]
\tikzstyle{gray dkpoint}=[kpoint,fill=gray!50!white,doubled]
\tikzstyle{gray dkpointdag}=[kpointdag,fill=gray!50!white,doubled]
\tikzstyle{gray dkpointadj}=[kpointadj,fill=gray!50!white,doubled]
\tikzstyle{gray dkpointconj}=[kpointconj,fill=gray!50!white,doubled]
\tikzstyle{gray dkpointtrans}=[kpointtrans,fill=gray!50!white,doubled]
\tikzstyle{white label}=[draw,fill=white,rectangle,inner sep=0.7 mm]
\tikzstyle{gray label}=[draw,fill=gray!50!white,rectangle,inner sep=0.7 mm]
\tikzstyle{black label}=[draw,fill=black,rectangle,inner sep=0.7 mm]
\tikzstyle{dkpoint}=[kpoint,doubled]
\tikzstyle{wide dkpoint}=[wide kpoint,doubled]
\tikzstyle{dkpointdag}=[kpoint adjoint,doubled]
\tikzstyle{wide dkpointdag}=[wide kpointdag,doubled]
\tikzstyle{dkcopoint}=[kpoint adjoint,doubled]
\tikzstyle{dkpointadj}=[kpoint adjoint,doubled]
\tikzstyle{dkpointconj}=[kpoint conjugate,doubled]
\tikzstyle{dkpointtrans}=[kpoint transpose,doubled]
\tikzstyle{kscalar}=[kpoint common, shape=EBox, inner xsep=-1pt, inner ysep=3pt,font=\small]
\tikzstyle{kscalarconj}=[kpoint common, shape=WBox, inner xsep=-1pt, inner ysep=3pt,font=\small]
\tikzstyle{spekpoint}=[kpoint sc,minimum height=5mm,inner sep=3pt]
\tikzstyle{spekcopoint}=[kpoint adjoint sc,minimum height=5mm,inner sep=3pt]
\tikzstyle{dspekpoint}=[spekpoint,doubled]
\tikzstyle{dspekcopoint}=[spekcopoint,doubled]
 \tikzstyle{upground}=[circuit ee IEC,thick,ground,rotate=90,scale=2.5]
 \tikzstyle{downground}=[circuit ee IEC,thick,ground,rotate=-90,scale=2.5]
 \tikzstyle{infupground}=[circuit ee IEC,thick,ground,rotate=0,scale=2.5]
 \tikzstyle{infdownground}=[circuit ee IEC,thick,ground,rotate=180,scale=2.5]
 \tikzstyle{bigground}=[regular polygon,regular polygon sides=3,draw=gray,scale=0.50,inner sep=-0.5pt,minimum width=10mm,fill=gray]
\tikzstyle{arrs}=[-latex,font=\small,auto]
\tikzstyle{arrow plain}=[arrs]
\tikzstyle{arrow dashed}=[dashed,arrs]
\tikzstyle{arrow bold}=[very thick,arrs]
\tikzstyle{arrow hide}=[draw=white!0,-]
\tikzstyle{arrow reverse}=[latex-]
\tikzstyle{cdnode}=[]
\tikzstyle{tilde}=[draw=blue]
\tikzstyle{tildelabel}=[text=blue]
\let\olddagger\dagger
\renewcommand{\dagger}{\ensuremath{\olddagger}\xspace}
\theoremstyle{plain}
\newtheorem*{main theorem}{Main Theorem}
\newtheorem{theorem}{Theorem}[section]
\newtheorem{corollary}[theorem]{Corollary}
\newtheorem{lemma}[theorem]{Lemma}
\newtheorem{proposition}[theorem]{Proposition}
\newtheorem{definition}[theorem]{Definition}
\newtheorem{example}[theorem]{Example}
\newtheorem{example*}[theorem]{Example*}
\newtheorem{examples*}[theorem]{Examples*}
\newtheorem{remark}[theorem]{Remark}
\newtheorem{remark*}[theorem]{Remark*}
\newtheorem*{search problem}{Search Problem}
\def\bR{\begin{color}{red}}
\def\bB{\begin{color}{blue}}
\def\bM{\begin{color}{magenta}}
\def\bC{\begin{color}{cyan}}
\def\bW{\begin{color}{white}}
\def\bBl{\begin{color}{black}}
\def\bG{\begin{color}{green}}
\def\bY{\begin{color}{yellow}}
\def\e{\end{color}\xspace}
\newcommand{\bit}{\begin{itemize}}
\newcommand{\eit}{\end{itemize}\par\noindent}
\newcommand{\ben}{\begin{enumerate}}
\newcommand{\een}{\end{enumerate}\par\noindent}
\newcommand{\beq}{\begin{equation}}
\newcommand{\eeq}{\end{equation}\par\noindent}
\newcommand{\beqa}{\begin{eqnarray*}}
\newcommand{\eeqa}{\end{eqnarray*}\par\noindent}
\newcommand{\beqn}{\begin{eqnarray}}
\newcommand{\eeqn}{\end{eqnarray}\par\noindent}
\def\jR{\begin{color}{black}}
\def\jB{\begin{color}{black}}
\def\jM{\begin{color}{magenta}}
\def\jC{\begin{color}{cyan}}
\def\jW{\begin{color}{white}}
\def\jBl{\begin{color}{black}}
\def\jG{\begin{color}{green}}
\def\jY{\begin{color}{yellow}}
\newcommand{\xiNC}{\xi_{\rm{\kern -0.8pt n \kern -0.7pt c}}}
\begin{document}
\title{\LARGE A structure theorem for generalized-noncontextual ontological models}
\author{David Schmid}
\affiliation{
International Centre for Theory of Quantum Technologies, University of Gda\'nsk, 80-308 Gda\'nsk, Poland}
\affiliation{Perimeter Institute for Theoretical Physics, 31 Caroline Street North, Waterloo, Ontario Canada N2L 2Y5}
\affiliation{Institute for Quantum Computing and Department of Physics and Astronomy, University of Waterloo, Waterloo, Ontario N2L 3G1, Canada}
\author{John H. Selby}
\affiliation{
International Centre for Theory of Quantum Technologies, University of Gda\'nsk, 80-308 Gda\'nsk, Poland}
\author{Matthew F. Pusey}
\affiliation{Department of Mathematics, University of York, Heslington, York YO10 5DD, United Kingdom}
\author{Robert W. Spekkens}
\affiliation{Perimeter Institute for Theoretical Physics, 31 Caroline Street North, Waterloo, Ontario Canada N2L 2Y5}
\begin{abstract}
It is useful to have a criterion for when the predictions of an operational theory should be considered classically explainable.  Here we take the criterion to be that the theory
admits of a generalized-noncontextual ontological model. Existing works on generalized noncontextuality have focused on experimental scenarios having a simple structure: typically, prepare-measure scenarios. Here, we formally extend the framework of ontological models as well as the principle of generalized noncontextuality to arbitrary compositional scenarios. We leverage a process-theoretic framework to prove that, under some reasonable assumptions,  every generalized-noncontextual ontological model of a tomographically local operational theory has a surprisingly rigid and simple mathematical structure---in short, it corresponds to a frame representation which is not overcomplete. One consequence of this theorem is that the largest number of ontic states possible in any such model is given by the dimension of the associated generalized probabilistic theory. This constraint is useful for generating noncontextuality no-go theorems as well as techniques for experimentally certifying contextuality. Along the way, we extend known results concerning the equivalence of different notions of classicality from prepare-measure scenarios to arbitrary compositional scenarios. Specifically, we prove a correspondence between the following three notions of classical explainability of an operational theory:  (i) existence of a noncontextual ontological model for it, (ii) existence of a positive quasiprobability representation for the generalized probabilistic theory it defines, and (iii)  existence of an ontological model for the generalized probabilistic theory it defines.
\end{abstract}
\maketitle
\vspace{-3mm}
{
  \hypersetup{linkcolor=purple}
  \tableofcontents
}

\section{Introduction}

For a given operational theory, under what circumstances is it appropriate to say that its predictions admit of a classical explanation? This article starts with the presumption that this question is best answered as follows:
the operational theory must admit of an ontological model that satisfies the principle of generalized noncontextuality, defined in Ref.~\cite{Spekkens2005}.
Admitting of a generalized-noncontextual ontological model subsumes several other notions of classical explainability,
such as admitting of a positive quasiprobability representation~\cite{Spekkens2008,ferrie2008frame}, being embeddable in a simplicial generalized probabilistic theory (GPT)~\cite{schmid2019characterization,
shahandeh2019contextuality,
selby2021contextuality,selby2021accessible}, and admitting of a locally causal model~\cite{Bell,Bellreview}.   (Note that the first two of these results are first proved for general compositional scenarios in this paper.)  Additionally, generalized noncontextuality can be motivated as an instance of a methodological principle for theory construction due to Leibniz, as argued in Ref.~\cite{Leibniz} and the appendix of Ref.~\cite{Mazurek2016}.
Finally, operational theories that fail to admit of a generalized-noncontextual ontological model provide advantages for information processing relative to their classically explainable counterparts~\cite{POM,RAC,RAC2,
Saha_2019,
saha2019preparation,
MESD,
Lostaglio2020contextualadvantage,schmid2021only}.  Because the notion of generalized noncontextuality is the only one we consider in this article, we will often refer to it simply as `noncontextuality'.

 To date, prepare-measure scenarios are the experimental arrangements for which the consequences of generalized noncontextuality have been most explored.
A few works have also studied experiments where there is a transformation or an instrument intervening between the preparation and the measurement~\cite{lillystone2019single,PP1,PP2,AWV,AWVrobust,Lostaglio2020contextualadvantage}.
However, generalized noncontextuality has not previously been considered in
experimental scenarios wherein the component procedures are connected together in arbitrary ways, that is, in arbitrary compositional scenarios. Indeed, generalized noncontextuality has not even been formally defined at the level of compositional theories prior to this work; rather, it and several related concepts have only been formally defined for particular types of scenarios. In this work, we give a process-theoretic~\cite{coecke2015categorical,coecke2017picturing,selbyReconstruction,gogioso2017categorical} formulation of the various relevant notions of operational theories and of representations thereof, enabling the study of noncontextuality in arbitrary compositional scenarios, and indeed of the noncontextuality of operational {\em theories} themselves. We then derive a number of results regarding the structure of noncontextual representations of operational theories, and we ultimately put strong constraints on the nature of these representations.

Like Ref.~\cite{schmid2019characterization}, this work is sensitive to the distinction between operational theories and {\em quotiented} operational theories, commonly termed {\em generalized probabilistic theories} (or GPTs)~\cite{hardy2001quantum,barrett2007,
hardy2011reformulating,gogioso2017categorical,selbyReconstruction}. In an  operational theory, one understands the primitive processes (e.g., preparation,  transformation, and measurement procedures) to be lists of laboratory instructions detailing actions that can be taken on some physical system. Such a theory also makes predictions for the statistics of outcomes in any given experimental arrangement (without making any attempt to {\em explain} these predictions). As lists of laboratory instructions, the processes in an operational theory contain details which are not relevant to the observed statistics; any such details are termed the {\em context} of the given process~\cite{Spekkens2005}. In contrast, a {\em quotiented} operational theory, or GPT, arises when one removes this context information by identifying any two processes that differ only by context---that is, which lead to all the same statistical predictions, and so are said to be operationally equivalent. 

Our formalization of both operational theories and generalized probabilistic theories follows that of quotiented and unquotiented operational probabilistic theories (OPTs), as in Refs.~\cite{chiribella2010probabilistic,chiribella2011informational,chiribella2016quantum}. For completeness, we provide a pedagogical introduction to our notation and conventions in Sec.~\ref{sec:preliminaries}. The framework presented here is also a precursor to a more novel framework presented in Ref.~\cite{schmid2020unscrambling}, which is motivated by the objective of cleanly separating the causal and inferential aspects of a theory. 

There are multiple different representations of operational theories and generalized probabilistic theories that one can consider, often motivated by the aim of {\em explaining} the predictions of the theory by appealing to some underlying realist model of reality. The quintessential sort of explanation is an ontological model of an operational theory, which presumes that the systems passing between experimental devices have properties, and that the outcomes of measurements reveal information about these properties. A complete specification of these properties for a given system is termed its {\em ontic state}. The variability in this ontic state mediates causal influences between the devices. Ontological models may be defined for either  operational theories or generalized probabilistic theories. Another type of representation which has been widely considered (particularly by the quantum optics community) is that of quasiprobabilistic representations. Such representations are only defined for generalized probabilistic theories, and can be viewed as ontological models using quasiprobability distributions---that is, analogues of probability distributions in which some of the values can be negative. 

Our formalization of ontological models is more general than that which is usually given, since we define them in a compositional manner, and for arbitrary theories rather than for particular scenarios.  (Although note that this was already done for the special case of quantum theory in Ref.~\cite{gheorghiu2019ontological}.)  Furthermore, our formalization of quasiprobabilistic representations is more general than that which is usually given, since we define them for arbitrary GPTs (not necessarily quantum). In this latter case, our formalization was strongly influenced by the prescription suggested in Ref.~\cite{van2017quantum}.

We can now return to the question of when a theory's predictions admit of a classical explanation. 

As argued above, our guiding principle is that of noncontextuality. The principle of noncontextuality is a constraint on ontological models of  operational theories: namely, that 
the representation of operational processes does not depend on their context. That is, operational processes which lead to identical predictions about operational facts are represented by ontological processes which lead to identical predictions about ontological facts. If such a noncontextual ontological model exists, we take it to be a classical realist explanation of the predictions of the  operational theory. Hence, the notion of classical-explainability for  operational theories is the existence of a generalized-noncontextual ontological model.

If one takes the processes of a generalized probabilistic theory as the domain of one's representation map, then there is no context on which a given representation (be it an ontological model or a quasiprobabilistic representation) could conceivably depend. 
This point was first made in Ref.~\cite{schmid2019characterization}, and we expand on it in this work, in particular in Appendix~\ref{contextsingpts}.  In such an approach, one cannot directly take noncontextuality---independence of context---as a notion of classicality for generalized probabilistic theories. Still, Ref.~\cite{schmid2019characterization} showed that, in prepare-measure scenarios, the notion of noncontextuality for operational theories induces a natural and equivalent notion of classicality for generalized probabilistic theories.  In particular, a generalized probabilistic theory admits of a classical explanation if and only if there is a simplex that embeds its state space and furthermore the hypercube of effects that is dual to this simplex embeds its effect space. Such an embedding can be viewed as an ontological model of a GPT~\cite{schmid2019characterization,shahandeh2019contextuality}. Hence, the resulting notion of classical explainability for a GPT is the existence of an ontological model for it. Our work extends this result to the case of arbitrary compositional theories and scenarios.

We also extend (from prepare-measure scenarios to arbitrary compositional scenarios) the proof that positive quasiprobabilistic models are in one-to-one correspondence with noncontextual ontological models~\cite{Spekkens2008,schmid2019characterization}. Note that our proof---like the special case given in Ref.~\cite{schmid2019characterization}---corrects some issues with the original arguments in Ref.~\cite{Spekkens2008}.\footnote{
Ref.~\cite{Spekkens2008} was not careful to distinguish between quotiented and unquotiented operational theories, and as such did not stipulate whether quantum theory was being considered as an operational theory or as a GPT.  As a result, it failed to note that the most natural domain for a quasi-probabilistic representation is quantum theory {\em as a GPT}, while the domain of a noncontextual ontological representation is necessarily quantum theory {\em as an operational theory}. Also as a consequence, it argued that a positive quasiprobability representation {\em is the same thing as a} noncontextual ontological model. Our recasting of the relation between quasiprobability representations and noncontextual ontological representations is explicit about such distinctions, and consequently we show that a positive quasiprobability representation is {\em not} in and of itself a noncontextual ontological model. Rather, it is just that the sets of these are in one-to-one correspondence.}

{Note that simplex-embeddability can also be motivated as a notion of classicality as follows.  First, a simplicial GPT, i.e., one in which all of the state spaces are simplices, transformations are arbitrary convex-linear maps between these simplices, effects are elements of the hypercubes which are dual to the simplices, and which is tomographically local, 
has been argued to capture a notion of classicality among operational theories~\cite{hardy2001quantum,barrett2007}. If a GPT satisfies simplex-embeddabiltiy, then it follows that the set of states and the set of effects therein can be conceptualized as a subset of those arising in a simplicial GPT, implying that every experiment describable by the GPT can be \emph{simulated} within the simplicial GPT.  It follows that simplex embeddability captures the possibility of simulatability within a classical operational theory, hence a notion of classicality.}
 Furthermore, the existence of a positive quasiprobabilistic representation is a notion of classicality in the sphere of quantum optics. Hence, our results ultimately show that three independently motivated notions of classicality (namely these two, and the existence of a noncontextual ontological model of an operational theory) all coincide in general compositional situations (such as is relevant, for example, in quantum computation).

Most importantly, this equivalence allows us to prove that every noncontextual ontological model of a tomographically local operational theory which satisfies
 an assumption of {\em diagram preservation}  has a rigid and simple mathematical structure.
In particular, every such model is given by a diagram-preserving positive quasiprobabilistic model of the GPT associated with the operational theory, and we prove that every such quasiprobabilistic model is in turn a {\em frame representation}~\cite{ferrie2008frame,Ferrie_2009} that is not overcomplete.
As a corollary, it follows that the number of ontic states in any such model is no larger than the dimension of the GPT space.

This rigid structure theorem and bound on the number of ontic states shows that there is much less freedom in constructing noncontextual ontological models than previously  recognized. In particular, it means that once the representation of the states is fixed (i.e., by choice of frame) then \emph{there is no remaining freedom in the representations of the measurements and transformations}. Moreover, in many ontological models the number of ontic states is taken to be infinite (e.g., corresponding to points on the surface of the Bloch ball); however, all such models are immediately ruled out by our bound on the number of ontic states. These results also imply new proofs of the fact that operational quantum theory does not admit of a generalized-noncontextual model and simplifies the problem of witnessing generalized contextuality experimentally.

 Categorically, we view the GPT as being a particular monoidal category, and the representations thereof as being particular strong monoidal functors into subcategories of $\RL$ (the category of linear maps between finite dimensional real vector spaces).  In the case of tomographically local GPTs, our structure theorem states that any such functor is naturally isomorphic to a standard representation of a tomographically local GPT within $\RL$. In particular, this means that ontological models for such theories, should they exist, are essentially unique.    

We now summarize our key results and main assumptions in more detail. 

\subsection{Results}

  We begin by providing informal statements of our main results. The first result in this list extends the results of Ref.~\cite{schmid2019characterization} from the case of prepare-measure scenarios to arbitrary scenarios. The 
  second entry in the list constitutes the main technical result of this work. The third is
a primary consequence of this for the study of noncontextual ontological models.

\ben
\item  We refine and generalize the notions of quasiprobabilistic models and ontological models of operational theories and of GPTs to arbitrary compositional scenarios and theories, and we show a triple equivalence between:
\ben
\item a positive quasiprobabilistic model of the GPT associated to the operational theory, 
 \item an ontological model of the GPT associated to the operational theory, and
\item a noncontextual ontological model of the operational theory.
\een
\item We then prove a structure theorem for representations of a GPT which implies that:
\ben
\item every diagram-preserving quasiprobabilistic model of a GPT is a frame representation that is not overcomplete, i.e., an {\em exact}  frame representation
\item every diagram-preserving ontological model of a GPT is a {\em positive} exact frame representation, and
\item  every diagram-preserving noncontextual ontological model of an operational theory can be used to construct a positive exact frame representation of the associated GPT, and vice versa. 
\een
\item A key corollary of these is that the cardinality of the set of ontic states for a given system in
any diagram-preserving ontological model is equal to the dimension of the state space of that system in
the GPT. For instance, a noncontextual ontological model of a qudit must have exactly $d^2$ ontic states. Similarly, the dimension of the sample space of any diagram-preserving quasiprobabilistic model is the GPT dimension.
\een 

These results show that by moving beyond prepare-measure scenarios, the concept of a noncontextual ontological model of an operational theory becomes constrained to a remarkably specific and simple mathematical structure. Moreover, our bound on the number of ontic states yields new proofs of the impossibility of a noncontextual model of quantum theory (e.g., via Hardy's ontological excess baggage theorem~\cite{Hardy2004}) and dramatic simplifications to algorithms for witnessing contextuality in experimental data (e.g. reducing the algorithm introduced in Ref.~\cite{schmid2019characterization} from a hierarchy of tests to a single test).

\subsection{Assumptions} \label{secassumptions}

The assumptions that are needed to prove our results
will be formally introduced as they become relevant.
 For the sake of having a complete list in one place, however, we provide an informal account of them here. These assumptions can be divided into two categories.

First, we have assumptions limiting the sorts of operational theories that we are considering.
\ben
\item {Unique deterministic effect:} We consider only operational theories
in which all deterministic effects (corresponding to implementing a measurement on the system and marginalizing over its outcome) are operationally equivalent~\cite{chiribella2010probabilistic}.
\item {Arbitrary mixtures:} We assume that every mixture of procedures within an operational theory is also an effective procedure within that operational theory. That is, for any pair of procedures in the theory, there exists a third procedure defined by flipping
a weighted coin and choosing to implement either the first or the second, depending on the outcome of the coin flip.
\item  {Finite dimensionality:} We assume that the dimension of the GPT associated to the operational theory is finite.
\item {Tomographic locality:} For some of our results, we moreover limit our analysis to operational theories 
 whose corresponding GPT is tomographically local (namely, where all GPT processes on composite systems can be fully characterized by probing the component systems locally \cite{hardy2001quantum}).  
\een

Second, we have assumptions that concern the ontological model (or quasiprobabilistic model).
\ben
\item {Deterministic effect preservation:} Any deterministic effect in the operational theory is represented by marginalization over the  sample 
 space of the system in the ontological (or quasiprobabilistic) model.
\item {Convex-Linearity:} The representation of a mixture of procedures is given by the mixture of their representations, and the representation of a coarse-graining of effects is given by the coarse-graining of their representations.
\item {Empirical adequacy:} The ontological (or quasiprobabilistic) representations must make the same predictions as the operational theory.
\item {Diagram preservation:} The compositional structure of the ontological (or quasiprobabilistic) representation must be the same as the compositional structure of the operational theory.  (Formally, this means that we take these representations to be strong monoidal functors.) 
\een

The most significant assumption regarding the scope of operational theories to which our results apply is that of tomographic locality.   Among the assumptions concerning the nature of the ontological (or quasiprobablistic) model, the only one that is not completely standard is that of diagram preservation.

As we will explain,
however, the assumption of
diagram preservation does not restrict the scope of applicability of our results; rather, it is a prescription for how one is to apply our formalism to a given scenario.
Furthermore, our main results do not require the full power of diagram preservation, but rather can be derived from the application of this assumption to a few simple scenarios: the identity operation, the prepare-measure scenario, and the measure-and-reprepare operation.
However, full diagram preservation is a natural generalization of these assumptions, as well as of a number of other standard assumptions that have been made throughout the literature on ontological models, and so we will build it into our definitions rather than endorsing only those particular instances that we need for the results in this paper.
We discuss these points in more detail in Section~\ref{revisassump}, and 
provide a defense of full diagram preservation in 
Ref.~\cite{schmid2020unscrambling}.

\section{Preliminaries}\label{sec:preliminaries}

In this section we provide a pedagogical introduction to the diagrammatic notation that we will employ and its application to operational theories, tomographically local GPTs, and their ontological and quaisiprobabilistic representations. This section should be treated largely as a review of the relevant literature, which we include to have a self-contained presentation of the necessary formalism for our main results. 

\subsection{Process theories}
In this paper we will represent
various types of theories as \emph{process theories}~\cite{coecke2017picturing}, which highlights the compositional structures within these theories.
We will express certain relationships that hold between these process theories in terms of \emph{diagram-preserving maps}\footnote{This work could also be presented in the language of category theory. In particular, process theories can be viewed as symmetric monoidal categories~\cite{coecke2017picturing} and diagram preserving maps as strong monoidal functors between them~\cite{mellies2006functorial}.}. We give a brief introduction to this formalism here. Readers who would like a deeper understanding of this approach can read, for example, Refs.~\cite{coecke2017picturing,coecke2015categorical,selbyReconstruction,gogioso2017categorical,schmid2020unscrambling}.

A process theory $\mathcal{P}$ is specified by a collection of systems $A, B, C, ...$ and a collection of \emph{processes} on these systems.
We will represent the processes diagrammatically, e.g.,

\begin{equation} \left\{

.
\end{equation} 
In particular, this means that $M$ maps the identity processes in $\mathcal{P}$ to identity processes in $\mathcal{P}'$.
\begin{remark} If we interpret these process theories as symmetric monoidal categories, then any strict monoidal functor $\bar{M}$ defines a diagram preserving map $M$, simply by taking $M(A)=\bar{M}(A)$ and $M(f)=\bar{M}(f)$. Note that this latter equation is not obviously well-typed, as, according to Eq.~\eqref{eq:functorApp}, $M(f):M(A_1)\otimes\cdots \otimes M(A_n)\to M(B_1)\otimes\cdots M(B_m)$ while $\bar{M}(f):M(A_1\otimes\cdots \otimes A_n)\to M(B_1\otimes\cdots\otimes B_m)$. However, in the case of strict monoidal functors, 
we have that $\bar{M}(A_1\otimes\cdots \otimes A_n)=\bar{M}(A_1)\otimes\cdots\otimes \bar{M}(A_n)$, and so this is not actually a problem.  If, on the other hand, one instead has a strong monoidal functor, $\bar{M}$, in which this equality is relaxed to the natural isomorphism~\cite{mac2013categories} $\mu$, $\bar{M}(A_1\otimes\cdots \otimes A_n)\stackrel{\mu}{\cong}\bar{M}(A_1)\otimes\cdots\otimes \bar{M}(A_n)$, one can still use this to define a diagram preserving map. The difference is that now (following Ref.~\cite{mellies2006functorial}) we need to use the natural isomorphisms $\mu$ in order to define $M(f)$, that is, we define $M(f):= \mu^{-1}\circ \bar{M}(f) \circ \mu$. In this paper we will always be considering strong monoidal functors where $\bar{M}(I)=I$, but if one merely had that $\bar{M}(I)\stackrel{\epsilon}{\cong} I$, then one must also incorporate this natural isomorphism when defining the action of $M$ on states, effects, and scalars.
\end{remark}

We will also use the concept of a sub-process theory, where the intuitive idea is that  $\mathcal{P}'$ is a sub-process theory of $\mathcal{P}$, denoted $\mathcal{P}' \subseteq \mathcal{P}$, if the processes in $\mathcal{P}'$ are a subset of the processes in $\mathcal{P}$ that are themselves closed under forming diagrams. Formally, we do not require that a sub-process theory \emph{is} such a theory, but only that it is \emph{equivalent} to such a theory; that is, we say that $\mathcal{P'}\subseteq \mathcal{P}$ if and only if there exists a faithful strong monoidal functor from $\mathcal{P'}$ into $\mathcal{P}$.  

The key process theory which underpins this work is $\RL$, defined as follows:

\begin{example}[$\RL$]
Systems are labeled by finite dimensional 
real vector spaces $V$ where the composition of systems $V$ and $W$ is given by the tensor product $V\otimes W$. Processes
are defined as linear maps from the input vector space to the output vector space. Composing two processes in sequence corresponds to composing the linear maps, while composing them in parallel corresponds to the tensor product of the maps. If a process lacks an input and/or an output then we view them as linear maps to or from the one-dimensional vector space $\mathds{R}$. Hence, processes with no input correspond to vectors in $V$ and processes with no output to covectors, i.e., elements of $V^*$. This implies that {\em scalars}---processes with neither inputs nor outputs---correspond to real numbers. $\RL$ is equivalent to the process theory of real-valued matrices. However, representing the former in terms of the latter requires artificially choosing a preferred basis for the vector spaces.
\end{example}

The first is the process theory of (sub)stochastic processes. Here, systems are labeled by finite sets
$\Lambda$ which compose via the Cartesian product. Processes with input $\Lambda$ and output $\Lambda'$ correspond to (sub)stochastic maps, and can be thought of as functions
\beq
f:\Lambda\times\Lambda' \to [0,1] :: (\lambda,\lambda')\mapsto f(\lambda'|\lambda)
\eeq
where for all $\lambda\in\Lambda$ we have $\sum_{\lambda'\in\Lambda'} f(\lambda'|\lambda) \leq 1$. When this inequality is an equality, they are said to be {\em stochastic} (rather than substochastic).
For any pair of functions $f: \Lambda \times \Lambda'\to[0,1]$ and $g: \Lambda' \times \Lambda''\to[0,1]$ (where the output type of $f$ matches the input type of $g$), sequential composition is given by $g \circ f:\Lambda\times\Lambda''\to[0,1]$ via the following rule for composing the functions:
\beq
g\circ f (\lambda''|\lambda) := \sum_{\lambda'\in\Lambda'} g(\lambda''|\lambda')f(\lambda'|\lambda).
\eeq
For any pair of functions $f: \Lambda \times \Lambda'\to[0,1]$ and $g: \Lambda'' \times \Lambda'''\to[0,1]$, parallel composition is given by $g \otimes f:(\Lambda'\times\Lambda) \times (\Lambda'''\times\Lambda'')\to[0,1]$ via:
\beq
g\otimes f ((\lambda''',\lambda'')|(\lambda',\lambda)):= g(\lambda'''|\lambda')f(\lambda''|\lambda).
\eeq

It is sometimes more convenient or natural to take an alternative (but equivalent) point of view on this process theory (e.g., this view makes it more clear that this is a sub-process theory of $\RL$). In this alternative view, the systems are not simply given by finite sets $\Lambda$, but rather are taken to be the vector space of functions from $\Lambda$ to $\mathds{R}$, denoted $\mathds{R}^\Lambda$. Then, rather than taking the processes to be functions $f:\Lambda\times \Lambda'\to[0,1]$, one takes them to be linear maps from $\mathds{R}^\Lambda$ to $\mathds{R}^{\Lambda'}$, denoted by:
\beq
\mathbf{f}:\mathds{R}^\Lambda \to \mathds{R}^{\Lambda'}:: \mathbf{v}\mapsto \mathbf{f}(\mathbf{v})
\eeq
where for all $\lambda'\in\Lambda'$, we define
$\mathbf{f(v)}(\lambda') := \sum_{\lambda\in\Lambda} f(\lambda'|\lambda)\mathbf{v}(\lambda)$. It is then straightforward to show that sequential composition of the stochastic processes corresponds to composition of the associated linear maps and that parallel composition of the stochastic processes corresponds to the tensor product of the associated linear maps. For example, for sequential composition we have that for all $\mathbf{v}\in\mathds{R}^\Lambda$,
\begin{align}
\mathbf{(g\circ f)(v)}(\lambda'') &= \sum_{\lambda\in\Lambda}g\circ f(\lambda''|\lambda)\mathbf{v}(\lambda) \\
&= \sum_{\lambda\in\Lambda}\sum_{\lambda'\in\Lambda'}g(\lambda''|\lambda')f(\lambda'|\lambda)\mathbf{v}(\lambda)\\
&= \sum_{\lambda'\in\Lambda'}g(\lambda''|\lambda')\left(\sum_{\lambda\in\Lambda}f(\lambda'|\lambda)\mathbf{v}(\lambda)\right)\\
&= \sum_{\lambda'\in\Lambda}g(\lambda''|\lambda')\mathbf{f(v)}(\lambda)\\
&= \mathbf{g(f(v))}(\lambda)
\end{align}
Moreover, consider processes with no input---that is, linear maps $\mathbf{p}:\mathds{R}\to\mathds{R}^\Lambda$. By using the trivial isomorphism $\mathds{R}\cong \mathds{R}^\star$ where $\star$ is the singleton set $\star:=\{*\}$, one can see that these correspond to functions $p:\star\times\Lambda\to[0,1]$ such that $\sum_{\lambda\in\Lambda}p(\lambda|*)\leq 1$; following standard conventions, we can denote this as $p(\lambda):= p(\lambda|*)$. So $\mathbf{p}$ just corresponds to a subnormalised probability distribution, as expected. Similarly, processes with no output, such as $\mathbf{r}:\mathds{R}^\Lambda\to\mathds{R}$,  correspond to \emph{response functions}, that is, functions $r:\Lambda\times\star\to[0,1]$ such that for all $\lambda$ one has $r(\lambda):=r(*|\lambda) \in [0,1]$. Finally, it follows that processes with neither inputs nor outputs, $\mathbf{s}:\mathds{R}\to\mathds{R}$, correspond to elements of $[0,1]$, i.e., to probabilities.

Summarizing the above, we have:
\begin{example}[$\SubS$]
We define $\SubS$ as a subtheory of $\RL$ where systems are restricted to vector spaces of the form $\mathds{R}^\Lambda$ and processes are restricted to those that correspond to (sub)stochastic maps.
\end{example}

The second subtheory of $\RL$ is $\QSS$, which is the same as the process theory of (sub)stochastic processes, but where the constraint of positivity is dropped. The systems can be taken to be finite sets $\Lambda$, and the processes with input $\Lambda$ and output $\Lambda'$ can be taken to be
functions
\beq
f:\Lambda\times \Lambda' \to \mathds{R} :: (\lambda,\lambda')\mapsto f(\lambda'|\lambda).
\eeq
These are said to be quasistochastic (as opposed to quasisubstochastic) if they moreover satisfy  $\sum_{\lambda\in\Lambda'}f(\lambda'|\lambda) = 1$ for all $\lambda\in\Lambda$. The way that these compose and are represented in $\RL$ is exactly the same as in the case of substochastic maps.

Summarizing, we have
\begin{example}[$\QSS$]
We define $\QSS$ as the subtheory of $\RL$ where systems are restricted to vector spaces of the form $\mathds{R}^\Lambda$ and processes are those that correspond to quasi(sub)stochastic maps.
\end{example}

By construction, $\SubS \subset \QSS \subset \RL$; however, in contrast to $\RL$, $\SubS$ and $\QSS$ {\em do} come equipped with a preferred basis for each system.  It is known that quantum theory as a GPT ($\mathbf{QT}$) can be represented as a subtheory of $\QSS$ (see, for example, \cite{van2017quantum}).

\subsection{Operational theories}

We now introduce a process-theoretic presentation of the framework of operational theories as defined in  Ref.~\cite{Spekkens2005},
resulting in a framework that is essentially that of (unquotiented) operational probabilistic theories~\cite{chiribella2016quantum}.
An operational theory $\Op$ is given by a process theory specifying a set of physical systems and the processes which act on them (where processes are viewed as lists of lab instructions), together with a rule for assigning probabilities to any closed process.
A generic laboratory procedure has an associated set of inputs and outputs, and will be denoted diagrammatically as:
\tikeq[.]{transf}
Of special interest are processes with no inputs and processes with no outputs, depicted respectively as
\begin{equation}
\tikzfig{Diagrams/prep}\ ,\
\tikzfig{Diagrams/effect}\
\label{prep}.
\end{equation}
The former is viewed as a \emph{preparation procedure} and the latter is viewed as an effect, corresponding to some outcome of some measurement.
We depict the probability rule by a map $p$, as
\beq
\tikzfig{Diagrams/probrule} = \Pr(E,P) \in [0,1]
\label{probrule}.
\eeq
That is, the application of $p$ on any closed diagram yields a real number between $0$ and $1$. Note that this is not a diagram-preserving map as it can only be applied to processes with no input and no output. (Nonetheless, we will see shortly how it has a diagram-preserving extension to arbitrary processes---namely, the quotienting map).

This probability rule must be
compatible with certain relations that hold between procedures~\cite{chiribella2008quantum,
hardy2011reformulating}.
First, it must factorise over separated diagrams, for example,
\begin{align}
\begin{tikzpicture}
	\begin{pgfonlayer}{nodelayer}
		\node [style=copoint] (0) at (0, 0.5) {$E_1$};
		\node [style=point] (1) at (0, -0.5) {$P_1$};
		\node [style=none] (2) at (-1.25, 1.5) {};
		\node [style=none] (3) at (3, 1.5) {};
		\node [style=none] (4) at (3, -1.5) {};
		\node [style=none] (5) at (-1.25, -1.5) {};
		\node [style=none] (6) at (2.75, -1.25) {\tiny $p$};
		\node [style=point] (7) at (1.75, -0.5) {$P_2$};
		\node [style=copoint] (8) at (1.75, 0.5) {$E_2$};
	\end{pgfonlayer}
	\begin{pgfonlayer}{edgelayer}
		\filldraw[fill=blue!20,draw=blue!40] (2.center) to (5.center) to (4.center) to (3.center) to cycle;
		\draw [style=qWire] (0) to (1);
		\draw [style=qWire] (8) to (7);
	\end{pgfonlayer}
\end{tikzpicture}
\quad &= \quad
\begin{tikzpicture}
	\begin{pgfonlayer}{nodelayer}
		\node [style=copoint] (0) at (0, 0.5) {$E_1$};
		\node [style=point] (1) at (0, -0.5) {$P_1$};
		\node [style=none] (2) at (-1.25, 1.5) {};
		\node [style=none] (3) at (1.25, 1.5) {};
		\node [style=none] (4) at (1.25, -1.5) {};
		\node [style=none] (5) at (-1.25, -1.5) {};
		\node [style=none] (6) at (1, -1.25) {\tiny $p$};
	\end{pgfonlayer}
	\begin{pgfonlayer}{edgelayer}
		\filldraw[fill=blue!20,draw=blue!40] (2.center) to (5.center) to (4.center) to (3.center) to cycle;
				\draw [style=qWire] (0) to (1);
	\end{pgfonlayer}
\end{tikzpicture}
\
\begin{tikzpicture}
	\begin{pgfonlayer}{nodelayer}
		\node [style=copoint] (0) at (0, 0.5) {$E_2$};
		\node [style=point] (1) at (0, -0.5) {$P_2$};
		\node [style=none] (2) at (-1.25, 1.5) {};
		\node [style=none] (3) at (1.25, 1.5) {};
		\node [style=none] (4) at (1.25, -1.5) {};
		\node [style=none] (5) at (-1.25, -1.5) {};
		\node [style=none] (6) at (1, -1.25) {\tiny $p$};
	\end{pgfonlayer}
	\begin{pgfonlayer}{edgelayer}
		\filldraw[fill=blue!20,draw=blue!40] (2.center) to (5.center) to (4.center) to (3.center) to cycle;
				\draw [style=qWire] (0) to (1);
	\end{pgfonlayer}
\end{tikzpicture}\\
&=\quad\Pr(E_1,P_1)\Pr(E_2,P_2).
\end{align}
Moreover, if $T_1$ is a procedure
that is a mixture of $T_2$ and $T_3$ with weights $\omega$ and $1-\omega$ respectively\footnote{That is, either $T_2$ or $T_3$ is implemented, as determined by the outcome of a weighted coin-flip.}, then it must hold that for any tester $\tau$,
we have
\beq
\begin{tikzpicture}
	\begin{pgfonlayer}{nodelayer}
		\node [style=small box] (0) at (2.25, 0) {$T_1$};
		\node [style=none] (1) at (2.25, 0.75) {};
		\node [style=none] (2) at (2.25, -0.75) {};
		\node [style=none] (3) at (1.75, 0.75) {};
		\node [style=none] (4) at (1.75, 1.25) {};
		\node [style=none] (5) at (3.5, 1.25) {};
		\node [style=none] (6) at (3.5, -1.25) {};
		\node [style=none] (7) at (1.75, -1.25) {};
		\node [style=none] (8) at (1.75, -0.75) {};
		\node [style=none] (9) at (3, 0.75) {};
		\node [style=none] (10) at (3, -0.75) {};
		\node [style=none] (11) at (3.25, 0) {$\tau$};
		\node [style=none] (12) at (1, 2) {};
		\node [style=none] (13) at (4.25, 2) {};
		\node [style=none] (14) at (4.25, -2) {};
		\node [style=none] (15) at (1, -2) {};
		\node [style=none] (16) at (4, -1.75) {\tiny $p$};
	\end{pgfonlayer}
	\begin{pgfonlayer}{edgelayer}
		\filldraw[fill=blue!20,draw=blue!40] (12.center) to (15.center) to (14.center) to (13.center) to cycle;
	\filldraw[fill=white,draw=black] (7.center) to (6.center) to (5.center) to (4.center) to (3.center) to (9.center) to (10.center) to (8.center) to cycle;	
		\draw [style=qWire] (0) to (1.center);
		\draw [style=qWire] (0) to (2.center);
	\end{pgfonlayer}
\end{tikzpicture}
\ =\omega\
\begin{tikzpicture}
	\begin{pgfonlayer}{nodelayer}
		\node [style=small box] (0) at (2.25, 0) {$T_2$};
		\node [style=none] (1) at (2.25, 0.75) {};
		\node [style=none] (2) at (2.25, -0.75) {};
		\node [style=none] (3) at (1.75, 0.75) {};
		\node [style=none] (4) at (1.75, 1.25) {};
		\node [style=none] (5) at (3.5, 1.25) {};
		\node [style=none] (6) at (3.5, -1.25) {};
		\node [style=none] (7) at (1.75, -1.25) {};
		\node [style=none] (8) at (1.75, -0.75) {};
		\node [style=none] (9) at (3, 0.75) {};
		\node [style=none] (10) at (3, -0.75) {};
		\node [style=none] (11) at (3.25, 0) {$\tau$};
		\node [style=none] (12) at (1, 2) {};
		\node [style=none] (13) at (4.25, 2) {};
		\node [style=none] (14) at (4.25, -2) {};
		\node [style=none] (15) at (1, -2) {};
		\node [style=none] (16) at (4, -1.75) {\tiny $p$};
	\end{pgfonlayer}
	\begin{pgfonlayer}{edgelayer}
		\filldraw[fill=blue!20,draw=blue!40] (12.center) to (15.center) to (14.center) to (13.center) to cycle;
	\filldraw[fill=white,draw=black] (7.center) to (6.center) to (5.center) to (4.center) to (3.center) to (9.center) to (10.center) to (8.center) to cycle;	
		\draw [style=qWire] (0) to (1.center);
		\draw [style=qWire] (0) to (2.center);
	\end{pgfonlayer}
\end{tikzpicture}
\ +(1-\omega)\
\begin{tikzpicture}
	\begin{pgfonlayer}{nodelayer}
		\node [style=small box] (0) at (2.25, 0) {$T_3$};
		\node [style=none] (1) at (2.25, 0.75) {};
		\node [style=none] (2) at (2.25, -0.75) {};
		\node [style=none] (3) at (1.75, 0.75) {};
		\node [style=none] (4) at (1.75, 1.25) {};
		\node [style=none] (5) at (3.5, 1.25) {};
		\node [style=none] (6) at (3.5, -1.25) {};
		\node [style=none] (7) at (1.75, -1.25) {};
		\node [style=none] (8) at (1.75, -0.75) {};
		\node [style=none] (9) at (3, 0.75) {};
		\node [style=none] (10) at (3, -0.75) {};
		\node [style=none] (11) at (3.25, 0) {$\tau$};
		\node [style=none] (12) at (1, 2) {};
		\node [style=none] (13) at (4.25, 2) {};
		\node [style=none] (14) at (4.25, -2) {};
		\node [style=none] (15) at (1, -2) {};
		\node [style=none] (16) at (4, -1.75) {\tiny $p$};
	\end{pgfonlayer}
	\begin{pgfonlayer}{edgelayer}
		\filldraw[fill=blue!20,draw=blue!40] (12.center) to (15.center) to (14.center) to (13.center) to cycle;
	\filldraw[fill=white,draw=black] (7.center) to (6.center) to (5.center) to (4.center) to (3.center) to (9.center) to (10.center) to (8.center) to cycle;	
		\draw [style=qWire] (0) to (1.center);
		\draw [style=qWire] (0) to (2.center);
	\end{pgfonlayer}
\end{tikzpicture}\label{mixprobconstr}
\eeq
Additionally,
if one operational effect $E_1$ is the coarse-graining of
two others, $E_2$ and $E_3$, then $\Pr(E_1,P)$ must
be the sum of $\Pr(E_2,P)$ and $\Pr(E_3,P)$ for all $P$.

Our main result holds only for operational theories satisfying the following property:
\begin{align} \label{tomlocforopthry}
\forall E_1,E_2,P_1,P_2\quad \tikzfig{Diagrams/7b_TomLoc1b} = \tikzfig{Diagrams/8b_TomLoc2b} \nonumber\\  \rotatebox{90}{$\iff$}\hspace{.22\textwidth} \nonumber \\
\forall E,P\quad \tikzfig{Diagrams/9b_TomLoc3b} = \tikzfig{Diagrams/10b_TomLoc4b}.
\end{align}
In other words, any two processes $T$ and $T'$ that give the same statistics for all local preparations on their inputs and all local measurements on their outputs {\em also} give the same statistics in arbitrary circuits.
 Such operational theories are alternatively characterized by the fact that the GPT defined by quotienting them satisfies tomographic locality, as we show below.

Two processes with the same input systems and output systems are said to be {\em operationally equivalent}~\cite{Spekkens2005} if they give rise to the same probabilities no matter what closed diagram they are embedded in.  The testers from Eq.~\eqref{eq:tester} facilitate a convenient diagrammatic representation of this condition.
That is, two processes are operationally equivalent, denoted by
\beq
\begin{tikzpicture}
	\begin{pgfonlayer}{nodelayer}
		\node [style={small box}] (0) at (-4, -0) {$T$};
		\node [style=none] (1) at (-4, -1.25) {};
		\node [style=none] (2) at (-4, 1.25) {};
		\node [style={small box}] (3) at (-1, -0) {$T'$};
		\node [style=none] (4) at (-1, -1.25) {};
		\node [style=none] (5) at (-1, 1.25) {};
		\node [style=none] (6) at (-2.5, 0) {$\simeq$};
		\node [style={right label}] (7) at (-4, -0.9999999) {$A$};
		\node [style={right label}] (8) at (-1, -0.9999999) {$A$};
		\node [style={right label}] (9) at (-4, 0.9999999) {$B$};
		\node [style={right label}] (10) at (-1, 0.9999999) {$B$};
	\end{pgfonlayer}
	\begin{pgfonlayer}{edgelayer}
		\draw [style=qWire] (0) to (2.center);
		\draw [style=qWire] (0) to (1.center);
		\draw [style=qWire] (3) to (5.center);
		\draw [style=qWire] (3) to (4.center);
	\end{pgfonlayer}
\end{tikzpicture}
\eeq
if they assign equal probabilities to every tester\footnote{The recognition that (for operational theories corresponding to GPTs that are not tomographically local) one must consider testers that allow for side channels can be found in Refs.~\cite{hardy2011reformulating,chiribella2014dilation,chiribella2016quantum} .}, so that
\beq
\forall \tau  \ \
\begin{tikzpicture}
	\begin{pgfonlayer}{nodelayer}
		\node [style=small box] (0) at (2.25, -0) {$T$};
		\node [style=none] (1) at (2.25, 0.75) {};
		\node [style=none] (2) at (2.25, -0.75) {};
		\node [style=none] (3) at (1.75, 0.75) {};
		\node [style=none] (4) at (1.75, 1.25) {};
		\node [style=none] (5) at (3.5, 1.25) {};
		\node [style=none] (6) at (3.5, -1.25) {};
		\node [style=none] (7) at (1.75, -1.25) {};
		\node [style=none] (8) at (1.75, -0.75) {};
		\node [style=none] (9) at (3, 0.75) {};
		\node [style=none] (10) at (3, -0.75) {};
		\node [style=none] (11) at (3.25, -0) {$\tau$};
		\node [style=none] (12) at (1, 2) {};
		\node [style=none] (13) at (4.25, 2) {};
		\node [style=none] (14) at (4.25, -2) {};
		\node [style=none] (15) at (1, -2) {};
		\node [style=none] (16) at (4, -1.75) {\tiny $p$};
		\node [style=none] (17) at (6.5, 0.75) {};
		\node [style=none] (18) at (9, 2) {};
		\node [style=none] (19) at (7, 0.75) {};
		\node [style=none] (20) at (8.25, 1.25) {};
		\node [style=none] (21) at (7.75, 0.75) {};
		\node [style=small box] (22) at (7, -0) {$T'$};
		\node [style=none] (23) at (6.5, -0.75) {};
		\node [style=none] (24) at (9, -2) {};
		\node [style=none] (25) at (7, -0.75) {};
		\node [style=none] (26) at (8.25, -1.25) {};
		\node [style=none] (27) at (6.5, 1.25) {};
		\node [style=none] (28) at (5.75, -2) {};
		\node [style=none] (29) at (6.5, -1.25) {};
		\node [style=none] (30) at (7.75, -0.75) {};
		\node [style=none] (31) at (8, -0) {$\tau$};
		\node [style=none] (32) at (5.75, 2) {};
		\node [style=none] (33) at (8.75, -1.75) {\tiny $p$};
		\node [style=none] (34) at (5, -0) {$=$};
	\end{pgfonlayer}
	\begin{pgfonlayer}{edgelayer}
	\filldraw[fill=blue!20,draw=blue!40]  (32.center) to (28.center) to (24.center) to (18.center) to cycle;
	\filldraw[fill=blue!20,draw=blue!40] (12.center) to (15.center) to (14.center) to (13.center) to cycle;
		\draw [style=qWire] (0) to (1.center);
		\draw [style=qWire] (0) to (2.center);
	\filldraw[fill=white,draw=black] (7.center) to (6.center) to (5.center) to (4.center) to (3.center) to (9.center) to (10.center) to (8.center) to cycle;	
		\draw [style=qWire] (22) to (19.center);
		\draw [style=qWire] (22) to (25.center);
		\filldraw[fill=white,draw=black] (29.center) to (26.center) to (20.center) to (27.center) to (17.center) to (21.center) to (30.center) to (23.center) to cycle;
	\end{pgfonlayer}
\end{tikzpicture}\ .
\eeq

It is easy to see that operational equivalence defines an equivalence relation. Hence, we can divide the space of processes into equivalence classes, and each process $T$ in the operational theory can be specified by its equivalence class $\widetilde{T}$, together with a label $c_T$ of the context of $T$, specifying which element of the equivalence class it is. For a given $T$, $c_T$ provides all the information which defines that process which is not relevant to its equivalence class. Hence, each procedure is specified by a tuple, $T :=(\widetilde{T},c_T)$, and we will denote it as such when convenient. In the case of closed diagrams, the equivalence class can be uniquely specified by the probability given by the map $p$, and so any information beyond this forms the context of the closed diagram.

Next, we define a quotienting map $\sim$ which maps procedures into their equivalence class (exactly as is done to construct quotiented operational probabilistic theories in Ref.~\cite{chiribella2016quantum}). Given a characterization of each procedure as a tuple of equivalence class and context, the quotienting map picks out the first element of this tuple, taking $(\widetilde{T},c_T) \to \widetilde{T}$. \footnote{ This notation is very convenient in practice, but it is worth noting that it has the awkward feature that two operationally equivalent procedures $S$ and $T$ would map to $\widetilde{S}$ and $\widetilde{T}$ (respectively), which constitute two distinct labels for the same equivalence class (since $\widetilde{S}=\widetilde{T}$).  } Diagrammatically, we have
\[\tikzfig{Diagrams/15b_Quot1b}\ =\ \tikzfig{Diagrams/16b_Quot2b}.\]
We prove that it is diagram-preserving in Appendix~\ref{app:quotDP}. 
For processes which are closed diagrams, one can always choose the representative of the equivalence class to be the real number specified by the probability rule. 
Hence, the map \colorbox{black!50!blue!30!}{$\sim:\Op\to\widetilde{\Op}$} can be viewed as a diagram-preserving extension of the probability rule $p$. 
This implies that the quotiented operational theory reproduces the predictions of the operational theory, since
\beq\label{eq:GPTProbs}
\begin{tikzpicture}
	\begin{pgfonlayer}{nodelayer}
		\node [style=point] (0) at (0, -.75) {$\widetilde{P}$};
		\node [style=copoint] (1) at (0, .75) {$\widetilde{E}$};
	\end{pgfonlayer}
	\begin{pgfonlayer}{edgelayer}
		\draw[qWire] (1) to (0);
	\end{pgfonlayer}
\end{tikzpicture}
\ =\
\begin{tikzpicture}
	\begin{pgfonlayer}{nodelayer}
		\node [style=point] (0) at (0, -1) {${P}$};
		\node [style=copoint] (1) at (0, 1) {${E}$};
		\node [style=none] (2) at (-1, 2) {};
		\node [style=none] (3) at (-1, 0) {};
		\node [style=none] (4) at (1, 0) {};
		\node [style=none] (5) at (1, 2) {};
		\node [style=none] (6) at (-1, -0.25) {};
		\node [style=none] (7) at (-1, -2) {};
		\node [style=none] (8) at (1, -2) {};
		\node [style=none] (9) at (1, -0.25) {};
		\node [style=none] (10) at (0.75, -1.75) {\tiny $\sim$};
		\node [style=none] (11) at (0.75, 0.25) {\tiny $\sim$};
	\end{pgfonlayer}
	\begin{pgfonlayer}{edgelayer}
		\filldraw [ fill=black!50!blue!30!,draw=black!40!blue!40] (6.center) to (7.center) to (8.center) to (9.center) to cycle;
				\filldraw [ fill=black!50!blue!30!,draw=black!40!blue!40] (5.center) to (4.center) to (3.center) to (2.center) to cycle;
		\draw[qWire] (1) to (0);
	\end{pgfonlayer}
\end{tikzpicture}
\ =\
\begin{tikzpicture}
	\begin{pgfonlayer}{nodelayer}
		\node [style=point] (0) at (0, -.75) {${P}$};
		\node [style=copoint] (1) at (0, .75) {${E}$};
		\node [style=none] (2) at (-1, 1.75) {};
		\node [style=none] (3) at (-1, -1.75) {};
		\node [style=none] (4) at (1, -1.75) {};
		\node [style=none] (5) at (1, 1.75) {};
		\node [style=none] (11) at (0.75, -1.5) {\tiny $\sim$};
	\end{pgfonlayer}
	\begin{pgfonlayer}{edgelayer}
					\filldraw [ fill=black!50!blue!30!,draw=black!40!blue!40] (5.center) to (4.center) to (3.center) to (2.center) to cycle;
		\draw[qWire] (1) to (0);
	\end{pgfonlayer}
\end{tikzpicture}
\ =\
\begin{tikzpicture}
	\begin{pgfonlayer}{nodelayer}
		\node [style=point] (0) at (0, -.75) {${P}$};
		\node [style=copoint] (1) at (0, .75) {${E}$};
		\node [style=none] (2) at (-1, 1.75) {};
		\node [style=none] (3) at (-1, -1.75) {};
		\node [style=none] (4) at (1, -1.75) {};
		\node [style=none] (5) at (1, 1.75) {};
		\node [style=none] (11) at (0.75, -1.5) {\tiny $p$};
	\end{pgfonlayer}
	\begin{pgfonlayer}{edgelayer}
					\filldraw [ fill=blue!20,draw=blue!40] (5.center) to (4.center) to (3.center) to (2.center) to cycle;
		\draw[qWire] (1) to (0);
	\end{pgfonlayer}
\end{tikzpicture}
\ =\
\Pr(E,P).
\eeq
 It is worth noting that in the quotiented operational theory, a closed diagram is equal to a real number (the probability associated to it), while in the operational theory these are not equal until the map $p$ is applied to the closed diagram.

We will assume that every deterministic effect for a given system, $A$,  in the operational theory (corresponding to implementing a measurement on the system and marginalizing over its outcome) is operationally equivalent. We denote these deterministic effects as:
\beq
\begin{tikzpicture}
	\begin{pgfonlayer}{nodelayer}
		\node [style=none] (0) at (0.5, -0.75) {};
		\node [style=none] (1) at (0.5, 0.25) {};
		\node [style=upground] (2) at (0.5, 0.5) {};
		\node [style=none] (3) at (1, 0.75) {$c$};
		\node [style=right label] (4) at (0.5, -0.5) {$A$};
	\end{pgfonlayer}
	\begin{pgfonlayer}{edgelayer}
		\draw [qWire] (1.center) to (0.center);
	\end{pgfonlayer}
\end{tikzpicture}
\eeq
where $c$ labels the context. 

\subsubsection{The GPT associated to an operational theory} \label{Edefn}

It is well known \cite{chiribella2010probabilistic,
chiribella2011informational,
hardy2011reformulating} that a quotiented operational theory, $\widetilde{\Op}$, is nothing but a generalized probabilistic theory~\cite{hardy2001quantum,barrett2007}, and in fact for this paper we view this as the definition of a GPT.  We will now demonstrate this by showing that $\widetilde{\Op}$ is tomographic (a notion that will be defined momentarily),
 is representable in  real vector spaces,  is convex, and has a unique deterministic effect. This is analogous to how quotiented OPTs arise from unquotiented OPTs in \cite{chiribella2010probabilistic,
chiribella2011informational}.

Firstly, note that the quotiented operational theory is \emph{tomographic}. For a generic process theory, $\mathcal{P}$, being tomographic means that processes are characterized by scalars. That is, given any two distinct processes $f,g:A \to B \in \mathcal{P}$,
\beq
\begin{tikzpicture}
	\begin{pgfonlayer}{nodelayer}
		\node [style={small box}] (0) at (-3.5, 0) {$f$};
		\node [style=none] (1) at (-3.5, -1.25) {};
		\node [style=none] (2) at (-3.5, 1.25) {};
	\end{pgfonlayer}
	\begin{pgfonlayer}{edgelayer}
		\draw [style=qWire] (0) to (2.center);
		\draw [style=qWire] (0) to (1.center);
	\end{pgfonlayer}
\end{tikzpicture}
\ \neq\
\begin{tikzpicture}
	\begin{pgfonlayer}{nodelayer}
		\node [style=none] (0) at (-1.5, 1.25) {};
		\node [style=none] (1) at (-1.5, -1.25) {};
		\node [style={small box}] (2) at (-1.5, 0) {$g$};
	\end{pgfonlayer}
	\begin{pgfonlayer}{edgelayer}
		\draw [style=qWire] (2) to (0.center);
		\draw [style=qWire] (2) to (1.center);
	\end{pgfonlayer}
\end{tikzpicture}, \label{TLLHS}
\eeq
there must exist a tester $h \in\mathcal{P}$ that turns each of these processes into a closed diagram, i.e., a scalar, such that the scalars are distinct:
\beq \exists h \in\mathcal{P} : \
\begin{tikzpicture}
	\begin{pgfonlayer}{nodelayer}
		\node [style=none] (0) at (1.75, -1.25) {};
		\node [style=none] (1) at (1.75, -0.75) {};
		\node [style=none] (2) at (1.75, 1.25) {};
		\node [style=none] (3) at (2.25, 0.75) {};
		\node [style={small box}] (4) at (2.25, 0) {$f$};
		\node [style=none] (5) at (3, 0.75) {};
		\node [style=none] (6) at (2.25, -0.75) {};
		\node [style=none] (7) at (3.5, -1.25) {};
		\node [style=none] (8) at (3.5, 1.25) {};
		\node [style=none] (9) at (3, -0.75) {};
		\node [style=none] (10) at (1.75, 0.75) {};
		\node [style=none] (11) at (3.25, 0) {$h$};
	\end{pgfonlayer}
	\begin{pgfonlayer}{edgelayer}
		\draw [style=qWire] (4) to (3.center);
		\draw [style=qWire] (4) to (6.center);
		\draw (2.center) to (8.center);
		\draw (8.center) to (7.center);
		\draw (7.center) to (0.center);
		\draw (0.center) to (1.center);
		\draw (1.center) to (9.center);
		\draw (9.center) to (5.center);
		\draw (5.center) to (10.center);
		\draw (10.center) to (2.center);
	\end{pgfonlayer}
\end{tikzpicture}
\ \neq\
\begin{tikzpicture}
	\begin{pgfonlayer}{nodelayer}
		\node [style=none] (0) at (1.75, -1.25) {};
		\node [style=none] (1) at (1.75, -0.75) {};
		\node [style=none] (2) at (1.75, 1.25) {};
		\node [style=none] (3) at (2.25, 0.75) {};
		\node [style={small box}] (4) at (2.25, 0) {$g$};
		\node [style=none] (5) at (3, 0.75) {};
		\node [style=none] (6) at (2.25, -0.75) {};
		\node [style=none] (7) at (3.5, -1.25) {};
		\node [style=none] (8) at (3.5, 1.25) {};
		\node [style=none] (9) at (3, -0.75) {};
		\node [style=none] (10) at (1.75, 0.75) {};
		\node [style=none] (11) at (3.25, 0) {$h$};
	\end{pgfonlayer}
	\begin{pgfonlayer}{edgelayer}
		\draw [style=qWire] (4) to (3.center);
		\draw [style=qWire] (4) to (6.center);
		\draw (2.center) to (8.center);
		\draw (8.center) to (7.center);
		\draw (7.center) to (0.center);
		\draw (0.center) to (1.center);
		\draw (1.center) to (9.center);
		\draw (9.center) to (5.center);
		\draw (5.center) to (10.center);
		\draw (10.center) to (2.center);
	\end{pgfonlayer}
\end{tikzpicture}. \label{TLRHS}
\eeq

That Eq.~\eqref{TLRHS} implies Eq.~\eqref{TLLHS}  for processes in a quotiented operational theory is trivial; we now give the proof that Eq.~\eqref{TLLHS} implies Eq.~\eqref{TLRHS}. Consider two distinct processes, $\widetilde{T}$ and $\widetilde{T}'$,
in the quotiented operational theory---the images under the quotienting map of process $T$ and $T'$ in the operational theory---such that:
\beq
\begin{tikzpicture}
	\begin{pgfonlayer}{nodelayer}
		\node [style={small box}] (0) at (-4, -0) {$\widetilde{T}$};
		\node [style=none] (1) at (-4, -1.25) {};
		\node [style=none] (2) at (-4, 1.25) {};
		\node [style={small box}] (3) at (-1, -0) {$\widetilde{T}'$};
		\node [style=none] (4) at (-1, -1.25) {};
		\node [style=none] (5) at (-1, 1.25) {};
		\node [style=none] (6) at (-2.5, 0) {$\neq$};
		\node [style={right label}] (7) at (-4, -1.25) {$A$};
		\node [style={right label}] (8) at (-1, -1.25) {$A$};
		\node [style={right label}] (9) at (-4, 0.9999999) {$B$};
		\node [style={right label}] (10) at (-1, 0.9999999) {$B$};
	\end{pgfonlayer}
	\begin{pgfonlayer}{edgelayer}
		\draw [style=qWire] (0) to (2.center);
		\draw [style=qWire] (0) to (1.center);
		\draw [style=qWire] (3) to (5.center);
		\draw [style=qWire] (3) to (4.center);
	\end{pgfonlayer}
\end{tikzpicture}.
\eeq
By definition, we know that $\widetilde{T} \neq \widetilde{T}'$ implies that $T\not\simeq T'$, and hence there exists some tester $\tau$ such that
\beq
\begin{tikzpicture}
	\begin{pgfonlayer}{nodelayer}
		\node [style=small box] (0) at (2.25, 0) {$T$};
		\node [style=none] (1) at (2.25, 0.75) {};
		\node [style=none] (2) at (2.25, -0.75) {};
		\node [style=none] (3) at (1.75, 0.75) {};
		\node [style=none] (4) at (1.75, 1.25) {};
		\node [style=none] (5) at (3.5, 1.25) {};
		\node [style=none] (6) at (3.5, -1.25) {};
		\node [style=none] (7) at (1.75, -1.25) {};
		\node [style=none] (8) at (1.75, -0.75) {};
		\node [style=none] (9) at (3, 0.75) {};
		\node [style=none] (10) at (3, -0.75) {};
		\node [style=none] (11) at (3.25, 0) {$\tau$};
		\node [style=none] (12) at (1, 2) {};
		\node [style=none] (13) at (4.25, 2) {};
		\node [style=none] (14) at (4.25, -2) {};
		\node [style=none] (15) at (1, -2) {};
		\node [style=none] (16) at (4, -1.75) {\tiny $p$};
		\node [style=none] (17) at (6.5, 0.75) {};
		\node [style=none] (18) at (9, 2) {};
		\node [style=none] (19) at (7, 0.75) {};
		\node [style=none] (20) at (8.25, 1.25) {};
		\node [style=none] (21) at (7.75, 0.75) {};
		\node [style=small box] (22) at (7, 0) {$T'$};
		\node [style=none] (23) at (6.5, -0.75) {};
		\node [style=none] (24) at (9, -2) {};
		\node [style=none] (25) at (7, -0.75) {};
		\node [style=none] (26) at (8.25, -1.25) {};
		\node [style=none] (27) at (6.5, 1.25) {};
		\node [style=none] (28) at (5.75, -2) {};
		\node [style=none] (29) at (6.5, -1.25) {};
		\node [style=none] (30) at (7.75, -0.75) {};
		\node [style=none] (31) at (8, 0) {$\tau$};
		\node [style=none] (32) at (5.75, 2) {};
		\node [style=none] (33) at (8.75, -1.75) {\tiny $p$};
		\node [style=none] (34) at (5, 0) {$\neq$};
	\end{pgfonlayer}
	\begin{pgfonlayer}{edgelayer}
		\filldraw[fill=blue!20,draw=blue!40]  (32.center) to (28.center) to (24.center) to (18.center) to cycle;
	\filldraw[fill=blue!20,draw=blue!40] (12.center) to (15.center) to (14.center) to (13.center) to cycle;
	\filldraw[fill=white,draw=black] (7.center) to (6.center) to (5.center) to (4.center) to (3.center) to (9.center) to (10.center) to (8.center) to cycle;	
\filldraw[fill=white,draw=black] (29.center) to (26.center) to (20.center) to (27.center) to (17.center) to (21.center) to (30.center) to (23.center) to cycle;
		\draw [style=qWire] (0) to (1.center);
		\draw [style=qWire] (0) to (2.center);
		\draw [style=qWire] (22) to (19.center);
		\draw [style=qWire] (22) to (25.center);
	\end{pgfonlayer}
\end{tikzpicture}.
\eeq
Since the action of $p$ is identical to that of $\sim$ on closed diagrams, this implies that
\beq
\begin{tikzpicture}
	\begin{pgfonlayer}{nodelayer}
		\node [style=small box] (0) at (2.25, 0) {$T$};
		\node [style=none] (1) at (2.25, 0.75) {};
		\node [style=none] (2) at (2.25, -0.75) {};
		\node [style=none] (3) at (1.75, 0.75) {};
		\node [style=none] (4) at (1.75, 1.25) {};
		\node [style=none] (5) at (3.5, 1.25) {};
		\node [style=none] (6) at (3.5, -1.25) {};
		\node [style=none] (7) at (1.75, -1.25) {};
		\node [style=none] (8) at (1.75, -0.75) {};
		\node [style=none] (9) at (3, 0.75) {};
		\node [style=none] (10) at (3, -0.75) {};
		\node [style=none] (11) at (3.25, 0) {$\tau$};
		\node [style=none] (12) at (1, 2) {};
		\node [style=none] (13) at (4.25, 2) {};
		\node [style=none] (14) at (4.25, -2) {};
		\node [style=none] (15) at (1, -2) {};
		\node [style=none] (16) at (4, -1.75) {\tiny $\sim$};
		\node [style=none] (17) at (6.5, 0.75) {};
		\node [style=none] (18) at (9, 2) {};
		\node [style=none] (19) at (7, 0.75) {};
		\node [style=none] (20) at (8.25, 1.25) {};
		\node [style=none] (21) at (7.75, 0.75) {};
		\node [style=small box] (22) at (7, 0) {$T'$};
		\node [style=none] (23) at (6.5, -0.75) {};
		\node [style=none] (24) at (9, -2) {};
		\node [style=none] (25) at (7, -0.75) {};
		\node [style=none] (26) at (8.25, -1.25) {};
		\node [style=none] (27) at (6.5, 1.25) {};
		\node [style=none] (28) at (5.75, -2) {};
		\node [style=none] (29) at (6.5, -1.25) {};
		\node [style=none] (30) at (7.75, -0.75) {};
		\node [style=none] (31) at (8, 0) {$\tau$};
		\node [style=none] (32) at (5.75, 2) {};
		\node [style=none] (33) at (8.75, -1.75) {\tiny $\sim$};
		\node [style=none] (34) at (5, 0) {$\neq$};
	\end{pgfonlayer}
	\begin{pgfonlayer}{edgelayer}
		\filldraw[ fill=black!50!blue!30!,draw=black!40!blue!40]  (32.center) to (28.center) to (24.center) to (18.center) to cycle;
	\filldraw[ fill=black!50!blue!30!,draw=black!40!blue!40] (12.center) to (15.center) to (14.center) to (13.center) to cycle;
	\filldraw[fill=white,draw=black] (7.center) to (6.center) to (5.center) to (4.center) to (3.center) to (9.center) to (10.center) to (8.center) to cycle;	
\filldraw[fill=white,draw=black] (29.center) to (26.center) to (20.center) to (27.center) to (17.center) to (21.center) to (30.center) to (23.center) to cycle;
		\draw [style=qWire] (0) to (1.center);
		\draw [style=qWire] (0) to (2.center);
		\draw [style=qWire] (22) to (19.center);
		\draw [style=qWire] (22) to (25.center);
	\end{pgfonlayer}
\end{tikzpicture}.
\eeq
Finally, we can use the fact that the quotienting map is diagram-preserving to write that there exists $\widetilde{\tau}$ such that
\beq \label{eq:tomog}
\begin{tikzpicture}
	\begin{pgfonlayer}{nodelayer}
		\node [style=small box] (0) at (2.25, 0) {$\widetilde{T}$};
		\node [style=none] (1) at (2.25, 0.75) {};
		\node [style=none] (2) at (2.25, -0.75) {};
		\node [style=none] (3) at (1.75, 0.75) {};
		\node [style=none] (4) at (1.75, 1.25) {};
		\node [style=none] (5) at (3.5, 1.25) {};
		\node [style=none] (6) at (3.5, -1.25) {};
		\node [style=none] (7) at (1.75, -1.25) {};
		\node [style=none] (8) at (1.75, -0.75) {};
		\node [style=none] (9) at (3, 0.75) {};
		\node [style=none] (10) at (3, -0.75) {};
		\node [style=none] (11) at (3.25, 0) {$\widetilde{\tau}$};
		\node [style=none] (17) at (6.5, 0.75) {};
		\node [style=none] (19) at (7, 0.75) {};
		\node [style=none] (20) at (8.25, 1.25) {};
		\node [style=none] (21) at (7.75, 0.75) {};
		\node [style=small box] (22) at (7, 0) {$\widetilde{T}'$};
		\node [style=none] (23) at (6.5, -0.75) {};
		\node [style=none] (25) at (7, -0.75) {};
		\node [style=none] (26) at (8.25, -1.25) {};
		\node [style=none] (27) at (6.5, 1.25) {};
		\node [style=none] (29) at (6.5, -1.25) {};
		\node [style=none] (30) at (7.75, -0.75) {};
		\node [style=none] (31) at (8, 0) {$\widetilde{\tau}$};
		\node [style=none] (34) at (5, 0) {$\neq$};
	\end{pgfonlayer}
	\begin{pgfonlayer}{edgelayer}
	\filldraw[fill=white,draw=black] (7.center) to (6.center) to (5.center) to (4.center) to (3.center) to (9.center) to (10.center) to (8.center) to cycle;	
\filldraw[fill=white,draw=black] (29.center) to (26.center) to (20.center) to (27.center) to (17.center) to (21.center) to (30.center) to (23.center) to cycle;
		\draw [style=qWire] (0) to (1.center);
		\draw [style=qWire] (0) to (2.center);
		\draw [style=qWire] (22) to (19.center);
		\draw [style=qWire] (22) to (25.center);
	\end{pgfonlayer}
\end{tikzpicture}.
\eeq
This establishes that Eq.~\eqref{TLLHS} implies Eq.~\eqref{TLRHS},
and so the quotiented operational theory is tomographic.

This means that we can identify an operational equivalence class of processes, $\widetilde{T}$, with a real vector, $\mathbf{K}_{\widetilde{T}}$, living in $\mathds{R}^{\mathcal{T}^{A\to B}}$, where $\mathcal{T}^{A\to B}$ denotes the set of testers for processes with input $A$ and output $B$. Concretely, we define these vectors component-wise via
\beq
\left[\mathbf{K}_{\widetilde{T}}\right]_\tau : = \begin{tikzpicture}
	\begin{pgfonlayer}{nodelayer}
		\node [style=small box] (0) at (0, 0) {$\widetilde{T}$};
		\node [style=none] (1) at (0, 0.75) {};
		\node [style=none] (2) at (0, -0.75) {};
		\node [style=none] (3) at (-0.5, 0.75) {};
		\node [style=none] (4) at (-0.5, 1.25) {};
		\node [style=none] (5) at (1.25, 1.25) {};
		\node [style=none] (6) at (1.25, -1.25) {};
		\node [style=none] (7) at (-0.5, -1.25) {};
		\node [style=none] (8) at (-0.5, -0.75) {};
		\node [style=none] (9) at (0.75, 0.75) {};
		\node [style=none] (10) at (0.75, -0.75) {};
		\node [style=none] (11) at (1, 0) {$\widetilde{\tau}$};
	\end{pgfonlayer}
	\begin{pgfonlayer}{edgelayer}
		\draw [fill=white, draw=black] (7.center)
			 to (6.center)
			 to (5.center)
			 to (4.center)
			 to (3.center)
			 to (9.center)
			 to (10.center)
			 to (8.center)
			 to cycle;
		\draw [style=qWire] (0) to (1.center);
		\draw [style=qWire] (0) to (2.center);
	\end{pgfonlayer}
\end{tikzpicture}.
\eeq
Clearly, following on from the discussion around Eq.~\eqref{eq:tomog}, we have that $\mathbf{K}_{\widetilde{T}} = \mathbf{K}_{\widetilde{T'}}$ if and only if $\widetilde{T} = \widetilde{T'}$.
 This vector space representation, however, is generally infinite dimensional, and gives a highly inefficient characterisation of processes. We can instead focus on some minimal subset of \emph{fiducial} testers $\mathcal{F}^{A\to B} \subset \mathcal{T}^{A\to B}$ which, for notational convenience, we index as $\mathcal{F}^{A\to B} := \{\widetilde{\tau}_\alpha\}_{\alpha = 1,2,..., m^{{A}\to {B}}}$.

 The term \emph{fiducial} means that this subset of testers satisfies two key properties. The first is that they must also suffice  for tomography, i.e.,  
 \beq
\begin{tikzpicture}
	\begin{pgfonlayer}{nodelayer}
		\node [style=small box] (0) at (5.5, 0) {$\widetilde{T}$};
		\node [style=none] (1) at (5.5, 0.75) {};
		\node [style=none] (2) at (5.5, -0.75) {};
		\node [style=none] (3) at (5, 0.75) {};
		\node [style=none] (4) at (5, 1.25) {};
		\node [style=none] (5) at (7.25, 1.25) {};
		\node [style=none] (6) at (7.25, -1.25) {};
		\node [style=none] (7) at (5, -1.25) {};
		\node [style=none] (8) at (5, -0.75) {};
		\node [style=none] (9) at (6.25, 0.75) {};
		\node [style=none] (10) at (6.25, -0.75) {};
		\node [style=none] (11) at (6.75, 0) {$\widetilde{\tau}_\alpha$};
		\node [style=none] (17) at (8.75, 0.75) {};
		\node [style=none] (19) at (9.25, 0.75) {};
		\node [style=none] (20) at (11, 1.25) {};
		\node [style=none] (21) at (10, 0.75) {};
		\node [style=small box] (22) at (9.25, 0) {$\widetilde{T}'$};
		\node [style=none] (23) at (8.75, -0.75) {};
		\node [style=none] (25) at (9.25, -0.75) {};
		\node [style=none] (26) at (11, -1.25) {};
		\node [style=none] (27) at (8.75, 1.25) {};
		\node [style=none] (29) at (8.75, -1.25) {};
		\node [style=none] (30) at (10, -0.75) {};
		\node [style=none] (31) at (10.5, 0) {$\widetilde{\tau}_\alpha$};
		\node [style=none] (34) at (8, 0) {$\neq$};
		\node [style=label] (35) at (2, 0) {$\exists \alpha \in \{1,...,m^{A\to B}\}$  : \ };
		\node [style=none] (36) at (-1.75, 0) {$\iff$};
		\node [style=small box] (37) at (-5.25, 0) {$\widetilde{T}$};
		\node [style=none] (38) at (-5.25, -1.25) {};
		\node [style=none] (39) at (-5.25, 1.25) {};
		\node [style=small box] (40) at (-3.25, 0) {$\widetilde{T}'$};
		\node [style=none] (41) at (-3.25, -1.25) {};
		\node [style=none] (42) at (-3.25, 1.25) {};
		\node [style=none] (43) at (-4.25, 0) {$\neq$};
		\node [style=right label] (44) at (-5.25, -1.25) {$A$};
		\node [style=right label] (45) at (-3.25, -1.25) {$A$};
		\node [style=right label] (46) at (-5.25, 1.25) {$B$};
		\node [style=right label] (47) at (-3.25, 1.25) {$B$};
	\end{pgfonlayer}
	\begin{pgfonlayer}{edgelayer}
		\draw [style=qWire] (0) to (1.center);
		\draw [style=qWire] (0) to (2.center);
		\draw [style=qWire] (22) to (19.center);
		\draw [style=qWire] (22) to (25.center);
		\draw [style=qWire] (37) to (39.center);
		\draw [style=qWire] (37) to (38.center);
		\draw [style=qWire] (40) to (42.center);
		\draw [style=qWire] (40) to (41.center);
		\draw (4.center) to (3.center);
		\draw (3.center) to (9.center);
		\draw (9.center) to (10.center);
		\draw (10.center) to (8.center);
		\draw (8.center) to (7.center);
		\draw (7.center) to (6.center);
		\draw (6.center) to (5.center);
		\draw (5.center) to (4.center);
		\draw (27.center) to (17.center);
		\draw (17.center) to (21.center);
		\draw (21.center) to (30.center);
		\draw (30.center) to (23.center);
		\draw (23.center) to (29.center);
		\draw (29.center) to (26.center);
		\draw (26.center) to (20.center);
		\draw (20.center) to (27.center);
	\end{pgfonlayer}
\end{tikzpicture}
.
\eeq 
  We can therefore use these fiducial testers to define a finite dimensional vector representation $\mathbf{R}_{\widetilde{T}}$ of a process $\widetilde{T}$, defined componentwise via
 \beq
[\mathbf{R}_{\widetilde{T}}]_\alpha :=
\begin{tikzpicture}
	\begin{pgfonlayer}{nodelayer}
		\node [style=small box] (0) at (0, 0) {$\widetilde{T}$};
		\node [style=none] (1) at (0, 0.75) {};
		\node [style=none] (2) at (0, -0.75) {};
		\node [style=none] (3) at (-0.5, 0.75) {};
		\node [style=none] (4) at (-0.5, 1.25) {};
		\node [style=none] (5) at (1.75, 1.25) {};
		\node [style=none] (6) at (1.75, -1.25) {};
		\node [style=none] (7) at (-0.5, -1.25) {};
		\node [style=none] (8) at (-0.5, -0.75) {};
		\node [style=none] (9) at (0.75, 0.75) {};
		\node [style=none] (10) at (0.75, -0.75) {};
		\node [style=none] (11) at (1.25, 0) {$\widetilde{\tau}_\alpha$};
	\end{pgfonlayer}
	\begin{pgfonlayer}{edgelayer}
		\draw [style=qWire] (0) to (1.center);
		\draw [style=qWire] (0) to (2.center);
		\draw (4.center) to (3.center);
		\draw (3.center) to (9.center);
		\draw (9.center) to (10.center);
		\draw (10.center) to (8.center);
		\draw (8.center) to (7.center);
		\draw (7.center) to (6.center);
		\draw (6.center) to (5.center);
		\draw (5.center) to (4.center);
	\end{pgfonlayer}
\end{tikzpicture} \label{defmatrep}
\eeq
for $\alpha = 1,2,...,m^{{A}\to{B}}$.
This new representation, $\mathbf{R}_{\widetilde{T}}$, has a straightforward relation to the original vector representation, $\mathbf{K}_{\widetilde{T}}$; all one must do to go from the original to the new representation is to restrict the $\mathbf{K}_{\widetilde{T}}$ vectors to the relevant subset of their components. We can think of this as a linear \emph{restriction} map
$\big|_{\mathcal{F}^{A\to B}}:\mathds{R}^{\mathcal{T}^{A\to B}}\to \mathds{R}^{\mathcal{F}^{A\to B}}:: \mathbf{K} \mapsto \mathbf{K}|_{\mathcal{F}^{A\to B}}$. This allows us now to relate these two representations via the observation that $\mathbf{K}_{\widetilde{T}}|_{\mathcal{F}^{A\to B}} = \mathbf{R}_{\widetilde{T}}$ for all processes $\widetilde{T}:A\to B$.
 
The second key property of fiducial sets of testers is that they define a \emph{linear compression} of the $\mathbf{K}_{\widetilde{T}}$ vectors.  
  Formally, what we mean by this is that there is a linear map $E^{A\to B} : \mathds{R}^{\mathcal{F}^{A\to B}} \to \mathds{R}^{\mathcal{T}^{A\to B}}$ which is the inverse to the restriction map $\big|_{\mathcal{F}^{A\to B}}$ on vectors $\mathbf{K}_{\widetilde{T}}$, that is, for all $\widetilde{T}$ we have that $E^{A\to B}(\mathbf{R}_{\widetilde{T}})=E^{A\to B}(\mathbf{K}_{\widetilde{T}}|_{\mathcal{F}^{A\to B}})=\mathbf{K}_{\widetilde{T}}$. 
  
  We reiterate that $\mathcal{F}^{A\to B}$ is taken to be a \emph{minimal} fiducial set of testers, which means that it is a minimal cardinality set of testers satisfying these two properties. Note that  minimal fiducial sets are typically not unique.

Consider now how the sequential composition of processes is represented. Given representations of a pair of processes $(\mathbf{R}_{\widetilde{T}},\mathbf{R}_{\widetilde{T}'})$ we know (as $\mathbf{R}$ is injective) that we can determine $\widetilde{T}$ and $\widetilde{T}'$, compute their composition $\widetilde{T}'\circ \widetilde{T}$, and via Eq.~\eqref{defmatrep} obtain $\mathbf{R}_{\widetilde{T}'\circ\widetilde{T}}$. We denote the sequential composition map on the vector representation as $\mathbf{R}_{\widetilde{T}'} \smallsquare \mathbf{R}_{\widetilde{T}} := \mathbf{R}_{\widetilde{T}'\circ \widetilde{T}}$. Similarly, for parallel composition we can define $\mathbf{R}_{\widetilde{T}'}\smallboxtimes\mathbf{R}_{\widetilde{T}} := \mathbf{R}_{\widetilde{T}'\otimes \widetilde{T}}$.
As we demonstrate in Appendix~\ref{sec:bilinearity}, both $\smallsquare$ and $\smallboxtimes$  can be uniquely extended to  bilinear maps on the relevant vector spaces. Specifically, we have:
\begin{lemma} \label{bilinearityprf}
The operation $\smallsquare$  can be uniquely extended to  a bilinear map
\beq
\smallsquare:\left(\mathds{R}^{m^{B\to C}},\mathds{R}^{m^{A\to B}}\right)\to \mathds{R}^{m^{A\to C}},
\eeq
and the operation $\smallboxtimes$  can be uniquely extended to  a bilinear map
\beq
\smallboxtimes : \left(\mathds{R}^{m^{A\to B}}, \mathds{R}^{m^{C\to D}}\right)\to \mathds{R}^{m^{{AC}\to{BD}}}.
\eeq
\end{lemma}
\noindent This implies that transformations act linearly on the state space, and also that the summation operation distributes over diagrams, i.e.:
\beq \label{sumdistributes}
\begin{tikzpicture}
	\begin{pgfonlayer}{nodelayer}
		\node [style=box] (0) at (-1, 0.75) {$\mathbf{R}_{\widetilde{T}}$};
		\node [style=none] (1) at (-1, 1.75) {};
		\node [style=box] (2) at (-1, -0.75) {$\sum_i r_i \mathbf{R}_{\widetilde{T}_i}$};
		\node [style=none] (3) at (-1, -1.75) {};
		\node [style=none] (4) at (1.75, -1.75) {};
		\node [style=none] (5) at (1.75, 1.75) {};
		\node [style=box] (6) at (1.75, -0.75) {$\mathbf{R}_{\widetilde{T}'}$};
	\end{pgfonlayer}
	\begin{pgfonlayer}{edgelayer}
		\draw [style=qWire] (2) to (0);
		\draw [style=qWire] (0) to (1.center);
		\draw [style=qWire] (3.center) to (2);
		\draw [style=qWire] (4.center) to (6);
		\draw [style=qWire] (6) to (5.center);
	\end{pgfonlayer}
\end{tikzpicture}
\ \ = \sum_i r_i \ \
\begin{tikzpicture}
	\begin{pgfonlayer}{nodelayer}
		\node [style=box] (0) at (-1, 0.75) {$\mathbf{R}_{\widetilde{T}}$};
		\node [style=none] (1) at (-1, 1.75) {};
		\node [style=box] (2) at (-1, -0.75) {$\mathbf{R}_{\widetilde{T}_i}$};
		\node [style=none] (3) at (-1, -1.75) {};
		\node [style=none] (4) at (1, -1.75) {};
		\node [style=none] (5) at (1, 1.75) {};
		\node [style=box] (6) at (1, -0.75) {$\mathbf{R}_{\widetilde{T}'}$};
	\end{pgfonlayer}
	\begin{pgfonlayer}{edgelayer}
		\draw [style=qWire] (2) to (0);
		\draw [style=qWire] (0) to (1.center);
		\draw [style=qWire] (3.center) to (2);
		\draw [style=qWire] (4.center) to (6);
		\draw [style=qWire] (6) to (5.center);
	\end{pgfonlayer}
\end{tikzpicture}.
\eeq

It is generally easier to work with this vector representation of processes rather than directly with the abstract process theory of operational equivalence classes of procedures. We will do so
when convenient, abusing notation by dropping the explicit symbol $\mathbf{R}$, and simply denoting the vector representation of the equivalence classes in the same way as the equivalence classes themselves. That is, we will denote $\mathbf{R}_{\widetilde{T}}$ by $\widetilde{T}$.
For example, we will write
 Eq.~\eqref{sumdistributes} as
\beq
\begin{tikzpicture}
	\begin{pgfonlayer}{nodelayer}
		\node [style=box] (0) at (-1, 0.75) {${\widetilde{T}}$};
		\node [style=none] (1) at (-1, 1.75) {};
		\node [style=box] (2) at (-1, -0.75) {$\sum_i r_i {\widetilde{T}_i}$};
		\node [style=none] (3) at (-1, -1.75) {};
		\node [style=none] (4) at (1.25, -1.75) {};
		\node [style=none] (5) at (1.25, 1.75) {};
		\node [style=box] (6) at (1.25, -0.75) {${\widetilde{T}'}$};
	\end{pgfonlayer}
	\begin{pgfonlayer}{edgelayer}
		\draw [style=qWire] (2) to (0);
		\draw [style=qWire] (0) to (1.center);
		\draw [style=qWire] (3.center) to (2);
		\draw [style=qWire] (4.center) to (6);
		\draw [style=qWire] (6) to (5.center);
	\end{pgfonlayer}
\end{tikzpicture}
\ \ = \sum_i r_i \ \
\begin{tikzpicture}
	\begin{pgfonlayer}{nodelayer}
		\node [style=box] (0) at (-1, 0.75) {${\widetilde{T}}$};
		\node [style=none] (1) at (-1, 1.75) {};
		\node [style=box] (2) at (-1, -0.75) {${\widetilde{T}_i}$};
		\node [style=none] (3) at (-1, -1.75) {};
		\node [style=none] (4) at (.5, -1.75) {};
		\node [style=none] (5) at (.5, 1.75) {};
		\node [style=box] (6) at (.5, -0.75) {${\widetilde{T}'}$};
	\end{pgfonlayer}
	\begin{pgfonlayer}{edgelayer}
		\draw [style=qWire] (2) to (0);
		\draw [style=qWire] (0) to (1.center);
		\draw [style=qWire] (3.center) to (2);
		\draw [style=qWire] (4.center) to (6);
		\draw [style=qWire] (6) to (5.center);
	\end{pgfonlayer}
\end{tikzpicture}.
\eeq

Note that generic linear combinations such as $\sum_i r_i \widetilde{T}_i$ need not correspond to any
process in the operational theory.
However, some linear combinations correspond to mixtures and coarse-grainings, and these {\em will} correspond to other processes in the operational theory. Namely,  if $T_1$ is a procedure that is a mixture of $T_2$ and $T_3$ with weights $\omega$ and $1-\omega$, then by Eq.~\eqref{mixprobconstr} it follows that
\beq
\begin{tikzpicture}
	\begin{pgfonlayer}{nodelayer}
		\node [style=small box] (0) at (2.25, 0) {$T_1$};
		\node [style=none] (1) at (2.25, 0.75) {};
		\node [style=none] (2) at (2.25, -0.75) {};
		\node [style=none] (3) at (1.75, 0.75) {};
		\node [style=none] (4) at (1.75, 1.25) {};
		\node [style=none] (5) at (3.5, 1.25) {};
		\node [style=none] (6) at (3.5, -1.25) {};
		\node [style=none] (7) at (1.75, -1.25) {};
		\node [style=none] (8) at (1.75, -0.75) {};
		\node [style=none] (9) at (3, 0.75) {};
		\node [style=none] (10) at (3, -0.75) {};
		\node [style=none] (11) at (3.25, 0) {$\tau$};
		\node [style=none] (12) at (1, 2) {};
		\node [style=none] (13) at (4.25, 2) {};
		\node [style=none] (14) at (4.25, -2) {};
		\node [style=none] (15) at (1, -2) {};
		\node [style=none] (16) at (4, -1.75) {\tiny $p$};
	\end{pgfonlayer}
	\begin{pgfonlayer}{edgelayer}
		\filldraw[fill=blue!20,draw=blue!40] (12.center) to (15.center) to (14.center) to (13.center) to cycle;
	\filldraw[fill=white,draw=black] (7.center) to (6.center) to (5.center) to (4.center) to (3.center) to (9.center) to (10.center) to (8.center) to cycle;	
		\draw [style=qWire] (0) to (1.center);
		\draw [style=qWire] (0) to (2.center);
	\end{pgfonlayer}
\end{tikzpicture}
\ =\omega\
\begin{tikzpicture}
	\begin{pgfonlayer}{nodelayer}
		\node [style=small box] (0) at (2.25, 0) {$T_2$};
		\node [style=none] (1) at (2.25, 0.75) {};
		\node [style=none] (2) at (2.25, -0.75) {};
		\node [style=none] (3) at (1.75, 0.75) {};
		\node [style=none] (4) at (1.75, 1.25) {};
		\node [style=none] (5) at (3.5, 1.25) {};
		\node [style=none] (6) at (3.5, -1.25) {};
		\node [style=none] (7) at (1.75, -1.25) {};
		\node [style=none] (8) at (1.75, -0.75) {};
		\node [style=none] (9) at (3, 0.75) {};
		\node [style=none] (10) at (3, -0.75) {};
		\node [style=none] (11) at (3.25, 0) {$\tau$};
		\node [style=none] (12) at (1, 2) {};
		\node [style=none] (13) at (4.25, 2) {};
		\node [style=none] (14) at (4.25, -2) {};
		\node [style=none] (15) at (1, -2) {};
		\node [style=none] (16) at (4, -1.75) {\tiny $p$};
	\end{pgfonlayer}
	\begin{pgfonlayer}{edgelayer}
		\filldraw[fill=blue!20,draw=blue!40] (12.center) to (15.center) to (14.center) to (13.center) to cycle;
	\filldraw[fill=white,draw=black] (7.center) to (6.center) to (5.center) to (4.center) to (3.center) to (9.center) to (10.center) to (8.center) to cycle;	
		\draw [style=qWire] (0) to (1.center);
		\draw [style=qWire] (0) to (2.center);
	\end{pgfonlayer}
\end{tikzpicture}
\ +(1-\omega)\
\begin{tikzpicture}
	\begin{pgfonlayer}{nodelayer}
		\node [style=small box] (0) at (2.25, 0) {$T_3$};
		\node [style=none] (1) at (2.25, 0.75) {};
		\node [style=none] (2) at (2.25, -0.75) {};
		\node [style=none] (3) at (1.75, 0.75) {};
		\node [style=none] (4) at (1.75, 1.25) {};
		\node [style=none] (5) at (3.5, 1.25) {};
		\node [style=none] (6) at (3.5, -1.25) {};
		\node [style=none] (7) at (1.75, -1.25) {};
		\node [style=none] (8) at (1.75, -0.75) {};
		\node [style=none] (9) at (3, 0.75) {};
		\node [style=none] (10) at (3, -0.75) {};
		\node [style=none] (11) at (3.25, 0) {$\tau$};
		\node [style=none] (12) at (1, 2) {};
		\node [style=none] (13) at (4.25, 2) {};
		\node [style=none] (14) at (4.25, -2) {};
		\node [style=none] (15) at (1, -2) {};
		\node [style=none] (16) at (4, -1.75) {\tiny $p$};
	\end{pgfonlayer}
	\begin{pgfonlayer}{edgelayer}
		\filldraw[fill=blue!20,draw=blue!40] (12.center) to (15.center) to (14.center) to (13.center) to cycle;
	\filldraw[fill=white,draw=black] (7.center) to (6.center) to (5.center) to (4.center) to (3.center) to (9.center) to (10.center) to (8.center) to cycle;	
		\draw [style=qWire] (0) to (1.center);
		\draw [style=qWire] (0) to (2.center);
	\end{pgfonlayer}
\end{tikzpicture}
\eeq
for all $\tau$, which in turn implies that
\beq
\begin{tikzpicture}
	\begin{pgfonlayer}{nodelayer}
		\node [style=small box] (0) at (0, 0) {$\widetilde{T}_1$};
		\node [style=none] (1) at (0, 0.75) {};
		\node [style=none] (2) at (0, -0.75) {};
		\node [style=none] (3) at (-0.5, 0.75) {};
		\node [style=none] (4) at (-0.5, 1.25) {};
		\node [style=none] (5) at (1.25, 1.25) {};
		\node [style=none] (6) at (1.25, -1.25) {};
		\node [style=none] (7) at (-0.5, -1.25) {};
		\node [style=none] (8) at (-0.5, -0.75) {};
		\node [style=none] (9) at (0.75, 0.75) {};
		\node [style=none] (10) at (0.75, -0.75) {};
		\node [style=none] (11) at (1, 0) {$\widetilde{\tau}$};
	\end{pgfonlayer}
	\begin{pgfonlayer}{edgelayer}
		\draw [style=qWire] (0) to (1.center);
		\draw [style=qWire] (0) to (2.center);
		\draw (4.center) to (3.center);
		\draw (3.center) to (9.center);
		\draw (9.center) to (10.center);
		\draw (10.center) to (8.center);
		\draw (8.center) to (7.center);
		\draw (7.center) to (6.center);
		\draw (6.center) to (5.center);
		\draw (5.center) to (4.center);
	\end{pgfonlayer}
\end{tikzpicture}
\ =\omega\
\begin{tikzpicture}
	\begin{pgfonlayer}{nodelayer}
		\node [style=small box] (0) at (0, 0) {$\widetilde{T}_2$};
		\node [style=none] (1) at (0, 0.75) {};
		\node [style=none] (2) at (0, -0.75) {};
		\node [style=none] (3) at (-0.5, 0.75) {};
		\node [style=none] (4) at (-0.5, 1.25) {};
		\node [style=none] (5) at (1.25, 1.25) {};
		\node [style=none] (6) at (1.25, -1.25) {};
		\node [style=none] (7) at (-0.5, -1.25) {};
		\node [style=none] (8) at (-0.5, -0.75) {};
		\node [style=none] (9) at (0.75, 0.75) {};
		\node [style=none] (10) at (0.75, -0.75) {};
		\node [style=none] (11) at (1, 0) {$\widetilde{\tau}$};
	\end{pgfonlayer}
	\begin{pgfonlayer}{edgelayer}
		\draw [style=qWire] (0) to (1.center);
		\draw [style=qWire] (0) to (2.center);
		\draw (4.center) to (3.center);
		\draw (3.center) to (9.center);
		\draw (9.center) to (10.center);
		\draw (10.center) to (8.center);
		\draw (8.center) to (7.center);
		\draw (7.center) to (6.center);
		\draw (6.center) to (5.center);
		\draw (5.center) to (4.center);
	\end{pgfonlayer}
\end{tikzpicture}
\ +(1-\omega)\
\begin{tikzpicture}
	\begin{pgfonlayer}{nodelayer}
		\node [style=small box] (0) at (0, 0) {$\widetilde{T}_3$};
		\node [style=none] (1) at (0, 0.75) {};
		\node [style=none] (2) at (0, -0.75) {};
		\node [style=none] (3) at (-0.5, 0.75) {};
		\node [style=none] (4) at (-0.5, 1.25) {};
		\node [style=none] (5) at (1.25, 1.25) {};
		\node [style=none] (6) at (1.25, -1.25) {};
		\node [style=none] (7) at (-0.5, -1.25) {};
		\node [style=none] (8) at (-0.5, -0.75) {};
		\node [style=none] (9) at (0.75, 0.75) {};
		\node [style=none] (10) at (0.75, -0.75) {};
		\node [style=none] (11) at (1, 0) {$\widetilde{\tau}$};
	\end{pgfonlayer}
	\begin{pgfonlayer}{edgelayer}
		\draw [style=qWire] (0) to (1.center);
		\draw [style=qWire] (0) to (2.center);
		\draw (4.center) to (3.center);
		\draw (3.center) to (9.center);
		\draw (9.center) to (10.center);
		\draw (10.center) to (8.center);
		\draw (8.center) to (7.center);
		\draw (7.center) to (6.center);
		\draw (6.center) to (5.center);
		\draw (5.center) to (4.center);
	\end{pgfonlayer}
\end{tikzpicture}, \label{mixprocrel}
\eeq
and so by the fact that quotiented operational theories are tomographic,
\beq
\widetilde{T}_1 = \omega \widetilde{T}_2 + (1-\omega) \widetilde{T}_3.
\eeq
Hence, the mixing relations between preparation procedures in the operational theory are captured by a convex structure in this representation. More generally we find that 
\beq \label{linearityconst}
\forall \tau \
\begin{tikzpicture}
	\begin{pgfonlayer}{nodelayer}
		\node [style=small box] (0) at (2.25, 0) {$T$};
		\node [style=none] (1) at (2.25, 0.75) {};
		\node [style=none] (2) at (2.25, -0.75) {};
		\node [style=none] (3) at (1.75, 0.75) {};
		\node [style=none] (4) at (1.75, 1.25) {};
		\node [style=none] (5) at (3.5, 1.25) {};
		\node [style=none] (6) at (3.5, -1.25) {};
		\node [style=none] (7) at (1.75, -1.25) {};
		\node [style=none] (8) at (1.75, -0.75) {};
		\node [style=none] (9) at (3, 0.75) {};
		\node [style=none] (10) at (3, -0.75) {};
		\node [style=none] (11) at (3.25, 0) {$\tau$};
		\node [style=none] (12) at (1, 2) {};
		\node [style=none] (13) at (4.25, 2) {};
		\node [style=none] (14) at (4.25, -2) {};
		\node [style=none] (15) at (1, -2) {};
		\node [style=none] (16) at (4, -1.75) {\tiny $p$};
	\end{pgfonlayer}
	\begin{pgfonlayer}{edgelayer}
		\filldraw[fill=blue!20,draw=blue!40] (12.center) to (15.center) to (14.center) to (13.center) to cycle;
	\filldraw[fill=white,draw=black] (7.center) to (6.center) to (5.center) to (4.center) to (3.center) to (9.center) to (10.center) to (8.center) to cycle;	
		\draw [style=qWire] (0) to (1.center);
		\draw [style=qWire] (0) to (2.center);
	\end{pgfonlayer}
\end{tikzpicture}
=\sum_ir_i
\begin{tikzpicture}
	\begin{pgfonlayer}{nodelayer}
		\node [style=small box] (0) at (2.25, 0) {$T_i$};
		\node [style=none] (1) at (2.25, 0.75) {};
		\node [style=none] (2) at (2.25, -0.75) {};
		\node [style=none] (3) at (1.75, 0.75) {};
		\node [style=none] (4) at (1.75, 1.25) {};
		\node [style=none] (5) at (3.5, 1.25) {};
		\node [style=none] (6) at (3.5, -1.25) {};
		\node [style=none] (7) at (1.75, -1.25) {};
		\node [style=none] (8) at (1.75, -0.75) {};
		\node [style=none] (9) at (3, 0.75) {};
		\node [style=none] (10) at (3, -0.75) {};
		\node [style=none] (11) at (3.25, 0) {$\tau$};
		\node [style=none] (12) at (1, 2) {};
		\node [style=none] (13) at (4.25, 2) {};
		\node [style=none] (14) at (4.25, -2) {};
		\node [style=none] (15) at (1, -2) {};
		\node [style=none] (16) at (4, -1.75) {\tiny $p$};
	\end{pgfonlayer}
	\begin{pgfonlayer}{edgelayer}
		\filldraw[fill=blue!20,draw=blue!40] (12.center) to (15.center) to (14.center) to (13.center) to cycle;
	\filldraw[fill=white,draw=black] (7.center) to (6.center) to (5.center) to (4.center) to (3.center) to (9.center) to (10.center) to (8.center) to cycle;	
		\draw [style=qWire] (0) to (1.center);
		\draw [style=qWire] (0) to (2.center);
	\end{pgfonlayer}
\end{tikzpicture}
\iff
\widetilde{T} = \sum_i r_i \widetilde{T}_i
\eeq
for arbitrary coefficients $r_i$. Hence, for example, the coarse-graining relations that hold among operational effects are captured by the linear structure in this representation of the quotiented operational theory.

If one makes the standard assumption that every possible mixture of processes in the operational theory is another process in the operational theory, then it follows that the quotiented operational theory is convex.

Finally, the fact that we assumed that each system $A$ had a unique equivalence class of deterministic effects means that the quotiented operational theory will have a {\em unique} deterministic effect~\cite{chiribella2010probabilistic}  for each system: 
\beq
\begin{tikzpicture}
	\begin{pgfonlayer}{nodelayer}
		\node [style=none] (0) at (0.5, -0.75) {};
		\node [style=none] (1) at (0.5, 0.25) {};
		\node [style=upground] (2) at (0.5, 0.5) {};
		\node [style=right label] (4) at (0.5, -0.5) {$A$};
	\end{pgfonlayer}
	\begin{pgfonlayer}{edgelayer}
		\draw [qWire] (1.center) to (0.center);
	\end{pgfonlayer}
\end{tikzpicture}
\quad :=\quad
\begin{tikzpicture}
	\begin{pgfonlayer}{nodelayer}
		\node [style=right label] (4) at (0, -0.5) {$A$};
		\node [style=upground] (5) at (0, 0.5) {};
		\node [style=none] (6) at (0, 0.25) {};
		\node [style=none] (7) at (0, -1) {};
		\node [style=right label] (8) at (0, -1.5) {};
		\node [style=right label] (9) at (0, -0.5) {};
		\node [style=none] (10) at (1.25, -0.85) {\tiny $\sim$};
		\node [style=none] (11) at (1.5, -1) {};
		\node [style=none] (12) at (1.5, 1.5) {};
		\node [style=none] (13) at (-1.5, 1.5) {};
		\node [style=none] (14) at (-1.5, -1) {};
		\node [style=none] (15) at (0, -1.5) {};
		\node [style=none] (16) at (0.5, 0.5) {$c$};
	\end{pgfonlayer}
	\begin{pgfonlayer}{edgelayer}
		\draw [fill={black!50!blue!30!}, draw={black!40!blue!40}] (11.center)
			 to (12.center)
			 to (13.center)
			 to (14.center)
			 to cycle;
		\draw [qWire] (7.center) to (6.center);
		\draw [qWire] (7.center) to (15.center);
	\end{pgfonlayer}
\end{tikzpicture}
.
\eeq

In summary, a quotiented operational theory satisfies the key properties of a GPT: being tomographic,
 representability of each system  in $\mathds{R}^d$ (for some $d$), convexity, and uniqueness of the deterministic effect  for each system.  Henceforth, we will refer to the quotiented operational theory as {\em the GPT associated to the operational theory},  and we presume that every GPT can be achieved in this way. 

For example, quantum theory qua operational theory is the process theory whose processes are laboratory procedures (including contexts), while quantum theory qua GPT is the process theory whose processes are completely positive~\cite{NielsenAndChuang,Schmidcausal} trace-nonincreasing maps, whose states are density operators, and so on. When one quotients quantum theory qua operational theory, one obtains quantum theory qua GPT.

 It is worth noting that a quotiented operational theory should not be viewed as
an instance of an operational theory\footnote{ Although this is not captured by the formal definitions in this paper, it is captured by the more refined formalization in Ref.~\cite{schmid2020unscrambling}.}. In an operational theory, it is not merely that contexts are {\em permitted}, rather they are {\em required} in the definition of the procedures.
The primitives in an operational theory are laboratory procedures, and these necessarily involve a complete description of the context of a procedure. For example, ``prepare the maximally mixed state'' specifies a
preparation when viewing quantum theory as a quotiented operational theory, but not when viewing it as an operational theory. In the latter case, one must additionally specify how this mixture is achieved, e.g., which ensemble of pure states was prepared or which entangled state was followed by tracing of a subsystem. 

\subsubsection{Tomographic locality of a GPT}

Tomographic locality is a common assumption in the GPT literature---indeed, in some early work on GPTs it was considered such a basic principle that it was taken as part of the framework itself~\cite{barrett2007}. Intuitively, it states that processes can be characterized by local state preparations and local measurements. In this section, we will show that all tomographically local GPTs can be represented as subtheories $\widetilde{\Op} \subset \RL$, using arguments similar to those in, e.g., the duotensor formalism in Ref.~\cite{hardy2011reformulating}.

A GPT is said to satisfy {\em tomographic locality} if one can determine the identity of any process by local operations on each of its inputs and outputs:
\begin{align} \label{loctom}
 \forall \widetilde{E},\widetilde{E}',\widetilde{P},\widetilde{P}' \tikzfig{Diagrams/17b_TomLocGPT1b} &= \tikzfig{Diagrams/18b_TomLocGPT2b}\nonumber \\ &\rotatebox{90}{$\iff$} \nonumber  \\
 \tikzfig{Diagrams/19b_TomLocGPT3b}\quad &=\quad \tikzfig{Diagrams/20b_TomLocGPT4b}.
\end{align}
\noindent 
One can immediately verify that an operational theory satisfies Eq.~\eqref{tomlocforopthry} if and only if the GPT obtained by quotienting relative to operational equivalences is tomographically local. 

There are many equivalent characterizations of tomographic locality for a GPT. The most useful for us is the following condition, first introduced in Sections~6.8 and 9.3 of \cite{hardy2011reformulating}, which allows us to show that tomographically local GPTs can be represented as subtheories of $\RL$.

Consider an arbitrary set of linearly independent and spanning states $\{\widetilde{P}^A_{j} \}_{j = 1,2,...,m^A}$ on system $A$ and an arbitrary set of linearly independent and spanning effects $\{\widetilde{E}^B_{i} \}_{i= 1,2,...,m^B}$ on system $B$. (If the systems are composite, these should moreover be chosen as product states and product effects respectively.) Define the `transition matrix' in this basis for a process $\widetilde{T}$ from system $A$ to system $B$ as
\beq
\begin{tikzpicture}
	\begin{pgfonlayer}{nodelayer}
		\node [style=none] (10) at (0, -3) {};
		\node [style=none] (11) at (0, 3) {};
		\node [style=none] (12) at (0, 0) {$[\mathbf{N}_{\widetilde{T}}]_j^k$};
	\end{pgfonlayer}
\end{tikzpicture}
\quad := \quad
 \begin{tikzpicture}
	\begin{pgfonlayer}{nodelayer}
		\node [style=copoint] (1) at (0, 1.75) {$\widetilde{E}^B_k$};
		\node [style=small box] (2) at (0, 0) {$\widetilde{T}$};
		\node [style=right label] (3) at (0, 0.875) {$B$};
		\node [style=point] (4) at (0, -1.75) {$\widetilde{P}^A_j$};
		\node [style=right label] (5) at (0, -1) {$A$};
	\end{pgfonlayer}
	\begin{pgfonlayer}{edgelayer}
		\draw [qWire] (1) to (2);
		\draw [qWire] (2) to (4);
	\end{pgfonlayer}
\end{tikzpicture}.
\eeq

\begin{lemma} \label{lemmadecomp}
A GPT is tomographically local if and only if one can decompose the identity process, denoted $\widetilde{\mathds{1}}_A$, for every system $A$ as
\begin{equation}
\tikzfig{Diagrams/identityA}
\quad = \quad \sum_{ij} [\mathbf{M}_{\widetilde{\mathds{1}}_A}]_i^j
\tikzfig{Diagrams/measprepA},
\label{eq:identitydecomp1}
\end{equation}
where $\mathbf{M}_{{\widetilde{\mathds{1}}_A}}$ is the matrix inverse of the transition matrix $\mathbf{N}_{{\widetilde{\mathds{1}}_A}}$ of the identity process, that is,
\beq \label{lem:identitydecomp}
\mathbf{M}_{{\widetilde{\mathds{1}}}_A}:= \mathbf{N}_{\widetilde{\mathds{1}}_A}^{-1}.
\eeq
\end{lemma}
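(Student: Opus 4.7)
The plan is to prove the equivalence by treating both directions as reducing to tomography once the candidate identity decomposition is understood. A preliminary step establishes that $\mathbf{N}_{\widetilde{\mathds{1}}_A}$ is invertible, so that $\mathbf{M}_{\widetilde{\mathds{1}}_A} := \mathbf{N}_{\widetilde{\mathds{1}}_A}^{-1}$ makes sense: if a vector $v$ lies in the kernel of $\mathbf{N}_{\widetilde{\mathds{1}}_A}$, then the effect $\sum_i v_i \widetilde{E}_i^A$ annihilates every $\widetilde{P}_j^A$, hence (by spanning) every state, which by the tomographic completeness of the quotiented theory forces it to be the zero effect; linear independence of the $\widetilde{E}_i^A$ then gives $v=0$. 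Since the number of basis states equals the number of basis effects ($=m^A$), $\mathbf{N}_{\widetilde{\mathds{1}}_A}$ is a square invertible matrix.

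For the direction $(\Rightarrow)$, I would assume tomographic locality and use it to guarantee, for each composite system $A = A_1\otimes\cdots\otimes A_n$, a spanning and linearly independent set of \emph{product} states $\widetilde{P}_j^A = \widetilde{P}_{j_1}^{A_1}\otimes\cdots\otimes\widetilde{P}_{j_n}^{A_n}$ and similarly for effects. I would then define the candidate process
\[
X := \sum_{ij}[\mathbf{M}_{\widetilde{\mathds{1}}_A}]_i^j\,|\widetilde{P}_j^A\rangle\langle \widetilde{E}_i^A|
\]
and compute its transition matrix in the chosen basis, obtaining
\[
[\mathbf{N}_X]_k^l = \sum_{ij}[\mathbf{N}]_j^l\,[\mathbf{M}]_i^j\,[\mathbf{N}]_k^i = [\mathbf{N}_{\widetilde{\mathds{1}}_A}]_k^l,
\]
since $\mathbf{N}\mathbf{M} = \mathbf{I} = \mathbf{M}\mathbf{N}$. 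Because both $X$ and $\widetilde{\mathds{1}}_A$ have the same action on all pairs of basis states and basis effects, and those bases are spanning, Eq.~\eqref{linearityconst} together with tomography forces $X = \widetilde{\mathds{1}}_A$, giving the desired decomposition.

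For the direction $(\Leftarrow)$, I would assume the identity on every system decomposes via product bases as in the lemma and verify Eq.~\eqref{loctom}. Given two processes $\widetilde{T},\widetilde{T}':A\to B$ that agree on every composition of product preparations and product effects, I would insert the identity decompositions of $\widetilde{\mathds{1}}_A$ and $\widetilde{\mathds{1}}_B$ on the input and output wires of $\widetilde{T}$ and of $\widetilde{T}'$, yielding
\[
\widetilde{T} = \sum_{ij,kl}[\mathbf{M}_B]_k^l[\mathbf{M}_A]_i^j\, \big(\widetilde{E}_k^B\circ\widetilde{T}\circ\widetilde{P}_j^A\big)\,|\widetilde{P}_l^B\rangle\langle \widetilde{E}_i^A|
\]
and analogously for $\widetilde{T}'$. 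By hypothesis the scalar matrix elements coincide, so the two expansions agree term-by-term and $\widetilde{T}=\widetilde{T}'$, which is precisely tomographic locality.

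The main obstacle is the coupling between the composite structure and the chosen bases: the content of tomographic locality resides precisely in the ability to take the spanning sets $\{\widetilde{P}_j^A\}$ and $\{\widetilde{E}_i^A\}$ to consist of products on composite systems. The forward direction thus rests on invoking tomographic locality exactly at this step, whereas in the reverse direction one must be careful that the inserted identity decompositions propagate the restriction to products through every wire of an arbitrary diagram, so that equality on product setups alone forces equality of the global processes.
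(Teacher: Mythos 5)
Your proof is essentially correct and the forward direction follows the paper's own route: establish invertibility of $\mathbf{N}_{\widetilde{\mathds{1}}_A}$ (you argue via the kernel and linear independence of the effects, the paper argues via linear dependence of the rows and linear independence of the states---dual but equivalent), define the candidate process as the linear combination with coefficients $[\mathbf{M}_{\widetilde{\mathds{1}}_A}]_i^j$, check that its matrix elements against the spanning sets reproduce $\mathbf{N}_{\widetilde{\mathds{1}}_A}$, and conclude equality with the identity. One point of imprecision there: you attribute the final step to ``Eq.~\eqref{linearityconst} together with tomography,'' and separately say tomographic locality only enters in securing spanning \emph{product} bases. In fact, agreement of two processes on all (ancilla-free) states and effects does not imply equality under plain tomography---that implication \emph{is} tomographic locality, Eq.~\eqref{loctom}, and it is needed even for non-composite systems (transposition on a single rebit agrees with the identity on all local states and effects but differs on testers with side channels). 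Since you are working under the hypothesis of tomographic locality this is a wording issue rather than a gap, but the distinction is the entire content of the lemma, so it deserves to be made explicitly. Your converse direction takes a genuinely different, and arguably cleaner, route: you expand $\widetilde{T}$ itself as a linear combination of effect-prepare processes by inserting the identity decompositions on its input and output wires, and conclude $\widetilde{T}=\widetilde{T}'$ term by term from equality of the local matrix elements. The paper instead inserts the decompositions inside an arbitrary tester, observes that the closed diagram factorizes into disconnected pieces, and then appeals to tester-based tomography of the quotiented theory. Your version leans more heavily on the bilinearity of composition (Lemma~\ref{bilinearityprf}) to make sense of composing formal linear combinations, whereas the paper's version only ever manipulates scalars; both are valid given that lemma, and yours has the advantage of producing the explicit decomposition of $\widetilde{T}$ (essentially Corollary~\ref{corolTL}) as a byproduct.
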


This was originally shown in \cite[Sec. 6.8]{hardy2011reformulating}, and we reprove this in our notation in Appendix~\ref{proofTL}.

The vector space spanned by the set  $\{\widetilde{P}_{j}^A \}_{j = 1,2,...,m^A}$ of states is $\mathds{R}^{m^A}$, and the vector space spanned by the set $\{\widetilde{E}_{i}^B \}_{i= 1,2,...,m^B}$ of effects is $\mathds{R}^{m^B}$. Note that in general, $\widetilde{E}_j^A\circ\widetilde{P}_i^A \ne \delta_{ij}$\footnote{This fails to be an equality whenever the spanning effects $E_i$ do not perfectly distinguish the spanning states $P_j$. Note that this will be the generic case, as it is only in a simplicial GPT that there is a spanning set of states which can be perfectly distinguished.
}, which generically implies that $\mathbf{M}_{{\widetilde{\mathds{1}}}_A}$ is not the identity matrix (nor is it equal to $\mathbf{N}_{\widetilde{\mathds{1}}_A}$), counter to intuitions one might have from working with orthonormal bases.
The following corollary then makes explicit some extra structure which was implicit in the vector representation $\mathbf{R}_{\widetilde{T}}$ of the previous section. In particular, it
 shows that the vector space $\mathds{R}^{m^{A\to B}}$ of transformations from $A$ to $B$ is isomorphic to the vector space of linear maps from $\mathds{R}^{m^A}\to \mathds{R}^{m^B}$, where a process $\widetilde{T}$ is represented as a vector $\mathbf{R}_{\widetilde{T}}$ in the former and a matrix $\mathbf{M}_{\widetilde{T}}$ in the latter.

\begin{corollary}\label{corolTL}
A GPT is tomographically local if and only if one can decompose any process $\widetilde{T}$ as
\begin{equation}
\tikzfig{Diagrams/processAB}
\quad = \quad \sum_{ij} [\mathbf{M}_{\widetilde{T}}]_{i}^{j}
\tikzfig{Diagrams/measprepAB},
\label{eq:Tdecomp}
\end{equation}
where $\mathbf{M}_{\widetilde{T}} = \mathbf{M}_{\widetilde{\mathds{1}}_B}\circ \mathbf{N}_{\widetilde{T}}\circ\mathbf{M}_{\widetilde{\mathds{1}}_A}$.
\label{cor:Tdecomp}
\end{corollary}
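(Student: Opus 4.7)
The plan is to reduce the corollary directly to Lemma~\ref{lemmadecomp} by sandwiching an arbitrary process $\widetilde{T}$ between two identity processes and applying the identity decomposition on each side. Both directions collapse to this one observation, so no new combinatorial ingredients are required beyond what has already been established.

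For the forward implication, I would start from the trivial identity $\widetilde{T} = \widetilde{\mathds{1}}_B \circ \widetilde{T} \circ \widetilde{\mathds{1}}_A$ and substitute the decomposition of Eq.~\eqref{eq:identitydecomp1} from Lemma~\ref{lemmadecomp} for both $\widetilde{\mathds{1}}_A$ and $\widetilde{\mathds{1}}_B$. By the bilinearity of sequential composition (Lemma~\ref{bilinearityprf}), the four sums can be pulled outside the diagram and the scalar coefficients factored out. The central segment $\widetilde{E}_l^B \circ \widetilde{T} \circ \widetilde{P}_j^A$ is, by definition, the scalar $[\mathbf{N}_{\widetilde{T}}]_j^l$. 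Collecting the coefficient of each remaining $\widetilde{P}_k^B \circ \widetilde{E}_i^A$ gives exactly the $(i,k)$-entry of $\mathbf{M}_{\widetilde{\mathds{1}}_B} \circ \mathbf{N}_{\widetilde{T}} \circ \mathbf{M}_{\widetilde{\mathds{1}}_A}$, proving Eq.~\eqref{eq:Tdecomp} with the stated formula for $\mathbf{M}_{\widetilde{T}}$.

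For the converse, I would simply specialise Eq.~\eqref{eq:Tdecomp} to $\widetilde{T} = \widetilde{\mathds{1}}_A$ (with $B = A$), obtaining a decomposition of the identity of the form required by Lemma~\ref{lemmadecomp}; by that lemma, the GPT is tomographically local. The only subtlety worth checking is that the $\mathbf{M}_{\widetilde{\mathds{1}}_A}$ produced in this way really coincides with $\mathbf{N}_{\widetilde{\mathds{1}}_A}^{-1}$, but this follows from feeding $\widetilde{T} = \widetilde{\mathds{1}}_A$ into the relation $\mathbf{M}_{\widetilde{T}} = \mathbf{M}_{\widetilde{\mathds{1}}_B} \circ \mathbf{N}_{\widetilde{T}} \circ \mathbf{M}_{\widetilde{\mathds{1}}_A}$, which yields $\mathbf{M}_{\widetilde{\mathds{1}}_A} = \mathbf{M}_{\widetilde{\mathds{1}}_A} \circ \mathbf{N}_{\widetilde{\mathds{1}}_A} \circ \mathbf{M}_{\widetilde{\mathds{1}}_A}$ and hence, upon noting that $\mathbf{M}_{\widetilde{\mathds{1}}_A}$ is invertible (being conjugate by the nondegenerate $\mathbf{N}_{\widetilde{\mathds{1}}_A}$ in a finite-dimensional space), forces $\mathbf{M}_{\widetilde{\mathds{1}}_A} \mathbf{N}_{\widetilde{\mathds{1}}_A} = \mathrm{id}$.

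I do not anticipate any real obstacle: the argument is essentially a one-line diagrammatic manipulation plus bookkeeping. The only care needed is in fixing an index convention so that the three matrices compose in the advertised order $\mathbf{M}_{\widetilde{\mathds{1}}_B} \circ \mathbf{N}_{\widetilde{T}} \circ \mathbf{M}_{\widetilde{\mathds{1}}_A}$ rather than its transpose, which amounts to being consistent about which index of $[\mathbf{N}_{\widetilde{T}}]_j^k$ labels the input basis state and which the output basis effect.
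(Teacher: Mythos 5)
Your proposal is correct and follows essentially the same route as the paper: both directions reduce to Lemma~\ref{lemmadecomp}, with the forward implication obtained by inserting the identity decomposition on the input and output wires of $\widetilde{T}$ and collecting coefficients into $\mathbf{M}_{\widetilde{\mathds{1}}_B}\circ\mathbf{N}_{\widetilde{T}}\circ\mathbf{M}_{\widetilde{\mathds{1}}_A}$, and the converse by specialising Eq.~\eqref{eq:Tdecomp} to $\widetilde{T}=\widetilde{\mathds{1}}_A$. The extra check that $\mathbf{M}_{\widetilde{\mathds{1}}_A}=\mathbf{N}_{\widetilde{\mathds{1}}_A}^{-1}$ is harmless but unnecessary, since that equality is already built into the notation by the definition in Lemma~\ref{lemmadecomp}.
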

\begin{proof}
To prove that a GPT is tomographically local if one can decompose any process as in Eq.~\eqref{eq:Tdecomp}, simply note that for the special case of $\widetilde{T} = \widetilde{\mathds{1}}_A$, Eq.~\eqref{eq:Tdecomp} implies Eq.~\eqref{eq:identitydecomp1}, and hence, by Lemma~\ref{lemmadecomp}, implies tomographic locality.

To prove the converse, we assume tomographic locality and apply Lemma~\ref{lemmadecomp} to decompose the input and output systems of an arbitrary process $\widetilde{T}$ as
\beq
\begin{tikzpicture}
	\begin{pgfonlayer}{nodelayer}
		\node [style=none] (0) at (0, 1.5) {};
		\node [style=none] (1) at (0, 1.25) {};
		\node [style={small box}] (2) at (0, -0) {$\widetilde{T}$};
		\node [style={right label}] (3) at (0, 1) {$B$};
		\node [style=none] (4) at (0, -1.25) {};
		\node [style={right label}] (5) at (0, -1.25) {$A$};
	\end{pgfonlayer}
	\begin{pgfonlayer}{edgelayer}
		\draw [qWire] (1.center) to (2);
		\draw [qWire] (2) to (4.center);
	\end{pgfonlayer}
\end{tikzpicture}
\quad =\quad
 \begin{tikzpicture}
	\begin{pgfonlayer}{nodelayer}
		\node [style=copoint] (0) at (0, 1.25) {$\widetilde{E}_k^B$};
		\node [style={small box}] (1) at (0, 0) {$\widetilde{T}$};
		\node [style=point] (2) at (0, -1.25) {$\widetilde{P}_j^A$};
		\node [style=none] (3) at (-2.5, -2.25) {$\displaystyle \sum_{ij}[\mathbf{M}_{\widetilde{\mathds{1}}_A}]_i^j$};
		\node [style=none] (4) at (-2.5, 1.5) {$\displaystyle\sum_{kl}[\mathbf{M}_{\widetilde{\mathds{1}}_B}]_k^l$};
		\node [style=point] (5) at (0, 3.5) {$\widetilde{P}_l^B$};
		\node [style=copoint] (6) at (0, -3.5) {$\widetilde{E}_i^A$};
		\node [style=none] (7) at (0, -4.5) {};
		\node [style=none] (8) at (0, 4.5) {};
	\end{pgfonlayer}
	\begin{pgfonlayer}{edgelayer}
		\draw [qWire] (0) to (1);
		\draw [qWire] (1) to (2);
		\draw [qWire] (6) to (7.center);
		\draw [qWire] (5) to (8.center);
	\end{pgfonlayer}
\end{tikzpicture}
\quad = \quad
\begin{tikzpicture}
	\begin{pgfonlayer}{nodelayer}
		\node [style=none] (0) at (-2, -2) {$\displaystyle \sum_{ij}[\mathbf{M}_{\widetilde{\mathds{1}}_A}]_i^j$};
		\node [style=none] (1) at (-2, 1.75) {$\displaystyle\sum_{kl}[\mathbf{M}_{\widetilde{\mathds{1}}_B}]_k^l$};
		\node [style=point] (2) at (0, 2.75) {$\widetilde{P}_l^B$};
		\node [style=copoint] (3) at (0, -2.75) {$\widetilde{E}_i^A$};
		\node [style=none] (4) at (0, -3.75) {};
		\node [style=none] (5) at (0, 3.75) {};
		\node [style=none] (6) at (0, 0) {$[\mathbf{N}_{\widetilde{T}}]_j^k$};
	\end{pgfonlayer}
	\begin{pgfonlayer}{edgelayer}
		\draw [qWire] (3) to (4.center);
		\draw [qWire] (2) to (5.center);
	\end{pgfonlayer}
\end{tikzpicture}
\eeq
which can be rewritten as:
\begin{align}
\begin{tikzpicture}
	\begin{pgfonlayer}{nodelayer}
		\node [style=none] (0) at (0, 1.5) {};
		\node [style=none] (1) at (0, 1.25) {};
		\node [style={small box}] (2) at (0, -0) {$\widetilde{T}$};
		\node [style={right label}] (3) at (0, 1) {$B$};
		\node [style=none] (4) at (0, -1.25) {};
		\node [style={right label}] (5) at (0, -1.25) {$A$};
	\end{pgfonlayer}
	\begin{pgfonlayer}{edgelayer}
		\draw [qWire] (1.center) to (2);
		\draw [qWire] (2) to (4.center);
	\end{pgfonlayer}
\end{tikzpicture}
\ &= \ \sum_{il}\left(\sum_{jk}
[\mathbf{M}_{\widetilde{\mathds{1}}_B}]_k^l
[\mathbf{N}_{\widetilde{T}}]_j^k
[\mathbf{M}_{\widetilde{\mathds{1}}_A}]_i^j
\right)
 \begin{tikzpicture}
	\begin{pgfonlayer}{nodelayer}
		\node [style=point] (8) at (0, 1.125) {$\widetilde{P}_l^B$};
		\node [style=copoint] (9) at (0, -1.125) {$\widetilde{E}_i^A$};
		\node [style=none] (10) at (0, -2) {};
		\node [style=none] (11) at (0, 2) {};
	\end{pgfonlayer}
	\begin{pgfonlayer}{edgelayer}
		\draw [qWire] (9) to (10.center);
		\draw [qWire] (8) to (11.center);
	\end{pgfonlayer}
\end{tikzpicture}
\\
&= \
\sum_{il}
[\mathbf{M}_{\widetilde{\mathds{1}}_B}\circ
\mathbf{N}_{\widetilde{T}}\circ\mathbf{M}_{\widetilde{\mathds{1}}_A}]_i^l
 \begin{tikzpicture}
	\begin{pgfonlayer}{nodelayer}
		\node [style=point] (8) at (0, 1.125) {$\widetilde{P}_l^B$};
		\node [style=copoint] (9) at (0, -1.125) {$\widetilde{E}_i^A$};
		\node [style=none] (10) at (0, -2) {};
		\node [style=none] (11) at (0, 2) {};
	\end{pgfonlayer}
	\begin{pgfonlayer}{edgelayer}
		\draw [qWire] (9) to (10.center);
		\draw [qWire] (8) to (11.center);
	\end{pgfonlayer}
\end{tikzpicture}
\end{align}
at which point we can simply identify $\mathbf{M}_{\widetilde{T}} = \mathbf{M}_{\widetilde{\mathds{1}}_B}\circ \mathbf{N}_{\widetilde{T}}\circ\mathbf{M}_{\widetilde{\mathds{1}}_A}$, giving the desired result.
\end{proof}

Given the vector representation of two equivalence classes, $\mathbf{R}_{\widetilde{T}}$ and $\mathbf{R}_{\widetilde{T}'}$, we showed how to compute the representation of either the sequential or parallel composition of these (via $\smallsquare$ and $\smallboxtimes$, respectively). However, if we represent equivalence classes by matrices $\mathbf{M}_{\widetilde{T}}$ instead, then how must we represent the parallel and sequential composition of processes?
It turns out that parallel composition is represented by
\beq \label{parallelcompr}
\mathbf{M}_{\widetilde{T}\otimes \widetilde{T}'} = \mathbf{M}_{\widetilde{T}}\otimes \mathbf{M}_{\widetilde{T}'}
\eeq
where the $\otimes$ on the left represents the parallel composition of equivalence classes, while on the right it represents the tensor product of the two matrices.
Meanwhile, the sequential composition of this matrix representation is given by
\beq \label{nastycomp}
\mathbf{M}_{\widetilde{T}'\circ \widetilde{T}'} = \mathbf{M}_{\widetilde{T}'}\circ \mathbf{N}_{\widetilde{\mathds{1}}_B}\circ \mathbf{M}_{\widetilde{T}}
\eeq
where on the left-hand side $\circ$ represents the sequential composition of the equivalence classes, while on the right-hand side it represents matrix multiplication. These two facts are proven in Appendix~\ref{comprepn}.

The fact that Eq.~\eqref{nastycomp} is not simply the sequential composition rule for $\RL$, namely the matrix product of $\mathbf{M}_{\widetilde{T}'}$ and $\mathbf{M}_{\widetilde{T}}$, implies that this matrix representation is not a subtheory of $\RL$, nor even some other diagram-preserving representation of the GPT.
This form of composition has, however, appeared numerous times in the literature, for example in Refs.~\cite{ferrie2008frame,hardy2011reformulating,appleby2017introducing,van2017quantum}. There is, moreover, a well known trick to turn this representation into a diagram-preserving representation in $\RL$: one simply defines a new matrix representation by replacing $\mathbf{M}_{\widetilde{T}}$ with $\mathbf{M}_{\widetilde{T}}\circ \mathbf{N}_{\widetilde{\mathds{1}}_A}$\footnote{In the language of duotensors \cite{hardy2011reformulating} this means that we put the duotensors into standard form.}.  It is then easy to verify that these do indeed compose using the standard composition rules (tensor products for parallel composition and matrix multiplication for sequential composition), and to verify that the identity process is represented by the identity matrix.

Putting all of this together we arrive at the following.
\begin{theorem} \label{OpinRL}
Any tomographically local GPT has a diagram-preserving representation in $\RL$ given by
the map
\[A \mapsto \mathds{R}^{m^A}\]
on systems and the map
\[\widetilde{T} \mapsto \mathbf{M}_{\widetilde{T}}\circ \mathbf{N}_{\widetilde{\mathds{1}}_A}\]
 on processes, where
\beq [\mathbf{N}_{\widetilde{T}}]_j^k:=
\begin{tikzpicture}
	\begin{pgfonlayer}{nodelayer}
		\node [style=copoint] (1) at (0, 1.75) {$\widetilde{E}_k^B$};
		\node [style=small box] (2) at (0, 0) {$\widetilde{T}$};
		\node [style=right label] (3) at (0, 0.75) {$B$};
		\node [style=point] (4) at (0, -1.75) {$\widetilde{P}_j^A$};
		\node [style=right label] (5) at (0, -.875) {$A$};
	\end{pgfonlayer}
	\begin{pgfonlayer}{edgelayer}
		\draw [qWire] (1) to (2);
		\draw [qWire] (2) to (4);
	\end{pgfonlayer}
\end{tikzpicture}
\eeq
 for some basis of states $\{\widetilde{P}_j^A\}$ and effects $\{\widetilde{E}_k ^B\}$,
and where 
\beq
\mathbf{M}_{\widetilde{T}} := \mathbf{N}_{\widetilde{\mathds{1}}_B}^{-1}\circ \mathbf{N}_{\widetilde{T}} \circ \mathbf{N}_{\widetilde{\mathds{1}}_A}^{-1}.
\eeq
\end{theorem}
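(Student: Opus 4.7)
The plan is to verify that the assignment $\widetilde{T}\mapsto \mathbf{M}_{\widetilde{T}}\circ \mathbf{N}_{\widetilde{\mathds{1}}_A}$ preserves (i) identities, (ii) sequential composition, and (iii) parallel composition, since these are precisely what is required for a diagram-preserving map into $\mathbf{RLinear}$. The proof is essentially a bookkeeping argument that leverages Corollary~\ref{corolTL}, together with the composition rules \eqref{parallelcompr} and \eqref{nastycomp} (proven in Appendix~\ref{comprepn}) for the ``unstandardized'' representation $\widetilde{T}\mapsto \mathbf{M}_{\widetilde{T}}$.

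First I would handle identities. Specializing the defining formula $\mathbf{M}_{\widetilde{T}}:=\mathbf{N}_{\widetilde{\mathds{1}}_B}^{-1}\circ \mathbf{N}_{\widetilde{T}}\circ \mathbf{N}_{\widetilde{\mathds{1}}_A}^{-1}$ to $\widetilde{T}=\widetilde{\mathds{1}}_A$ gives $\mathbf{M}_{\widetilde{\mathds{1}}_A}=\mathbf{N}_{\widetilde{\mathds{1}}_A}^{-1}$, so $\mathbf{M}_{\widetilde{\mathds{1}}_A}\circ \mathbf{N}_{\widetilde{\mathds{1}}_A}$ is the identity matrix on $\mathds{R}^{m^A}$, as needed. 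Next, sequential composition follows immediately from \eqref{nastycomp}: for $\widetilde{T}:A\to B$ and $\widetilde{T}':B\to C$,
\begin{align*}
\mathbf{M}_{\widetilde{T}'\circ \widetilde{T}}\circ \mathbf{N}_{\widetilde{\mathds{1}}_A}
&= \bigl(\mathbf{M}_{\widetilde{T}'}\circ \mathbf{N}_{\widetilde{\mathds{1}}_B}\circ \mathbf{M}_{\widetilde{T}}\bigr)\circ \mathbf{N}_{\widetilde{\mathds{1}}_A} \\
&= \bigl(\mathbf{M}_{\widetilde{T}'}\circ \mathbf{N}_{\widetilde{\mathds{1}}_B}\bigr)\circ \bigl(\mathbf{M}_{\widetilde{T}}\circ \mathbf{N}_{\widetilde{\mathds{1}}_A}\bigr),
\end{align*}
which is precisely the matrix product of the images of $\widetilde{T}'$ and $\widetilde{T}$ in $\mathbf{RLinear}$. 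The inserted $\mathbf{N}_{\widetilde{\mathds{1}}_B}$ that made \eqref{nastycomp} non-standard is exactly what is absorbed by the ``standardization'' $\mathbf{M}_{(-)}\mapsto \mathbf{M}_{(-)}\circ \mathbf{N}_{\widetilde{\mathds{1}}_{\text{in}}}$.

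For parallel composition, I would first observe that because the paper requires the spanning bases on composite systems to be chosen as product states and product effects, one has $\widetilde{P}_{(j,j')}^{A\otimes A'}=\widetilde{P}_j^A\otimes \widetilde{P}_{j'}^{A'}$ and similarly for effects; hence the identity transition matrix factorizes as $\mathbf{N}_{\widetilde{\mathds{1}}_{A\otimes A'}}=\mathbf{N}_{\widetilde{\mathds{1}}_A}\otimes \mathbf{N}_{\widetilde{\mathds{1}}_{A'}}$. Combining this with \eqref{parallelcompr} gives
\begin{align*}
\mathbf{M}_{\widetilde{T}\otimes \widetilde{T}'}\circ \mathbf{N}_{\widetilde{\mathds{1}}_{A\otimes A'}}
&= \bigl(\mathbf{M}_{\widetilde{T}}\otimes \mathbf{M}_{\widetilde{T}'}\bigr)\circ \bigl(\mathbf{N}_{\widetilde{\mathds{1}}_A}\otimes \mathbf{N}_{\widetilde{\mathds{1}}_{A'}}\bigr) \\
&= \bigl(\mathbf{M}_{\widetilde{T}}\circ \mathbf{N}_{\widetilde{\mathds{1}}_A}\bigr)\otimes \bigl(\mathbf{M}_{\widetilde{T}'}\circ \mathbf{N}_{\widetilde{\mathds{1}}_{A'}}\bigr),
\end{align*}
as required. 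Putting these three checks together, the map preserves all diagrammatic structure and lands in $\mathbf{RLinear}$, proving the theorem.

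The main obstacle is not conceptual but notational: one must have the two auxiliary facts \eqref{parallelcompr} and \eqref{nastycomp} at hand (deferred to Appendix~\ref{comprepn}) and must be careful that the factorization $\mathbf{N}_{\widetilde{\mathds{1}}_{A\otimes A'}}=\mathbf{N}_{\widetilde{\mathds{1}}_A}\otimes \mathbf{N}_{\widetilde{\mathds{1}}_{A'}}$ really does hold, which in turn relies on the product-basis convention used to even formulate Lemma~\ref{lemmadecomp} on composites. Once those ingredients are accepted, the ``standardization trick'' of right-multiplying by $\mathbf{N}_{\widetilde{\mathds{1}}_{\text{in}}}$ cleanly cancels the extra $\mathbf{N}_{\widetilde{\mathds{1}}_B}$ appearing in the non-standard sequential composition rule, and the result is immediate.
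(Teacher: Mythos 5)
Your proposal is correct and follows essentially the same route as the paper, which simply asserts that after the ``standardization'' $\mathbf{M}_{\widetilde{T}}\mapsto\mathbf{M}_{\widetilde{T}}\circ\mathbf{N}_{\widetilde{\mathds{1}}_A}$ it is easy to verify preservation of identities, sequential composition (via Eq.~\eqref{nastycomp}) and parallel composition (via Eq.~\eqref{parallelcompr}); you have merely written out those three checks explicitly. Your observation that $\mathbf{N}_{\widetilde{\mathds{1}}_{A\otimes A'}}=\mathbf{N}_{\widetilde{\mathds{1}}_A}\otimes\mathbf{N}_{\widetilde{\mathds{1}}_{A'}}$ rests on the product-basis convention (together with factorization of scalars over disconnected diagrams) is a correct and worthwhile point that the paper leaves implicit.
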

This result is implicit in the work of Refs.~\cite{hardy2011reformulating,chiribella2016quantum} and more explicit in the quantum case of \cite{van2017quantum}. 

Effectively this means that we can view any tomographically local GPT simply as a suitably defined subtheory $\widetilde{\Op}\subset \RL$.  For the remainder of this paper we restrict our attention to tomographically local GPTs, and we will moreover abuse notation and simply denote the linear maps in this representation by $\widetilde{T}$ rather than by $\mathbf{M}_{\widetilde{T}}\circ \mathbf{N}_{\widetilde{\mathds{1}}_A}$, and similarly, the vector spaces as $A$ rather than by $\mathds{R}^{m^A}$. That is, we will neglect to make the distinction between the quotiented operational theory and its representation as a subtheory of $\RL$, as preserving the distinction is unwieldy and typically unhelpful.

 Quantum theory is an example of an operational theory, and it is well known that the GPT representation of quantum theory is tomographically local. The latter is a subtheory of $\RL$, as $\mathcal{B(H)}$ is a real vector space and completely positive trace non-increasing maps are just a particular class of linear maps between these vector spaces. Classical theory, the Spekkens toy model~\cite{spekkens2007evidence}, and the stabilizer subtheory~\cite{Gottesman:1998hu} for arbitrary dimensions are also tomographically local.  Examples of GPTs which are not tomographically local are real quantum theory \cite{hardy2012limited} and the real stabilizer subtheory.

\subsection{Representations of operational theories and GPTs}

One often wishes to find alternative representations of an operational theory or a GPT, e.g., as an ontological model or a quasiprobabilistic model (to be defined shortly). A key motivation for studying ontological models is the attempt to find an explanation for the statistics in terms of some underlying properties of the relevant systems, especially if this explanation can be said to be {\em classical} in some well-motivated sense. In this section, we introduce the definition of ontological models and quasiprobabilistic models, and in the next section we discuss under what conditions one can say that such representations provide a classical explanation of the operational theory or GPT which they describe.

\subsubsection{Ontological models}\label{subsec:ont}

An ontological model is a map associating to every system $S$ a set $\Lambda_S$ of ontic states, and associating to every process a stochastic map from the ontic state space associated to the input systems to the ontic state space associated with the output systems.

It is important to distinguish between ontological models of operational theories and ontological models of GPTs, as was shown in Ref.~\cite{schmid2019characterization}. In particular, the former allows for context-dependence while the latter does not. See App.~\ref{contextsingpts} for a detailed discussion of this point.

\begin{definition}[Ontological models of operational theories] \label{defnontop}
An ontological model~\cite{harrigan2007} of an operational theory $\Op$ is a diagram-preserving map \colorbox{BurntOrange!20}{$\xi:\Op\to \SubS$}, depicted as
\[\xi::\ \tikzfig{Diagrams/21b_Ont1b}\ \mapsto\ \tikzfig{Diagrams/21_Ont1},\]
from the operational theory to the process theory $\SubS$,
where the map satisfies three properties:
\begin{enumerate}
\item It represents all deterministic effects in the operational theory appropriately:
\[\begin{tikzpicture}
	\begin{pgfonlayer}{nodelayer}
		\node [style=upground] (0) at (0, 0) {};
		\node [style=none] (7) at (0, -0.25) {};
		\node [style=none] (8) at (0, -1.5) {};
		\node [style=right label] (9) at (0, -2) {$\mathds{R}^{\Lambda_A}$};
		\node [style=right label] (10) at (0, -1) {$A$};
		\node [style=none] (13) at (1.25, -1.25) {\tiny ${\xi}$};
		\node [style=none] (14) at (1.5, -1.5) {};
		\node [style=none] (15) at (1.5, 1) {};
		\node [style=none] (16) at (-1.5, 1) {};
		\node [style=none] (17) at (-1.5, -1.5) {};
		\node [style=none] (18) at (0, -2) {};
		\node[style=none] (19) at (0.5,0.25) {c};
	\end{pgfonlayer}
	\begin{pgfonlayer}{edgelayer}
			\filldraw[fill=BurntOrange!20,draw=BurntOrange!40](14.center) to (15.center) to (16.center) to (17.center) to cycle;
		\draw [qWire] (8.center) to (7.center);
		\draw (8.center) to (18.center);
	\end{pgfonlayer}
\end{tikzpicture}
\ =\
\tikzfig{Diagrams/23_Ont3} \ =\
\mathbf{1}.\]
\item It reproduces the operational predictions of the operational theory (i.e., is empirically adequate). That is, for all closed diagrams:
\[\tikzfig{Diagrams/24_Ont4}\ =\ \tikzfig{Diagrams/25_Ont5}\ =\mathrm{Pr}(E,P).\]
\item It preserves the convex and coarse-graining relations between operational procedures. E.g., if $T_1$ is a procedure
that is a mixture of $T_2$ and $T_3$ with weights $\omega$ and $1-\omega$, respectively, then it must hold that
\beq
\begin{tikzpicture}
	\begin{pgfonlayer}{nodelayer}
		\node [style=none] (0) at (0, 0) {$
{T_1}$};
		\node [style=none] (1) at (-0.5, 0.5) {};
		\node [style=none] (2) at (0.5, 0.5) {};
		\node [style=none] (3) at (0.5, -0.5) {};
		\node [style=none] (4) at (-0.5, -0.5) {};
		\node [style=none] (5) at (0, 0.5) {};
		\node [style=none] (6) at (0, 1.5) {};
		\node [style=none] (7) at (0, -0.5) {};
		\node [style=none] (8) at (0, -1.5) {};
		\node [style=none] (13) at (1.25, -1.25) {\tiny ${\xi}$};
		\node [style=none] (14) at (1.5, -1.5) {};
		\node [style=none] (15) at (1.5, 1.5) {};
		\node [style=none] (16) at (-1.5, 1.5) {};
		\node [style=none] (17) at (-1.5, -1.5) {};
		\node [style=none] (18) at (0, 2) {};
		\node [style=none] (19) at (0, -2) {};
	\end{pgfonlayer}
	\begin{pgfonlayer}{edgelayer}
			\filldraw[fill=BurntOrange!20,draw=BurntOrange!40](14.center) to (15.center) to (16.center) to (17.center) to cycle;
		\filldraw[fill=white,draw=black] (1.center) to (2.center) to (3.center) to (4.center) to cycle;
		\draw [qWire] (5.center) to (6.center);
		\draw [qWire] (8.center) to (7.center);
		\draw (18.center) to (6.center);
		\draw (8.center) to (19.center);
	\end{pgfonlayer}
\end{tikzpicture}
\ = \omega\
\begin{tikzpicture}
	\begin{pgfonlayer}{nodelayer}
		\node [style=none] (0) at (0, 0) {$
{T_2}$};
		\node [style=none] (1) at (-0.5, 0.5) {};
		\node [style=none] (2) at (0.5, 0.5) {};
		\node [style=none] (3) at (0.5, -0.5) {};
		\node [style=none] (4) at (-0.5, -0.5) {};
		\node [style=none] (5) at (0, 0.5) {};
		\node [style=none] (6) at (0, 1.5) {};
		\node [style=none] (7) at (0, -0.5) {};
		\node [style=none] (8) at (0, -1.5) {};
		\node [style=none] (13) at (1.25, -1.25) {\tiny ${\xi}$};
		\node [style=none] (14) at (1.5, -1.5) {};
		\node [style=none] (15) at (1.5, 1.5) {};
		\node [style=none] (16) at (-1.5, 1.5) {};
		\node [style=none] (17) at (-1.5, -1.5) {};
		\node [style=none] (18) at (0, 2) {};
		\node [style=none] (19) at (0, -2) {};
	\end{pgfonlayer}
	\begin{pgfonlayer}{edgelayer}
			\filldraw[fill=BurntOrange!20,draw=BurntOrange!40](14.center) to (15.center) to (16.center) to (17.center) to cycle;
		\filldraw[fill=white,draw=black] (1.center) to (2.center) to (3.center) to (4.center) to cycle;
		\draw [qWire] (5.center) to (6.center);
		\draw [qWire] (8.center) to (7.center);
		\draw (18.center) to (6.center);
		\draw (8.center) to (19.center);
	\end{pgfonlayer}
\end{tikzpicture}
\ +(1-\omega)\
\begin{tikzpicture}
	\begin{pgfonlayer}{nodelayer}
		\node [style=none] (0) at (0, 0) {${T_3}$};
		\node [style=none] (1) at (-0.5, 0.5) {};
		\node [style=none] (2) at (0.5, 0.5) {};
		\node [style=none] (3) at (0.5, -0.5) {};
		\node [style=none] (4) at (-0.5, -0.5) {};
		\node [style=none] (5) at (0, 0.5) {};
		\node [style=none] (6) at (0, 1.5) {};
		\node [style=none] (7) at (0, -0.5) {};
		\node [style=none] (8) at (0, -1.5) {};
		\node [style=none] (13) at (1.25, -1.25) {\tiny ${\xi}$};
		\node [style=none] (14) at (1.5, -1.5) {};
		\node [style=none] (15) at (1.5, 1.5) {};
		\node [style=none] (16) at (-1.5, 1.5) {};
		\node [style=none] (17) at (-1.5, -1.5) {};
		\node [style=none] (18) at (0, 2) {};
		\node [style=none] (19) at (0, -2) {};
	\end{pgfonlayer}
	\begin{pgfonlayer}{edgelayer}
			\filldraw[fill=BurntOrange!20,draw=BurntOrange!40](14.center) to (15.center) to (16.center) to (17.center) to cycle;
		\filldraw[fill=white,draw=black] (1.center) to (2.center) to (3.center) to (4.center) to cycle;
		\draw [qWire] (5.center) to (6.center);
		\draw [qWire] (8.center) to (7.center);
		\draw (18.center) to (6.center);
		\draw (8.center) to (19.center);
	\end{pgfonlayer}
\end{tikzpicture}.
\eeq
\end{enumerate}
\end{definition}

This diagrammatic definition of an ontological model reproduces the usual notions~\cite{Spekkens2005} of ontological representations of preparation procedures and of operational effects. In particular, an operational preparation procedure is an operational process with a trivial input, and by diagram preservation of $\xi$, this is mapped to a process in $\SubS$ with a trivial input, that is, to a probability distribution over the ontic states:
$P \mapsto \xi(P) : \Lambda \to \mathds{R}^+ \text{ s.t. } \sum_\lambda \xi(P)(\lambda)=1$. Similarly, an operational effect is an operational process with a trivial output, and by diagram preservation of $\xi$ is mapped to a substochastic process with a trivial output, that is, to a response function over the ontic states: $E \mapsto \xi(E) : \Lambda \to [0,1] \text{ s.t. }  \xi(E)(\lambda) \leq 1$.

\begin{definition}[Ontological models of GPTs] \label{defnontgpt}
An ontological model $\widetilde{\xi}$ of a GPT $\widetilde{\Op}$ is a diagram-preserving map \colorbox{Red!20}{$\widetilde{\xi}: \widetilde{\Op} \to \SubS$}, depicted as
\[\widetilde{\xi}:: \ \tikzfig{Diagrams/26b_GPT1b}\ \mapsto \ \tikzfig{Diagrams/26_GPT1},\]
from the GPT to the process theory $\SubS$,
where the map satisfies three properties:
\begin{enumerate}
\item It represents the deterministic effect  for each system  appropriately: \[
\begin{tikzpicture}
	\begin{pgfonlayer}{nodelayer}
		\node [style=upground] (0) at (0, 0) {};
		\node [style=none] (7) at (0, -0.25) {};
		\node [style=none] (8) at (0, -1.5) {};
		\node [style=right label] (9) at (0, -2) {$\mathds{R}^{\Lambda_A}$};
		\node [style=right label] (10) at (0, -1) {$A$};
		\node [style=none] (13) at (1.25, -1.25) {\tiny $\tilde{\xi}$};
		\node [style=none] (14) at (1.5, -1.5) {};
		\node [style=none] (15) at (1.5, 1) {};
		\node [style=none] (16) at (-1.5, 1) {};
		\node [style=none] (17) at (-1.5, -1.5) {};
		\node [style=none] (18) at (0, -2) {};
	\end{pgfonlayer}
	\begin{pgfonlayer}{edgelayer}
			\filldraw[fill=Red!20,draw=Red!40](14.center) to (15.center) to (16.center) to (17.center) to cycle;
		\draw [qWire] (8.center) to (7.center);
		\draw (8.center) to (18.center);
	\end{pgfonlayer}
\end{tikzpicture}
\ =\ \tikzfig{Diagrams/23_Ont3} =\ \mathbf{1}.\]
\item It reproduces the operational predictions of the GPT (i.e., is empirically adequate), so that for all closed diagrams,
\beq\label{eq:EmpAdeq}\tikzfig{Diagrams/24_empadG}\ =\ \tikzfig{Diagrams/25_empadG}\ =\mathrm{Pr}(\widetilde{E},\widetilde{P}).\eeq
\item It preserves the convex and coarse-graining relations between operational procedures. E.g., if
\beq
\begin{tikzpicture}
	\begin{pgfonlayer}{nodelayer}
		\node [style=none] (0) at (0, 0) {$
\widetilde{T}_1$};
		\node [style=none] (1) at (-0.5, 0.5) {};
		\node [style=none] (2) at (0.5, 0.5) {};
		\node [style=none] (3) at (0.5, -0.5) {};
		\node [style=none] (4) at (-0.5, -0.5) {};
		\node [style=none] (5) at (0, 0.5) {};
		\node [style=none] (6) at (0, 1.5) {};
		\node [style=none] (7) at (0, -0.5) {};
		\node [style=none] (8) at (0, -1.5) {};
	\end{pgfonlayer}
	\begin{pgfonlayer}{edgelayer}
		\draw [qWire] (5.center) to (6.center);
		\draw [qWire] (8.center) to (7.center);
		\draw (1.center) to (2.center);
		\draw (2.center) to (3.center);
		\draw (3.center) to (4.center);
		\draw (4.center) to (1.center);
	\end{pgfonlayer}
\end{tikzpicture}
\ = \omega\
\begin{tikzpicture}
	\begin{pgfonlayer}{nodelayer}
		\node [style=none] (0) at (0, 0) {$
\widetilde{T}_2$};
		\node [style=none] (1) at (-0.5, 0.5) {};
		\node [style=none] (2) at (0.5, 0.5) {};
		\node [style=none] (3) at (0.5, -0.5) {};
		\node [style=none] (4) at (-0.5, -0.5) {};
		\node [style=none] (5) at (0, 0.5) {};
		\node [style=none] (6) at (0, 1.5) {};
		\node [style=none] (7) at (0, -0.5) {};
		\node [style=none] (8) at (0, -1.5) {};
	\end{pgfonlayer}
	\begin{pgfonlayer}{edgelayer}
		\draw [qWire] (5.center) to (6.center);
		\draw [qWire] (8.center) to (7.center);
		\draw (1.center) to (2.center);
		\draw (2.center) to (3.center);
		\draw (3.center) to (4.center);
		\draw (4.center) to (1.center);
	\end{pgfonlayer}
\end{tikzpicture}
\ + (1-\omega)\
 \begin{tikzpicture}
	\begin{pgfonlayer}{nodelayer}
		\node [style=none] (0) at (0, 0) {$
\widetilde{T}_3$};
		\node [style=none] (1) at (-0.5, 0.5) {};
		\node [style=none] (2) at (0.5, 0.5) {};
		\node [style=none] (3) at (0.5, -0.5) {};
		\node [style=none] (4) at (-0.5, -0.5) {};
		\node [style=none] (5) at (0, 0.5) {};
		\node [style=none] (6) at (0, 1.5) {};
		\node [style=none] (7) at (0, -0.5) {};
		\node [style=none] (8) at (0, -1.5) {};
	\end{pgfonlayer}
	\begin{pgfonlayer}{edgelayer}
		\draw [qWire] (5.center) to (6.center);
		\draw [qWire] (8.center) to (7.center);
		\draw (1.center) to (2.center);
		\draw (2.center) to (3.center);
		\draw (3.center) to (4.center);
		\draw (4.center) to (1.center);
	\end{pgfonlayer}
\end{tikzpicture}
\eeq
then it must hold that
 \beq
\begin{tikzpicture}
	\begin{pgfonlayer}{nodelayer}
		\node [style=none] (0) at (0, 0) {$
\widetilde{T}_1$};
		\node [style=none] (1) at (-0.5, 0.5) {};
		\node [style=none] (2) at (0.5, 0.5) {};
		\node [style=none] (3) at (0.5, -0.5) {};
		\node [style=none] (4) at (-0.5, -0.5) {};
		\node [style=none] (5) at (0, 0.5) {};
		\node [style=none] (6) at (0, 1.5) {};
		\node [style=none] (7) at (0, -0.5) {};
		\node [style=none] (8) at (0, -1.5) {};
		\node [style=none] (13) at (1.25, -1.25) {\tiny $\tilde{\xi}$};
		\node [style=none] (14) at (1.5, -1.5) {};
		\node [style=none] (15) at (1.5, 1.5) {};
		\node [style=none] (16) at (-1.5, 1.5) {};
		\node [style=none] (17) at (-1.5, -1.5) {};
		\node [style=none] (18) at (0, 2) {};
		\node [style=none] (19) at (0, -2) {};
	\end{pgfonlayer}
	\begin{pgfonlayer}{edgelayer}
			\filldraw[fill=Red!20,draw=Red!40](14.center) to (15.center) to (16.center) to (17.center) to cycle;
		\filldraw[fill=white,draw=black] (1.center) to (2.center) to (3.center) to (4.center) to cycle;
		\draw [qWire] (5.center) to (6.center);
		\draw [qWire] (8.center) to (7.center);
		\draw (18.center) to (6.center);
		\draw (8.center) to (19.center);
	\end{pgfonlayer}
\end{tikzpicture}
\ = \omega\
\begin{tikzpicture}
	\begin{pgfonlayer}{nodelayer}
		\node [style=none] (0) at (0, 0) {$
\widetilde{T}_2$};
		\node [style=none] (1) at (-0.5, 0.5) {};
		\node [style=none] (2) at (0.5, 0.5) {};
		\node [style=none] (3) at (0.5, -0.5) {};
		\node [style=none] (4) at (-0.5, -0.5) {};
		\node [style=none] (5) at (0, 0.5) {};
		\node [style=none] (6) at (0, 1.5) {};
		\node [style=none] (7) at (0, -0.5) {};
		\node [style=none] (8) at (0, -1.5) {};
		\node [style=none] (13) at (1.25, -1.25) {\tiny $\tilde{\xi}$};
		\node [style=none] (14) at (1.5, -1.5) {};
		\node [style=none] (15) at (1.5, 1.5) {};
		\node [style=none] (16) at (-1.5, 1.5) {};
		\node [style=none] (17) at (-1.5, -1.5) {};
		\node [style=none] (18) at (0, 2) {};
		\node [style=none] (19) at (0, -2) {};
	\end{pgfonlayer}
	\begin{pgfonlayer}{edgelayer}
			\filldraw[fill=Red!20,draw=Red!40](14.center) to (15.center) to (16.center) to (17.center) to cycle;
		\filldraw[fill=white,draw=black] (1.center) to (2.center) to (3.center) to (4.center) to cycle;
		\draw [qWire] (5.center) to (6.center);
		\draw [qWire] (8.center) to (7.center);
		\draw (18.center) to (6.center);
		\draw (8.center) to (19.center);
	\end{pgfonlayer}
\end{tikzpicture}
\ +(1-\omega)\
\begin{tikzpicture}
	\begin{pgfonlayer}{nodelayer}
		\node [style=none] (0) at (0, 0) {$
\widetilde{T}_3$};
		\node [style=none] (1) at (-0.5, 0.5) {};
		\node [style=none] (2) at (0.5, 0.5) {};
		\node [style=none] (3) at (0.5, -0.5) {};
		\node [style=none] (4) at (-0.5, -0.5) {};
		\node [style=none] (5) at (0, 0.5) {};
		\node [style=none] (6) at (0, 1.5) {};
		\node [style=none] (7) at (0, -0.5) {};
		\node [style=none] (8) at (0, -1.5) {};
		\node [style=none] (13) at (1.25, -1.25) {\tiny $\tilde{\xi}$};
		\node [style=none] (14) at (1.5, -1.5) {};
		\node [style=none] (15) at (1.5, 1.5) {};
		\node [style=none] (16) at (-1.5, 1.5) {};
		\node [style=none] (17) at (-1.5, -1.5) {};
		\node [style=none] (18) at (0, 2) {};
		\node [style=none] (19) at (0, -2) {};
	\end{pgfonlayer}
	\begin{pgfonlayer}{edgelayer}
			\filldraw[fill=Red!20,draw=Red!40](14.center) to (15.center) to (16.center) to (17.center) to cycle;
		\filldraw[fill=white,draw=black] (1.center) to (2.center) to (3.center) to (4.center) to cycle;
		\draw [qWire] (5.center) to (6.center);
		\draw [qWire] (8.center) to (7.center);
		\draw (18.center) to (6.center);
		\draw (8.center) to (19.center);
	\end{pgfonlayer}
\end{tikzpicture}.
\eeq
\end{enumerate}
\end{definition}

In analogy with the discussion above, one has that normalized GPT states on some system are represented in an ontological model by probability distributions over the ontic state space associated with that system, while GPT effects are represented by response functions.

The state spaces in $\SubS$ form simplices, and so we will sometimes refer to an ontological model of a GPT as a simplex embedding. This terminology is a natural extension of the definition of simplex embedding in \cite{schmid2019characterization}.
\subsubsection{Quasiprobabilistic models}

We now introduce quasiprobabilistic models of a GPT. One could analogously define quasiprobabilistic models of an operational theory (as diagram-preserving maps from $\Op$ to $\QSS$).  However, given that the expressive freedom offered by the possibility of context-dependence is sufficient to ensure that every operational theory admits of an ontological model, and hence a {\em positive} quasiprobabilistic model, there is no need to make use of the additional expressive freedom offered by allowing negative quasiprobabilities, and hence no motivation to introduce such models.
On the other hand, in the case of GPTs, there does not always exist an ontological model, hence quasiprobabilistic models are a useful conceptual and mathematical tool for assessing the classicality of a GPT.

\begin{definition}[Quasiprobabilistic models of GPTs]\label{def:quasi}
A quasiprobabilistic model of a GPT $\widetilde{\Op}$, is a diagram-preserving map \colorbox{purple!20}{$\hat{\xi}: \widetilde{\Op} \to \QSS$}, depicted as
\[\hat{\xi}::\ \tikzfig{Diagrams/26b_GPT1b}\ \mapsto \ \tikzfig{Diagrams/26c_GPT_hat},\]
where the map satisfies three properties:
\begin{enumerate}
\item It represents the deterministic effect for each system  appropriately:
\begin{equation}
\tikzfig{Diagrams/22b_Ont2_hat}\ =\ \tikzfig{Diagrams/23_Ont3} =\
\mathbf{1}. \label{eq:quasinorm}
\end{equation}
\item It reproduces the operational predictions of the GPT (i.e., is empirically adequate), so that for all closed diagrams,
\[\tikzfig{Diagrams/24_empadQrepn}\ = \ \tikzfig{Diagrams/25_empadQrepn}\ =\mathrm{Pr}(\widetilde{E},\widetilde{P}).\]
\item It preserves the convex and coarse-graining relations between operational procedures. E.g., if
\beq
\begin{tikzpicture}
	\begin{pgfonlayer}{nodelayer}
		\node [style=none] (0) at (0, 0) {$
\widetilde{T}_1$};
		\node [style=none] (1) at (-0.5, 0.5) {};
		\node [style=none] (2) at (0.5, 0.5) {};
		\node [style=none] (3) at (0.5, -0.5) {};
		\node [style=none] (4) at (-0.5, -0.5) {};
		\node [style=none] (5) at (0, 0.5) {};
		\node [style=none] (6) at (0, 1.5) {};
		\node [style=none] (7) at (0, -0.5) {};
		\node [style=none] (8) at (0, -1.5) {};
	\end{pgfonlayer}
	\begin{pgfonlayer}{edgelayer}
		\draw [qWire] (5.center) to (6.center);
		\draw [qWire] (8.center) to (7.center);
		\draw (1.center) to (2.center);
		\draw (2.center) to (3.center);
		\draw (3.center) to (4.center);
		\draw (4.center) to (1.center);
	\end{pgfonlayer}
\end{tikzpicture}
\ = \omega\
\begin{tikzpicture}
	\begin{pgfonlayer}{nodelayer}
		\node [style=none] (0) at (0, 0) {$
\widetilde{T}_2$};
		\node [style=none] (1) at (-0.5, 0.5) {};
		\node [style=none] (2) at (0.5, 0.5) {};
		\node [style=none] (3) at (0.5, -0.5) {};
		\node [style=none] (4) at (-0.5, -0.5) {};
		\node [style=none] (5) at (0, 0.5) {};
		\node [style=none] (6) at (0, 1.5) {};
		\node [style=none] (7) at (0, -0.5) {};
		\node [style=none] (8) at (0, -1.5) {};
	\end{pgfonlayer}
	\begin{pgfonlayer}{edgelayer}
		\draw [qWire] (5.center) to (6.center);
		\draw [qWire] (8.center) to (7.center);
		\draw (1.center) to (2.center);
		\draw (2.center) to (3.center);
		\draw (3.center) to (4.center);
		\draw (4.center) to (1.center);
	\end{pgfonlayer}
\end{tikzpicture}
\ + (1-\omega)\
 \begin{tikzpicture}
	\begin{pgfonlayer}{nodelayer}
		\node [style=none] (0) at (0, 0) {$
\widetilde{T}_3$};
		\node [style=none] (1) at (-0.5, 0.5) {};
		\node [style=none] (2) at (0.5, 0.5) {};
		\node [style=none] (3) at (0.5, -0.5) {};
		\node [style=none] (4) at (-0.5, -0.5) {};
		\node [style=none] (5) at (0, 0.5) {};
		\node [style=none] (6) at (0, 1.5) {};
		\node [style=none] (7) at (0, -0.5) {};
		\node [style=none] (8) at (0, -1.5) {};
	\end{pgfonlayer}
	\begin{pgfonlayer}{edgelayer}
		\draw [qWire] (5.center) to (6.center);
		\draw [qWire] (8.center) to (7.center);
		\draw (1.center) to (2.center);
		\draw (2.center) to (3.center);
		\draw (3.center) to (4.center);
		\draw (4.center) to (1.center);
	\end{pgfonlayer}
\end{tikzpicture}
\eeq
 then it must hold that
 \beq
\begin{tikzpicture}
	\begin{pgfonlayer}{nodelayer}
		\node [style=none] (0) at (0, 0) {$
\widetilde{T}_1$};
		\node [style=none] (1) at (-0.5, 0.5) {};
		\node [style=none] (2) at (0.5, 0.5) {};
		\node [style=none] (3) at (0.5, -0.5) {};
		\node [style=none] (4) at (-0.5, -0.5) {};
		\node [style=none] (5) at (0, 0.5) {};
		\node [style=none] (6) at (0, 1.5) {};
		\node [style=none] (7) at (0, -0.5) {};
		\node [style=none] (8) at (0, -1.5) {};
		\node [style=none] (13) at (1.25, -1.25) {\tiny $\hat{\xi}$};
		\node [style=none] (14) at (1.5, -1.5) {};
		\node [style=none] (15) at (1.5, 1.5) {};
		\node [style=none] (16) at (-1.5, 1.5) {};
		\node [style=none] (17) at (-1.5, -1.5) {};
		\node [style=none] (18) at (0, 2) {};
		\node [style=none] (19) at (0, -2) {};
	\end{pgfonlayer}
	\begin{pgfonlayer}{edgelayer}
			\filldraw[fill=purple!20,draw=purple!40](14.center) to (15.center) to (16.center) to (17.center) to cycle;
		\filldraw[fill=white,draw=black] (1.center) to (2.center) to (3.center) to (4.center) to cycle;
		\draw [qWire] (5.center) to (6.center);
		\draw [qWire] (8.center) to (7.center);
		\draw (18.center) to (6.center);
		\draw (8.center) to (19.center);
	\end{pgfonlayer}
\end{tikzpicture}
\ = \omega\
\begin{tikzpicture}
	\begin{pgfonlayer}{nodelayer}
		\node [style=none] (0) at (0, 0) {$
\widetilde{T}_2$};
		\node [style=none] (1) at (-0.5, 0.5) {};
		\node [style=none] (2) at (0.5, 0.5) {};
		\node [style=none] (3) at (0.5, -0.5) {};
		\node [style=none] (4) at (-0.5, -0.5) {};
		\node [style=none] (5) at (0, 0.5) {};
		\node [style=none] (6) at (0, 1.5) {};
		\node [style=none] (7) at (0, -0.5) {};
		\node [style=none] (8) at (0, -1.5) {};
		\node [style=none] (13) at (1.25, -1.25) {\tiny $\hat{\xi}$};
		\node [style=none] (14) at (1.5, -1.5) {};
		\node [style=none] (15) at (1.5, 1.5) {};
		\node [style=none] (16) at (-1.5, 1.5) {};
		\node [style=none] (17) at (-1.5, -1.5) {};
		\node [style=none] (18) at (0, 2) {};
		\node [style=none] (19) at (0, -2) {};
	\end{pgfonlayer}
	\begin{pgfonlayer}{edgelayer}
			\filldraw[fill=purple!20,draw=purple!40](14.center) to (15.center) to (16.center) to (17.center) to cycle;
		\filldraw[fill=white,draw=black] (1.center) to (2.center) to (3.center) to (4.center) to cycle;
		\draw [qWire] (5.center) to (6.center);
		\draw [qWire] (8.center) to (7.center);
		\draw (18.center) to (6.center);
		\draw (8.center) to (19.center);
	\end{pgfonlayer}
\end{tikzpicture}
\ +(1-\omega)\
\begin{tikzpicture}
	\begin{pgfonlayer}{nodelayer}
		\node [style=none] (0) at (0, 0) {$
\widetilde{T}_3$};
		\node [style=none] (1) at (-0.5, 0.5) {};
		\node [style=none] (2) at (0.5, 0.5) {};
		\node [style=none] (3) at (0.5, -0.5) {};
		\node [style=none] (4) at (-0.5, -0.5) {};
		\node [style=none] (5) at (0, 0.5) {};
		\node [style=none] (6) at (0, 1.5) {};
		\node [style=none] (7) at (0, -0.5) {};
		\node [style=none] (8) at (0, -1.5) {};
		\node [style=none] (13) at (1.25, -1.25) {\tiny $\hat{\xi}$};
		\node [style=none] (14) at (1.5, -1.5) {};
		\node [style=none] (15) at (1.5, 1.5) {};
		\node [style=none] (16) at (-1.5, 1.5) {};
		\node [style=none] (17) at (-1.5, -1.5) {};
		\node [style=none] (18) at (0, 2) {};
		\node [style=none] (19) at (0, -2) {};
	\end{pgfonlayer}
	\begin{pgfonlayer}{edgelayer}
			\filldraw[fill=purple!20,draw=purple!40](14.center) to (15.center) to (16.center) to (17.center) to cycle;
		\filldraw[fill=white,draw=black] (1.center) to (2.center) to (3.center) to (4.center) to cycle;
		\draw [qWire] (5.center) to (6.center);
		\draw [qWire] (8.center) to (7.center);
		\draw (18.center) to (6.center);
		\draw (8.center) to (19.center);
	\end{pgfonlayer}
\end{tikzpicture}.
\eeq
\end{enumerate}
\end{definition}

One can see that the only technical distinction between an ontological model of a GPT and a quasiprobabilistic model of a GPT is that in the latter, the probabilities are replaced by quasiprobabilities, which are allowed to go negative.

In analogy with the discussion at the end of Section~\ref{subsec:ont}, one has that GPT states on some system are represented in a quasiprobabilistic model by quasidistributions over the sample space associated with that system, that is, functions on $\Lambda$ normalised to $1$ but where the values can be negative, while GPT effects are represented by arbitrary real-valued functions over the sample space.

\section{Three equivalent notions of classicality} \label{threenotions}

The only ontological models that constitute good classical explanations are those that satisfy additional assumptions.
One such principle is that of (generalized) noncontextuality~\cite{Spekkens2005}. It was argued in Refs.~\cite{Spekkens2005,Pirsa_nc,Spekkens2008,schmid2019characterization}
that an ontological model of an operational theory should be deemed a good classical explanation only if it is noncontextual.
We now provide the definition of a noncontextual ontological model in the framework we have introduced here.

\begin{definition}[A noncontextual ontological model of an operational theory]
An ontological model of an operational theory \colorbox{black!30!BurntOrange!30}{$\xiNC:\Op\to\SubS$} satisfies the principle of {\em generalized noncontextuality} if and only if every two operationally equivalent procedures in the operational theory are mapped to the same substochastic map in the ontological model.
That is, if
\begin{equation} \tikzfig{Diagrams/32_NC1} \simeq \tikzfig{Diagrams/33_NC2} \implies \tikzfig{Diagrams/34_NC3} = \tikzfig{Diagrams/35_NC4}. \label{eq:nonctx}
\end{equation}
\end{definition}
Another way of stating this condition is that the map $\xiNC$ does not depend functionally on the context of any processes in the operational theory, so that for all $T := (\widetilde{T},c_{T})$ one has $\xiNC(T)=\xiNC(\widetilde{T})$.

Ontological models of GPTs (as we have defined them) cannot be said to be either generalized-contextual or generalized-noncontextual (in contrast to ontological models of operational theories, which can). This is because the domain of our notion of an ontological model of a GPT has no notion of a context on which the ontological representation could conceivably depend. (This was first pointed out in Ref.~\cite{schmid2019characterization}, and we explain it further in Appendix~\ref{contextsingpts}.)
However, Ref.~\cite{schmid2019characterization} showed (in the context of prepare-and-measure scenarios) that the principle of noncontextuality nonetheless induces a notion of classicality within the framework of GPTs: namely, the GPT is said to have a classical explanation if and only if it admits of an ontological model. (Not all GPTs admit of an ontological model, even if the operational theory from which they are obtained as a quotiented theory do. This is a consequence of the representational inflexibility resulting from the lack of contexts on which the representation might depend.\footnote{
Accordingly, the Beltrametti-Bujaski model~\cite{Beltrametti_1995} can be viewed as an ontological model of the single qubit subtheory {\em qua operational theory}, but not as an ontological model of the single qubit subtheory {\em qua GPT}.  This can be seen by noting that this model is explicitly contextual while the single qubit subtheory qua GPT has no contexts.  Equivalently, it can be seen by noting that  the single qubit subtheory qua GPT does not admit of {\em any} ontological model.  The same can be said of the $8$-state model of Ref.~\cite{8state} relative to the stabilizer qubit subtheory: it is an ontological model of the stabilizer qubit subtheory {\em qua operational theory} (a {\em contextual} ontological model) but not of the stabilizer qubit subtheory qua GPT.  The latter has no contexts and does not admit of any ontological model.}) We now extend this result (Theorem~$1$ of Ref.~\cite{schmid2019characterization}) from prepare-and-measure scenarios to arbitrary scenarios.

\begin{proposition}\label{thm:NCOMandOM}
There is a one-to-one correspondence between noncontextual ontological models of an operational theory, \colorbox{black!30!BurntOrange!30}{$\xiNC:\Op \to \SubS$}, and ontological models of the associated GPT, \colorbox{Red!20}{$\widetilde{\xi}:\widetilde{\Op}\to\SubS$}.
\end{proposition}
\begin{proof}[Proof sketch]
The idea of the proof is captured by the following diagram:
\[
\begin{tikzpicture}
	\begin{pgfonlayer}{nodelayer}
		\node [style=none] (0) at (-2, 1) {$\Op$};
		\node [style=none] (1) at (2.25, 1) {$\widetilde{\Op}$};
		\node [style=none] (2) at (0, -1.25) {$\SubS$};
		\node [style=none] (3) at (1.5, 1) {};
		\node [style=none] (4) at (-1.5, 1) {};
		\node [style=none] (5) at (-1.75, 0.5) {};
		\node [style=none] (6) at (-0.5, -0.75) {};
		\node [style=none] (7) at (1.75, 0.5) {};
		\node [style=none] (8) at (0.5, -0.75) {};
		\node [style=none] (9) at (0, 1.25) {$\sim$};
		\node [style=none] (10) at (-0.75, 0.25) {$\xiNC$};
		\node [style=none] (11) at (1.5, -0.25) {$\widetilde{\xi}$};
		\node [style=none] (12) at (-1.5, 1.25) {};
		\node [style=none] (13) at (1.5, 1.25) {};
		\node [style=none] (14) at (0, 2.25) {$C$};
	\end{pgfonlayer}
	\begin{pgfonlayer}{edgelayer}
		\draw [arrow plain] (5.center) to (6.center);
		\draw [arrow plain] (4.center) to (3.center);
		\draw [arrow plain] (7.center) to (8.center);
		\draw [arrow dashed, bend left=315] (13.center) to (12.center);
	\end{pgfonlayer}
\end{tikzpicture},
\]
where $C$ is defined as a  map, which is not diagram-preserving (hence the dashed arrow), and  which takes any process $\widetilde{T}$ in the GPT $\widetilde{\Op}$ to some process $T= (\widetilde{T},c_{T})$ in the operational theory. There always exists at least one such map $C$ (in general, there exist many), and all of these satisfy ${\sim} \circ C = {\rm Id}$ (in general, no choice of $C$ will satisfy  $ C \circ {\sim} = {\rm Id}$).

Now, consider an operational theory $\Op$ and the GPT $\widetilde{\Op}$ it defines.

Given an ontological model $\widetilde{\xi}$ of $\widetilde{\Op}$, one can define a noncontextual model $\xiNC$ of $\Op$ via $\xiNC :=   \widetilde{\xi} \circ {\sim}$. The map constructed in this manner cannot depend on the contexts of processes in the operational theory, since these are removed by the quotienting map $\sim$. As such, the map $\xiNC$ necessarily satisfies Eq.~\eqref{eq:nonctx}, and hence is indeed noncontextual.

Given a noncontextual ontological model $\xiNC$ of $\Op$, one can define an ontological model $\widetilde{\xi}$ of $\widetilde{\Op}$ via $\widetilde{\xi}:=\xiNC \circ C$. Because the map $\xiNC$ does not depend on the context, the map constructed in this manner does not depend on the choice of $C$, and is unique.

For completeness, we prove in Appendix~\ref{comptheproof} that $\xiNC :=   \widetilde{\xi} \circ {\sim}$ indeed satisfies the relevant constraints to be an ontological model of an operational theory, and similarly, that $\widetilde{\xi}:=\xiNC \circ C$ satisfies the relevant constraints to be a valid ontological model of a GPT.
\end{proof}

Finally, we note that this notion of classicality of a GPT is closely linked to the positivity of quasiprobabilistic models. This result can be seen as an extension of the equivalence in Ref.~\cite{Spekkens2008} from the prepare-and-measure scenario to arbitrary compositional scenarios.

\begin{definition}[Positive quasiprobabilistic model of a tomographically local GPT] A positive quasiprobabilistic model of a tomographically local GPT $\widetilde{\Op}$ is a quasiprobabilistic model \colorbox{black!30!purple!30}{$\hat{\xi}^+:\widetilde{\Op}\to\QSS$} in which all of the matrix elements of the quasisubstochastic maps in the image of $\hat{\xi}^+$ are positive, that is, if and only if all of the quasisubstochastic maps in the image of $\hat{\xi}^+$ are substochastic.
\end{definition}

Simply by examining the definitions, it is clear that a {\em positive} quasiprobabilistic model \colorbox{black!30!purple!30}{$\hat{\xi}^+:\widetilde{\Op}\to\QSS$} of a GPT is equivalent to an ontological model \colorbox{Red!20}{$\widetilde{\xi}:\widetilde{\Op}\to\SubS$} of that GPT. It follows that:
\begin{proposition}\label{prop:PQRandOM}
There exists a positive quasiprobabilistic model of a GPT $\widetilde{\Op}$ if and only if there exists an ontological model of $\widetilde{\Op}$.
\end{proposition}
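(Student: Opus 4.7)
The plan is to observe that the proposition is essentially a tautology once one unpacks the definitions, so the proof will amount to verifying that the inclusion $\SubS \subset \QSS$ provides the desired bijection at the level of maps out of $\widetilde{\Op}$.

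First I would note that by construction $\SubS$ is a sub-process-theory of $\QSS$: both have the same systems (vector spaces of the form $\mathds{R}^\Lambda$) and the same composition rules, and substochastic maps are precisely those quasisubstochastic maps whose entries lie in $[0,1]$ rather than in $\mathds{R}$. Thus there is a canonical diagram-preserving inclusion $\iota: \SubS \hookrightarrow \QSS$.

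For the ``if'' direction, given any ontological model $\widetilde{\xi}: \widetilde{\Op} \to \SubS$, I would define $\hat{\xi}^+ := \iota \circ \widetilde{\xi}: \widetilde{\Op} \to \QSS$. This composite is diagram-preserving (composition of diagram-preserving maps), and the three conditions in Definition~\ref{def:quasi} (deterministic effect preservation, empirical adequacy, and preservation of convex/coarse-graining relations) are inherited directly from the corresponding conditions in Definition~\ref{defnontgpt}, since $\iota$ acts as the identity on the underlying maps. Since the image of $\hat{\xi}^+$ is contained in $\iota(\SubS) \subset \QSS$, all the quasisubstochastic maps are in fact substochastic, so $\hat{\xi}^+$ is positive by definition.

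For the ``only if'' direction, given a positive quasiprobabilistic model $\hat{\xi}^+: \widetilde{\Op} \to \QSS$, the positivity assumption precisely asserts that every process in the image is a substochastic (rather than merely quasisubstochastic) map. Hence the image lies in $\SubS \subset \QSS$, and one can corestrict $\hat{\xi}^+$ to a map $\widetilde{\xi}: \widetilde{\Op} \to \SubS$. Diagram preservation is preserved by corestriction along the sub-process-theory inclusion $\SubS \subset \QSS$ (wirings and parallel/sequential compositions of substochastic maps are again substochastic, and agree with the $\QSS$ compositions). The three defining conditions of an ontological model of a GPT again transfer verbatim from those of the quasiprobabilistic model. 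Finally, these two constructions are mutually inverse, since both reduce to the identity on the underlying assignment $\widetilde{T} \mapsto \widetilde{\xi}(\widetilde{T})$, giving the claimed one-to-one correspondence. There is no substantive obstacle here; the only thing worth checking carefully is that ``positivity'' in Definition~\ref{def:quasi} coincides with ``lies in $\SubS$,'' which is immediate from the fact that the only distinction between $\SubS$ and $\QSS$ is the positivity constraint on the matrix entries.
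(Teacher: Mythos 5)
Your proposal is correct and matches the paper's approach: the paper simply observes that, by inspection of the definitions, a positive quasiprobabilistic model of a GPT is the same thing as an ontological model of that GPT (the sole difference between $\QSS$ and $\SubS$ being the positivity/substochasticity constraint), and states the proposition as an immediate consequence. Your more explicit verification via the inclusion $\SubS\hookrightarrow\QSS$ and corestriction is just a careful spelling-out of that same observation.
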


 Although Proposition~\ref{prop:PQRandOM} follows immediately from the relevant definitions, we have nonetheless highlighted it here. This is because a {\em generic} quasiprobabilistic model of a GPT has no meaningful conceptual relationship to an ontological model of a GPT, and so it is conceptually important to understand in what special cases the two notions coincide. Furthermore, we hope that highlighting this fact will encourage more dialogue between those researchers studying quasiprobabilistic models and those studying ontological models. 

Putting  Props.~\ref{thm:NCOMandOM} and~\ref{prop:PQRandOM} together, one has that:
\begin{corollary}[Three equivalent notions of classicality]\label{cor:three}
Let $\Op$ be an operational theory and $\widetilde{\Op}$ the GPT obtained from $\Op$ by quotienting. Then, the following are equivalent:\\
(i) There exists a noncontextual ontological model of $\Op$, \colorbox{black!30!BurntOrange!30}{$\xiNC:\Op\to\SubS$}. \\
(ii) There exists an ontological model (a.k.a. simplex embedding) of $\widetilde{\Op}$, \colorbox{Red!20}{$\widetilde{\xi}:\widetilde{\Op}\to\SubS$}.
(iii) There exists a positive quasiprobabilistic model of $\widetilde{\Op}$, \colorbox{black!30!purple!30}{$\hat{\xi}^+:\widetilde{\Op}\to\QSS$}.
\end{corollary}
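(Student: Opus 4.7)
The plan is to prove the corollary by simply chaining together the two immediately preceding results: Theorem~\ref{thm:NCOMandOM} establishes the equivalence (i)$\Leftrightarrow$(ii), and Proposition~\ref{prop:PQRandOM} establishes the equivalence (ii)$\Leftrightarrow$(iii). So the content of the corollary reduces to noting that equivalence is transitive: (i)$\Leftrightarrow$(ii)$\Leftrightarrow$(iii), hence any one of the three statements implies (and is implied by) any other.

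Slightly more concretely, I would structure the proof as three explicit constructions for the implication chain (i)$\Rightarrow$(ii)$\Rightarrow$(iii)$\Rightarrow$(i). Starting from a noncontextual ontological model $\xiNC : \Op \to \SubS$, I would invoke the construction from the proof of Theorem~\ref{thm:NCOMandOM}, namely $\widetilde{\xi} := \xiNC \circ C$, where $C$ is any section of the quotienting map $\sim$; because $\xiNC$ does not depend on context, the result is independent of the choice of $C$ and yields a well-defined ontological model of $\widetilde{\Op}$. Next, given such an ontological model $\widetilde{\xi}: \widetilde{\Op} \to \SubS$, I would obtain a positive quasiprobabilistic model $\hat{\xi}^+ := \iota \circ \widetilde{\xi}$ by composing with the inclusion $\iota: \SubS \hookrightarrow \QSS$, which is evidently diagram-preserving and whose image consists (by construction) of substochastic maps; this gives (ii)$\Rightarrow$(iii), the content of one direction of Proposition~\ref{prop:PQRandOM}. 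Finally, from a positive quasiprobabilistic model $\hat{\xi}^+: \widetilde{\Op} \to \QSS$, the positivity assumption means its image actually lies in $\SubS \subset \QSS$, so it corestricts to an ontological model $\widetilde{\xi}$ of $\widetilde{\Op}$; composing with the quotienting map yields $\xiNC := \widetilde{\xi} \circ {\sim}$, which is a noncontextual ontological model of $\Op$ by the other direction of Theorem~\ref{thm:NCOMandOM}.

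Since Theorem~\ref{thm:NCOMandOM} and Proposition~\ref{prop:PQRandOM} have already been proven (or in the case of the Proposition, essentially follows by inspection of the definitions), there is no substantive obstacle here: the corollary is a formal consequence. The one thing worth checking carefully is that the constructions are mutually inverse on the nose (not merely up to some equivalence), so that "one-to-one correspondence" is justified when combining the bijection of Theorem~\ref{thm:NCOMandOM} with the trivial bijection between ontological models of $\widetilde{\Op}$ and positive quasiprobabilistic models of $\widetilde{\Op}$. This is immediate: the map $\SubS \hookrightarrow \QSS$ is injective and its image is precisely the positive sub-theory of $\QSS$, so positive quasiprobabilistic models and ontological models of $\widetilde{\Op}$ are literally the same data presented in two notations. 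Hence the triple correspondence is bijective, and no further verification is needed beyond what has been established in the cited results.
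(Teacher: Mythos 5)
Your proposal is correct and follows exactly the route the paper takes: Corollary~\ref{cor:three} is obtained simply by chaining Theorem~\ref{thm:NCOMandOM} (equivalence of (i) and (ii)) with Proposition~\ref{prop:PQRandOM} (equivalence of (ii) and (iii)), with the explicit constructions $\widetilde{\xi}:=\xiNC\circ C$, $\xiNC:=\widetilde{\xi}\circ{\sim}$, and the identification of positive quasiprobabilistic models with ontological models of $\widetilde{\Op}$ being precisely those used in the paper. Your added remark that the (ii)$\leftrightarrow$(iii) correspondence is the trivial inclusion/corestriction between $\SubS$ and the positive subtheory of $\QSS$ matches the paper's observation that this is ``clear simply by examining the definitions.''
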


This generalizes the results of Refs.~\cite{Spekkens2008,schmid2019characterization,shahandeh2019contextuality} from prepare-measure scenarios to arbitrary compositional scenarios.

\section{Structure theorem} \label{mainresults}

With this framework in place, we can prove our main results.
We start with a general theorem, leveraging the fact that $\widetilde{\Op} \subset \RL$, as stated in Theorem~\ref{OpinRL}.
We then specialize to the various physically relevant cases.

\begin{theorem}\label{mainthm}
Any convex-linear, empirically adequate and diagram-preserving map (Eq.~\eqref{eq:DPM}), \colorbox{green!20}{$M:\widetilde{\Op}\to \RL$}, where $\widetilde{\Op}$ is tomographically local can be represented as
\begin{equation}\label{eq:repTrans}
\tikzfig{Diagrams/50_Proof2}\quad = \quad \tikzfig{Diagrams/42_QuasiRep2},
\end{equation}
where for each system $A$, $\chi_A:A\to V_A$ is a  invertible linear map within $\RL$.  Moreover, the $\chi_A$ are uniquely determined by Eq.~\eqref{eq:repTrans}.
\label{thm:structure}
\end{theorem}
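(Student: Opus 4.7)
The plan is to leverage the tomographic locality of $\widetilde{\Op}$ via Corollary~\ref{corolTL}, which lets us decompose every process as a linear combination of ``measure-and-prepare'' operations built from a chosen spanning set of states $\{\widetilde{P}_j^A\}_{j=1}^{m^A}$ and effects $\{\widetilde{E}_i^A\}_{i=1}^{m^A}$. Since diagram preservation plus convex-linearity of $M$ imply that $M$ respects the summation/composition structure (Lemma~\ref{bilinearityprf} and Eq.~\eqref{sumdistributes}), $M$ commutes with such decompositions. Empirical adequacy means that closed diagrams are preserved: $M(\widetilde{E}) \circ M(\widetilde{P}) = \widetilde{E}\circ\widetilde{P}$ for every GPT state $\widetilde{P}$ and effect $\widetilde{E}$. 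The whole proof is then an attempt to bootstrap these three facts into the representation formula~\eqref{eq:repTrans}.

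The first step is to define, for each system $A$, a candidate linear map $\chi_A : A \to V_A$, where $V_A$ is the system in $\RL$ that $M$ assigns to $A$. Since the states $\{\widetilde{P}_j^A\}_j$ span $A$ (as a vector space in the representation of Theorem~\ref{OpinRL}), I can define $\chi_A$ uniquely by $\chi_A \circ \widetilde{P}_j^A := M(\widetilde{P}_j^A)$ for each $j$ and extending by linearity. The next step is to show this $\chi_A$ is invertible. The key observation is that by empirical adequacy,
\[
[\mathbf{N}_{\widetilde{\mathds{1}}_A}]_j^k \;=\; \widetilde{E}_k^A\circ\widetilde{P}_j^A \;=\; M(\widetilde{E}_k^A)\circ M(\widetilde{P}_j^A),
\]
and by Lemma~\ref{lemmadecomp} the matrix $\mathbf{N}_{\widetilde{\mathds{1}}_A}$ is invertible (of full rank $m^A$). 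This forces $\{M(\widetilde{P}_j^A)\}_j$ to be linearly independent in $V_A$ (and likewise for the effects), so $\chi_A$ is injective. Taking $V_A$ to be the span of the image essentially forces $\chi_A$ to be an isomorphism onto its image, and by the same full-rank argument applied to the effects, one gets $M(\widetilde{E}_i^A) = \widetilde{E}_i^A \circ \chi_A^{-1}$: indeed, composing this ansatz with $\chi_A \circ \widetilde{P}_j^A = M(\widetilde{P}_j^A)$ recovers the preserved scalars, and the invertibility of $\mathbf{N}_{\widetilde{\mathds{1}}_A}$ ensures this is the unique such assignment.

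The third step is to apply the decomposition of Corollary~\ref{corolTL} to an arbitrary process $\widetilde{T}: A\to B$,
\[
\widetilde{T} \;=\; \sum_{il} [\mathbf{M}_{\widetilde{T}}]_i^l \;\widetilde{P}_l^B \circ \widetilde{E}_i^A,
\]
and push $M$ through the sum using convex-linearity (extended to arbitrary real linear combinations via Eq.~\eqref{linearityconst} and Lemma~\ref{bilinearityprf}) and through the sequential composition using diagram preservation:
\[
M(\widetilde{T}) \;=\; \sum_{il} [\mathbf{M}_{\widetilde{T}}]_i^l \; \chi_B\!\circ\!\widetilde{P}_l^B \circ \widetilde{E}_i^A\!\circ\!\chi_A^{-1} \;=\; \chi_B \circ \widetilde{T} \circ \chi_A^{-1},
\]
which is exactly Eq.~\eqref{eq:repTrans}. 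The analogous argument handles systems with trivial input or output by interpreting $\chi_{I}$ as the identity on the trivial system (scalars are already preserved by empirical adequacy). For composite systems $A\otimes C$, one uses that tomographic locality requires the spanning states/effects to be product; diagram preservation then yields $\chi_{A\otimes C} = \chi_A \otimes \chi_C$ automatically, so no separate argument is needed.

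The main obstacle I anticipate is the bookkeeping around convex-linearity versus genuine $\mathds{R}$-linearity: the hypothesis gives us closure under convex mixtures and coarse-grainings of processes inside $\widetilde{\Op}$, but the decomposition in Corollary~\ref{corolTL} uses arbitrary real coefficients that generically do not correspond to physical processes. This is handled by working in the vector representation of Section 3 (where Lemma~\ref{bilinearityprf} guarantees bilinearity of composition) and invoking Eq.~\eqref{linearityconst} to lift convex-linearity to full $\mathds{R}$-linearity on the spans; once this is in place, the rest of the argument is mostly rearranging composites. A subsidiary subtlety is ensuring that the definition of $\chi_A$ is independent of the chosen spanning set $\{\widetilde{P}_j^A\}$ -- this follows because any two choices are related by an invertible change of basis and $M$ is linear on the span, so the resulting linear maps agree on all of $A$.
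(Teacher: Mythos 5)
Your overall route is the same as the paper's: decompose an arbitrary process via Corollary~\ref{corolTL} into measure-and-prepare channels, lift convex-linearity to full $\mathds{R}$-linearity on spans, push $M$ through the sum and through sequential composition, and identify the restriction of $M$ to states (resp.\ effects) with a linear map $\chi_A$ (resp.\ its would-be inverse). That skeleton is sound, and your handling of the convex-linearity versus $\mathds{R}$-linearity issue is fine.

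There is, however, one genuine gap: you never establish that $\chi_A$ is invertible \emph{as a map onto $V_A$}, which is what the theorem asserts. Your full-rank argument with $\mathbf{N}_{\widetilde{\mathds{1}}_A}$ only gives injectivity of $\chi_A$, i.e.\ that it is an isomorphism onto its image. The codomain $V_A$ is fixed by $M$ (it is whatever object of $\RL$ the map assigns to $A$), so ``taking $V_A$ to be the span of the image'' is a redefinition you are not entitled to make; a priori $\dim V_A$ could exceed $\dim A$, in which case $\chi_A^{-1}$ is not defined on all of $V_A$ and your ansatz $M(\widetilde{E}_i^A)=\widetilde{E}_i^A\circ\chi_A^{-1}$ is not uniquely pinned down by the preserved scalars (the values of the functional $M(\widetilde{E}_i^A)$ off the image of $\chi_A$ are unconstrained by your argument). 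The missing ingredient is the instance of diagram preservation for the identity process, $M(\widetilde{\mathds{1}}_A)=\mathds{1}_{V_A}$: applying your decomposition to $\widetilde{\mathds{1}}_A$ yields $\mathds{1}_{V_A}=\chi_A\circ\phi_A$ for the effect-representing map $\phi_A$, which gives surjectivity of $\chi_A$ onto $V_A$ (hence $\dim V_A=\dim A$), while empirical adequacy on prepare-measure diagrams plus tomography gives $\phi_A\circ\chi_A=\mathds{1}_A$; together these show $\phi_A=\chi_A^{-1}$ genuinely. This is exactly the third step of the paper's proof, and it is not cosmetic: the surjectivity is what powers Corollary~\ref{cor:dimbound}, so an argument that only produces an isomorphism onto the image would lose the dimension bound.
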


Note that we have colored the linear maps $\chi_A$ to make it immediately apparent that they came from the associated diagram-preserving map.

The proof consists of three main arguments, provided explicitly in Appendix~\ref{mainproof} and sketched here.

First, we leverage tomographic locality of the GPT, as well as convex-linearity and diagram preservation of the map, to prove that one can represent the action of the map on a generic process in terms of its action on states and effects
\begin{equation}
	\tikzfig{Diagrams/50_Proof2}\ =\ \tikzfig{Diagrams/51_Proof3}\ = \sum_{ij} r_{ij}\ \tikzfig{Diagrams/55_Proof7}.
\label{step1}
\end{equation}

Second,
using convex-linearity of the map, we prove that one can represent the action of $M$ on states and effects simply as some linear maps within $\RL$; that is,
\begin{equation}
\tikzfig{Diagrams/63_Proof15}\ =\ \tikzfig{Diagrams/64_Proof16}
\qquad\text{and}\qquad
\tikzfig{Diagrams/65_Proof17}\ =\ \tikzfig{Diagrams/66_Proof18},
\label{step2}
\end{equation}
 which relies on the isomorphism between vectors (or covectors) in $V$ and linear maps from $\mathds{R}$ to $V$ (resp. $V$ to $\mathds{R}$).  Note that $\chi_B$ and $\phi_A$ are uniquely fixed by Eq.~\eqref{step2}, which means that there can be no other choice made for the $\chi_A$ appearing in Eq.~\eqref{eq:repTrans}.

Next, we leverage empirical adequacy, that 
\begin{equation}
\tikzfig{Diagrams/73b_oneM} =\tikzfig{Diagrams/25_empadG}
\end{equation}
for all $\widetilde{P}$ and $\widetilde{E}$, together with the fact that they span the vector space and dual, to show that 
\begin{equation}
\tikzfig{Diagrams/75_Proof27}\ =\ \ \tikzfig{Diagrams/76_Proof28};
\label{step3a}
\end{equation}
that is, that $\phi_A$ is the left inverse of $\chi_A$.

Finally we consider the representation of the identity as, 
\begin{equation}
\tikzfig{Diagrams/70_Proof22}\ =\ \tikzfig{Diagrams/69_Proof21},
\end{equation}
which is a consequence of diagram preservation,
 to prove that,
\begin{equation}
\tikzfig{Diagrams/77_DimBound1}\ =\ \ \tikzfig{Diagrams/78_DimBound2},
\label{step3b}
\end{equation}
which means that $\phi_A$ is also the right inverse of $\chi_A$, and hence that it is unique such that we can write $\phi_A = \chi_A^{-1}$. 

This shows that the only freedom in the representation is in representation of the states, via the choice of linear maps $\chi_S$, of the theory; after specifying these, one can uniquely extend to the representation of arbitrary processes. It also shows that the representation $M$ is necessarily invertible as we can always define the inverse of $M$ by using the inverses of the $\chi$'s.

One key consequence of this result is the following corollary, whose significance we investigate in Section~\ref{subsec:conseq}.

\begin{corollary}
The dimension of the codomain, $V_A$ of the  map $\chi_A$ is given by the dimension of the GPT vector space $A$.
\label{cor:dimbound}
\end{corollary}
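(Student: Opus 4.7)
The plan is to extract the corollary essentially for free from Theorem~\ref{mainthm}, and then invoke standard linear algebra to conclude. Specifically, the third step of the proof of Theorem~\ref{mainthm} (the two diagrammatic equations displayed in \eqref{step3}) establishes that the linear map $\chi_A: A \to V_A$ in $\RL$ admits a two-sided linear inverse; call it $\chi_A^{-1}: V_A \to A$. That is, we have $\chi_A^{-1} \circ \chi_A = \id_A$ and $\chi_A \circ \chi_A^{-1} = \id_{V_A}$, both equalities holding as identities in $\RL$. Thus $\chi_A$ is a linear isomorphism between the real vector spaces $A$ and $V_A$.

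From here the corollary is immediate: the existence of an invertible linear map between two finite-dimensional real vector spaces forces their dimensions to agree, so $\dim(V_A) = \dim(A)$. By Theorem~\ref{OpinRL}, the GPT vector space associated with the system $A$ is $\mathds{R}^{m^A}$, where $m^A$ is the minimal cardinality of a tomographically complete set of testers for $A$; this dimension $m^A$ is precisely what is meant by ``the dimension of the GPT vector space $A$''. Hence $\dim(V_A) = m^A$, as claimed.

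The only part of the argument that is not completely trivial is the verification that the two equations of \eqref{step3} genuinely amount to a two-sided inverse in $\RL$ (as opposed to, say, a one-sided inverse or an inverse only on the image of some larger map); but this is already handled inside the proof of Theorem~\ref{mainthm}, since the first equation of \eqref{step3} reads $\chi_A^{-1} \circ \chi_A = \id_A$ and the second reads $\chi_A \circ \chi_A^{-1} = \id_{V_A}$, as linear-map equalities on the full vector spaces $A$ and $V_A$ respectively. There is no genuine obstacle beyond unpacking this notation, so the proof can be given in a single short paragraph immediately following Theorem~\ref{mainthm}.
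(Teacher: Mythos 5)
Your proof is correct and is essentially the paper's own argument: the paper likewise notes that $\chi_A$ is invertible (as established in the proof of Theorem~\ref{mainthm}), so its domain and codomain must have equal dimension, and the domain is the GPT vector space. The extra care you take in confirming that the inverse is two-sided matches what the theorem already guarantees, so there is no gap.
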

\proof
The linear map $\chi_A$ is invertible, so the dimension of its domain and of its codomain must be equal, and its domain is the GPT vector space.
\endproof

Note that the proof of the structure theorem and this subsequent corollary do not require the full generality of diagram preservation, only the (mathematically) much weaker conditions that:
\begin{equation}
	\tikzfig{Diagrams/70_Proof22}\ =\ \tikzfig{Diagrams/69_Proof21}
	\quad \text{,}\quad
	\tikzfig{Diagrams/54_Proof6}\ =\ \begin{tikzpicture}
	\begin{pgfonlayer}{nodelayer}
		\node [style=none] (0) at (0, 2) {};
		\node [style=none] (1) at (0, -2) {};
		\node [style=none] (2) at (0.75, -1.75) {\tiny$M$};
		\node [style=none] (3) at (0, 2.75) {};
		\node [style=none] (4) at (0, -2.75) {};
		\node [style=none] (5) at (-1, -0.25) {};
		\node [style=none] (6) at (1, -0.25) {};
		\node [style=none] (7) at (1, -2) {};
		\node [style=none] (8) at (-1, -2) {};
		\node [style=copoint] (9) at (0, -1.25) {$\widetilde{E}_j$};
		\node [style=point] (10) at (0, 1.25) {$\widetilde{P}_i$};
		\node [style=none] (11) at (0.75, 0.5) {\tiny$M$};
		\node [style=none] (12) at (-1, 2) {};
		\node [style=none] (13) at (-1, 0.25) {};
		\node [style=none] (14) at (1, 0.25) {};
		\node [style=none] (15) at (1, 2) {};
	\end{pgfonlayer}
	\begin{pgfonlayer}{edgelayer}
		\filldraw[fill=green!20,draw=green!40](5.center) to (6.center) to (7.center) to (8.center) to cycle;
		\filldraw[fill=green!20,draw=green!40](12.center) to (13.center) to (14.center) to (15.center) to cycle;
		\draw (3.center) to (0.center);
		\draw (1.center) to (4.center);
		\draw [style=qWire] (9) to (1.center);
		\draw [style=qWire] (10) to (0.center);
	\end{pgfonlayer}
\end{tikzpicture}
,
    \quad \text{and}\quad
    \tikzfig{Diagrams/73b_oneM}\ =\ \tikzfig{Diagrams/73_Proof25}.
\end{equation}
We will give justifications of these (for the case of ontological models and quasiprobabilistic models) in Sec.~\ref{revisassump}, and will
discuss further
consequences of general diagram preservation in Sec.~\ref{onticseparability}.

\subsection{Diagram-preserving quasiprobabilistic models are exact frame representations}

As mentioned in the introduction, $\SubS$ and $\QSS$ are subprocess theories of $\RL$, $\SubS\subset \QSS \subset \RL$. This implies that our main theorem applies to these special cases. The fact that the codomain is restricted can then equivalently be expressed as a constraint on the linear maps $\chi_A$. In the case of quasiprobabilistic representations we obtain:
\begin{proposition} \label{qrepngptstruct}
Any diagram-preserving quasiprobabilistic model of a tomographically local GPT can be written as
\begin{equation}
  \tikzfig{Diagrams/41_QuasiRep1}\quad = \quad \tikzfig{Diagrams/42b_QuasiRep2hat}\label{eq:quasichi}
\end{equation}
for invertible linear maps $\{\chi_S:S\to \mathds{R}^{\Lambda_S}\}$ within $\RL$ for each system, where these satisfy
\begin{equation}\label{eq:quasidiscard}
\tikzfig{Diagrams/43b_QuasiRep3hat}\quad =\quad  \tikzfig{Diagrams/44_QuasiRep4}.
\end{equation}
\end{proposition}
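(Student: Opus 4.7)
The plan is to derive this as a direct specialization of Theorem~\ref{mainthm}, using the inclusion $\QSS \subset \RL$ and then extracting the extra constraint from the deterministic-effect-preservation condition in the definition of a quasiprobabilistic model.

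First I would observe that any diagram-preserving quasiprobabilistic model $\hat{\xi} : \widetilde{\Op} \to \QSS$ is, in particular, a convex-linear, empirically adequate, diagram-preserving map into $\RL$, because $\QSS$ is a sub-process-theory of $\RL$ (and convex-linearity and empirical adequacy are part of Definition~\ref{def:quasi}). Hence Theorem~\ref{mainthm} applies and yields invertible linear maps $\chi_A : A \to V_A$ in $\RL$ for each GPT system $A$, such that the representation of every process factors as in Eq.~\eqref{eq:repTrans}. Because the codomain of $\hat{\xi}$ is $\QSS$, and systems in $\QSS$ are by definition vector spaces of the form $\mathds{R}^{\Lambda_A}$, we have $V_A = \mathds{R}^{\Lambda_A}$, which is exactly the form asserted in Eq.~\eqref{eq:quasichi}.

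Next I would derive the constraint \eqref{eq:quasidiscard}. By assumption, the operational theory has a unique deterministic effect on each system, and the quotienting map sends it to the unique GPT deterministic effect $\begin{tikzpicture}\node [style=upground] (0) at (0,0) {};\node [style=none] (1) at (0,-0.4) {};\draw[qWire] (0) to (1.center);\end{tikzpicture}$. Condition~1 in Definition~\ref{def:quasi} requires that $\hat{\xi}$ send this GPT deterministic effect to the marginalization effect in $\QSS$ (i.e.\ $\mathbf{1}$). Applying the structure formula \eqref{eq:quasichi} to this effect yields the marginalization map written as $\mathbf{1} \circ \chi_A$, which precisely gives Eq.~\eqref{eq:quasidiscard}.

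Conversely, and for completeness, I would note that these two ingredients are exhaustive: the factorized form \eqref{eq:quasichi} built from any invertible linear maps $\chi_A$ satisfying \eqref{eq:quasidiscard} automatically defines a diagram-preserving quasiprobabilistic model, so no additional conditions on the $\chi_A$ are needed. I do not expect a serious obstacle here: all the nontrivial work -- the tomographic-locality argument, the convex-linear reduction from generic processes to states and effects, and the identification of the two linear maps as mutual inverses via the identity process -- has already been done in Theorem~\ref{mainthm}. The only care required is bookkeeping to confirm that the inclusion $\QSS \hookrightarrow \RL$ pulls the structural decomposition back into exactly the form stated, and that deterministic-effect preservation is the sole extra condition separating diagram-preserving maps into $\RL$ from diagram-preserving maps into $\QSS$ (beyond the implicit choice $V_A = \mathds{R}^{\Lambda_A}$).
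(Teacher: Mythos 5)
Your proof is correct and follows essentially the same route as the paper: apply Theorem~\ref{thm:structure} to $\hat{\xi}$ viewed as a convex-linear, empirically adequate, diagram-preserving map into $\RL$ (via $\QSS\subset\RL$), read off $V_A=\mathds{R}^{\Lambda_A}$ from the codomain, and combine the resulting factorization of the deterministic effect with the normalization condition of Definition~\ref{def:quasi} to get Eq.~\eqref{eq:quasidiscard}. One caveat on your closing aside only: the converse is not automatic from invertibility and \eqref{eq:quasidiscard} alone --- to recover full diagram preservation one must additionally impose $\chi_{AB}=\chi_A\otimes\chi_B$, as the paper notes in its section on converses --- but since the proposition asserts only the forward direction, this does not affect your proof.
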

\proof
Since $\hat{\xi}$ satisfies the requirements of Theorem~\ref{thm:structure} we immediately obtain Eq.~\eqref{eq:quasichi}. For the particular case of the deterministic effect, we have that
\begin{equation} \tikzfig{Diagrams/22c_Ont2_hat_nolbl}\quad =\quad \tikzfig{Diagrams/chi_inv_ground}. \end{equation}
Recall that, by definition, a quasiprobabilistic model satisfies Eq.~\eqref{eq:quasinorm}:
\begin{equation} \tikzfig{Diagrams/22c_Ont2_hat_nolbl}\quad =\quad \tikzfig{Diagrams/23b_Ont3_nolbl}. \end{equation}
Combining these gives
\begin{equation}\label{eq:simplexEmbed1} \tikzfig{Diagrams/23b_Ont3_nolbl}\quad =\quad \tikzfig{Diagrams/chi_inv_ground}.\end{equation}
Composing both sides of this with $\chi_S$ gives Eq.~\eqref{eq:quasidiscard}.
\endproof

The extra constraint of Eq.~\eqref{eq:quasidiscard} is not part of the general structure theorem because an abstract vector space does not have a natural notion of discarding. Such a privileged notion is found within, for example, $\mathbf{SubStoch}$, as the all ones covector, which represents marginalization.\footnote{Even within vector spaces with an associated physical interpretation of vectors as processes, there is not always a unique discarding map; for example, in the vector space of quantum channels, applying the channel to {\em any} fixed input state and tracing the output constitutes a discarding operation.}

Since the $\chi_S$ are just invertible linear maps, this map can be seen as merely transforming from one representation of the GPT to another. Critically, however, one must note that the vector spaces in $\QSS$ are all of the form $\mathds{R}^\Lambda$, and so they come equipped with extra structure---namely, a preferred basis and dual basis. Hence, these representations are effectively singling out this preferred basis for the GPT.

To see this more explicitly, denote the preferred basis and cobasis for $\mathds{R}^\Lambda$ as 
\beq
\left\{\begin{tikzpicture}
	\begin{pgfonlayer}{nodelayer}
		\node [style=point] (0) at (0, -.25) {$\lambda$};
		\node [style=none] (1) at (0, .75) {};
		\node [style=right label] (2) at (0, 0.5) {$\mathds{R}^\Lambda$};
	\end{pgfonlayer}
	\begin{pgfonlayer}{edgelayer}
		\draw (0) to (1.center);
	\end{pgfonlayer}
\end{tikzpicture}\right\}_{\lambda\in\Lambda}
\quad\text{and}\quad
\left\{\begin{tikzpicture}
	\begin{pgfonlayer}{nodelayer}
		\node [style=copoint] (0) at (0, .25) {$\lambda$};
		\node [style=none] (1) at (0, -.75) {};
		\node [style=right label] (2) at (0, -0.75) {$\mathds{R}^\Lambda$};
		\node [style=none] (3) at (0, -0.25) {};
	\end{pgfonlayer}
	\begin{pgfonlayer}{edgelayer}
		\draw (0) to (1.center);
	\end{pgfonlayer}
\end{tikzpicture}\right\}_{\lambda\in\Lambda}\quad\text{respectively,}
\eeq
which, in particular, means we can decompose identities as
\begin{equation}
	\tikzfig{Diagrams/69_Proof21}\  = \sum_{\lambda \in \Lambda} \tikzfig{Diagrams/lamlam}.
\end{equation}
This means that, for any system $A$ in the GPT, we can write $\chi_A$ as:
\beq\label{eq:frameDecomp}
\tikzfig{Diagrams/61_Proof13Qrepn}\ =
\sum_{\lambda\in\Lambda_A}
\begin{tikzpicture}
	\begin{pgfonlayer}{nodelayer}
		\node [style=none] (0) at (0, -1) {};
		\node [style=small box, fill={purple!20}] (1) at (0, 0) {$\chi_A$};
		\node [style=copoint] (2) at (0, 1.25) {$\lambda$};
		\node [style=none] (4) at (0, 3.75) {};
		\node [style=point] (5) at (0, 2.75) {$\lambda$};
	\end{pgfonlayer}
	\begin{pgfonlayer}{edgelayer}
		\draw [qWire] (0.center) to (1);
		\draw (1) to (2);
		\draw (4.center) to (5);
	\end{pgfonlayer}
\end{tikzpicture}
\ =: \sum_{\lambda\in\Lambda_A} \tikzfig{Diagrams/62_Proof14},\eeq
As $\chi_A$ is invertible, then $\{\widetilde{D}_\lambda^A\}_{\lambda\in\Lambda_A}$ is a cobasis for the GPT:
\beq
\left\{\begin{tikzpicture}
	\begin{pgfonlayer}{nodelayer}
		\node [style=none] (0) at (0, -0.75) {};
		\node [style=copoint, fill={purple!20}] (1) at (0, 0.25) {$\widetilde{D}_\lambda^A$};
	\end{pgfonlayer}
	\begin{pgfonlayer}{edgelayer}
		\draw [qWire] (0.center) to (1);
	\end{pgfonlayer}
\end{tikzpicture}\right\}_{\lambda\in\Lambda_A},
\eeq
whereby condition \eqref{eq:quasidiscard} becomes
\beq
\sum_{\lambda \in \Lambda_A} \begin{tikzpicture}
	\begin{pgfonlayer}{nodelayer}
		\node [style=none] (0) at (0, -0.75) {};
		\node [style=copoint, fill={purple!20}] (1) at (0, 0.25) {$\widetilde{D}_\lambda^A$};
	\end{pgfonlayer}
	\begin{pgfonlayer}{edgelayer}
		\draw [qWire] (0.center) to (1);
	\end{pgfonlayer}
    \end{tikzpicture}\ =\ \tikzfig{Diagrams/44_QuasiRep4}.\label{eq:cobasisdiscard}
\eeq

Similarly, we could run the same argument using $\chi^{-1}_A$ to single out a basis, $\{\widetilde{F}_\lambda^A\}_{\lambda\in\Lambda_A}$ for the GPT:
\beq 
\left\{
\begin{tikzpicture}
	\begin{pgfonlayer}{nodelayer}
		\node [style=none] (0) at (0, 0.75) {};
		\node [style=point, fill={purple!20}] (1) at (0, -0.25) {$\widetilde{F}_\lambda^A$};
	\end{pgfonlayer}
	\begin{pgfonlayer}{edgelayer}
		\draw [qWire] (0.center) to (1);
	\end{pgfonlayer}
\end{tikzpicture}
\right\}_{\lambda\in\Lambda_A},
\eeq
where (by Eq.~\eqref{eq:simplexEmbed1}) $\forall \lambda$,
\beq \label{eq:cobasisdiscard2}
\begin{tikzpicture}
	\begin{pgfonlayer}{nodelayer}
		\node [style=none] (0) at (0, 0.75) {};
		\node [style=point, fill={purple!20}] (1) at (0, -0.25) {$\widetilde{F}_\lambda^A$};
		\node [style=upground] (2) at (0, 1) {};
	\end{pgfonlayer}
	\begin{pgfonlayer}{edgelayer}
		\draw [qWire] (0.center) to (1);
	\end{pgfonlayer}
\end{tikzpicture} = 1.
\eeq
Moreover, this decomposition, together with reversibility of $\chi_A$ means that
\begin{align}
\sum_{\lambda,\lambda'\in\Lambda_A}
\begin{tikzpicture}
	\begin{pgfonlayer}{nodelayer}
		\node [style=none] (0) at (0, 0) {};
		\node [style=copoint, fill={purple!20}] (1) at (0, .75) {$\widetilde{D}_{\lambda'}^A$};
		\node [style=point, fill={purple!20}] (2) at (0, -.75) {$\widetilde{F}_\lambda^A$};
		\node [style=point] (3) at (2, 0.75) {$\lambda'$};
		\node [style=copoint] (4) at (2, -0.75) {$\lambda$};
		\node [style=none] (5) at (2, -2) {};
		\node [style=none] (6) at (2, 2) {};
	\end{pgfonlayer}
	\begin{pgfonlayer}{edgelayer}
		\draw [qWire] (2) to (0.center);
		\draw [qWire] (0.center) to (1);
		\draw (3) to (6.center);
		\draw (4) to (5.center);
	\end{pgfonlayer}
\end{tikzpicture}
\quad &= \quad
\begin{tikzpicture}
	\begin{pgfonlayer}{nodelayer}
		\node [style=small box, fill={purple!20}] (0) at (0, -1) {$\chi^{-1}_A$};
		\node [style=small box, fill={purple!20}] (1) at (0, 1) {$\chi_A$};
		\node [style=none] (2) at (0, 2) {};
		\node [style=none] (3) at (0, -2) {};
	\end{pgfonlayer}
	\begin{pgfonlayer}{edgelayer}
		\draw [qWire] (1) to (0);
		\draw (3.center) to (0);
		\draw (1) to (2.center);
	\end{pgfonlayer}
\end{tikzpicture} \nonumber \\
&= \quad
\begin{tikzpicture}
	\begin{pgfonlayer}{nodelayer}
		\node [style=none] (0) at (0, 1) {};
		\node [style=none] (1) at (0, -1) {};
		\node [style=none] (7) at (0, -1.5) {};
		\node [style=none] (8) at (0, 1.5) {};
	\end{pgfonlayer}
	\begin{pgfonlayer}{edgelayer}
		\draw (0.center) to (1.center);
	\end{pgfonlayer}
\end{tikzpicture} \nonumber \\
&=
\sum_{\lambda,\lambda'\in\Lambda_A}\delta_{\lambda\lambda'}
\begin{tikzpicture}
	\begin{pgfonlayer}{nodelayer}
		\node [style=point] (3) at (0, 0.75) {$\lambda'$};
		\node [style=copoint] (4) at (0, -0.75) {$\lambda$};
		\node [style=none] (5) at (0, -2) {};
		\node [style=none] (6) at (0, 2) {};
	\end{pgfonlayer}
	\begin{pgfonlayer}{edgelayer}
		\draw (3) to (6.center);
		\draw (4) to (5.center);
	\end{pgfonlayer}
\end{tikzpicture},
\end{align}
and hence
\beq \label{revimplies}
\begin{tikzpicture}
	\begin{pgfonlayer}{nodelayer}
		\node [style=none] (0) at (0, 0.25) {};
		\node [style=point, fill={purple!20}] (1) at (0, -0.75) {$\widetilde{F}_\lambda^A$};
		\node [style=none] (2) at (0, 0) {};
		\node [style=copoint, fill={purple!20}] (3) at (0, 1) {$\widetilde{D}_{\lambda'}^A$};
	\end{pgfonlayer}
	\begin{pgfonlayer}{edgelayer}
		\draw [qWire] (0.center) to (1);
		\draw [qWire] (2.center) to (3);
	\end{pgfonlayer}
\end{tikzpicture}\ =\ \delta_{\lambda\lambda'}.
\eeq
That is, $\{\widetilde{F}_\lambda^A\}_\lambda$ defines a basis of $V_A$ and $\{\widetilde{D}_{\lambda'}^A\}_{\lambda'}$ defines a dual basis of $V_A^*$.

We can then represent the action of $\hat{\xi}$ as:
\beq
\begin{tikzpicture}
	\begin{pgfonlayer}{nodelayer}
		\node [style=small box] (0) at (0, 0) {$\widetilde{T}$};
		\node [style=none] (1) at (0, 1.5) {};
		\node [style=none] (2) at (0, -1.5) {};
		\node [style=none] (3) at (0.75, -1.25) {\tiny$\hat{\xi}$};
		\node [style=none] (4) at (0, 2.5) {};
		\node [style=none] (5) at (0, -2.5) {};
		\node [style=none] (6) at (-1, 1.5) {};
		\node [style=none] (7) at (1, 1.5) {};
		\node [style=none] (8) at (1, -1.5) {};
		\node [style=none] (9) at (-1, -1.5) {};
		\node [style=right label] (10) at (0, -2.25) {$\mathds{R}^{\Lambda_A}$};
		\node [style=right label] (11) at (0, -1) {$A$};
		\node [style=right label] (12) at (0, 0.75) {$B$};
		\node [style=right label] (13) at (0, 2) {$\mathds{R}^{\Lambda_B}$};
	\end{pgfonlayer}
	\begin{pgfonlayer}{edgelayer}
			\filldraw[fill=purple!20,draw=purple!40](9.center) to (6.center) to (7.center) to (8.center) to cycle;
		\draw [style=qWire] (2.center) to (0);
		\draw [style=qWire] (0) to (1.center);
		\draw (4.center) to (1.center);
		\draw (2.center) to (5.center);
	\end{pgfonlayer}
\end{tikzpicture}
= \sum_{\lambda\in \Lambda_A,\lambda'\in \Lambda_B}
\begin{tikzpicture}
	\begin{pgfonlayer}{nodelayer}
		\node [style=small box] (0) at (0, 0) {$\widetilde{T}$};
		\node [style=copoint, fill={purple!20}] (1) at (0, 1.25) {$\widetilde{D}_{\lambda'}^B$};
		\node [style=point, fill={purple!20}] (2) at (0, -1.25) {$\widetilde{F}_\lambda^A$};
		\node [style=point] (3) at (2, 0.75) {$\lambda'$};
		\node [style=copoint] (4) at (2, -0.75) {$\lambda$};
		\node [style=none] (5) at (2, -2.75) {};
		\node [style=none] (6) at (2, 2.75) {};
		\node [style=right label] (7) at (2, -1.75) {$\mathds{R}^{\Lambda_A}$};
		\node [style=right label] (8) at (2, 1.75) {$\mathds{R}^{\Lambda_B}$};
	\end{pgfonlayer}
	\begin{pgfonlayer}{edgelayer}
		\draw [qWire] (2) to (0);
		\draw [qWire] (0) to (1);
		\draw (3) to (6.center);
		\draw (4) to (5.center);
	\end{pgfonlayer}
\end{tikzpicture}
\eeq
which can be viewed as a quasistochastic map defined by the conditional quasiprobability distribution
\beq
\hat{\xi}(\widetilde{T})({\lambda'}|\lambda)= \begin{tikzpicture}
	\begin{pgfonlayer}{nodelayer}
		\node [style=small box] (0) at (0, 0) {$\widetilde{T}$};
		\node [style=copoint, fill={purple!20}] (1) at (0, 1.25) {$\widetilde{D}_{\lambda'}^B$};
		\node [style=point, fill={purple!20}] (2) at (0, -1.25) {$\widetilde{F}_\lambda^A$};
	\end{pgfonlayer}
	\begin{pgfonlayer}{edgelayer}
		\draw [qWire] (2) to (0);
		\draw [qWire] (0) to (1);
	\end{pgfonlayer}
\end{tikzpicture}.
\eeq

 Finally, we note that Proposition~\ref{qrepngptstruct} also implies that any quasiprobability representation constructed using an overcomplete frame necessarily fail to be diagram-preserving. 

\subsubsection{Diagram-preserving quasiprobabilistic models of quantum theory }
We now consider the case of quantum theory as a GPT. The basis $\{\widetilde{F}_{\lambda}\}_{\lambda\in\Lambda}$ is a basis for the real vector space of Hermitian operators for the system while the cobasis, $\{\widetilde{D}_{\lambda}\}_{\lambda\in\Lambda}$ is a basis for the space of linear functionals on the vector space of Hermitian operators.  The Riesz representation theorem~\cite{riesz1914demonstration} guarantees that every element $\widetilde{D}_{\lambda}$ of the cobasis can be represented via the Hilbert-Schmidt inner product with some Hermitian operator, which we will denote as $\widetilde{D}^*_{\lambda}$, such that for all $\rho$:
\beq
\begin{tikzpicture}
	\begin{pgfonlayer}{nodelayer}
		\node [style=none] (0) at (0, 0.25) {};
		\node [style=point, none] (1) at (0, -0.75) {$\rho$};
		\node [style=none] (2) at (0, 0) {};
		\node [style=copoint, fill={purple!20}] (3) at (0, 1) {$\widetilde{D}_{\lambda'}$};
	\end{pgfonlayer}
	\begin{pgfonlayer}{edgelayer}
		\draw [qWire] (0.center) to (1);
		\draw [qWire] (2.center) to (3);
	\end{pgfonlayer}
\end{tikzpicture}
\ = \
\widetilde{D}_{\lambda}\circ \rho = \mathsf{tr}(\widetilde{D}^*_{\lambda} \rho).
\eeq
The condition in Eq.~\eqref{eq:cobasisdiscard} becomes $\sum_\lambda \widetilde{D}^*_\lambda = \mathds{1}$, the condition in Eq.~\eqref{eq:cobasisdiscard2} becomes $\mathsf{tr}(\widetilde{F}_\lambda)=1$, and the condition of Eq.~\eqref{revimplies} becomes
\beq\label{eq:dualFrame}
\mathsf{tr}(\widetilde{D}^*_{\lambda'} \widetilde{F}_\lambda)=\delta_{\lambda\lambda'}.
\eeq
It is clear, therefore, that $\{ \widetilde{F}_\lambda \}_\lambda$ and $\{ \widetilde{D}^*_\lambda \}_\lambda$ constitute a  minimal frame and its dual (in the language of, for example, Refs.~\cite{ferrie2008frame}). Hence, this representation is nothing but an exact frame representation, that is, one which is not overcomplete.
That is, a transformation $\widetilde{T}$, represented by a completely positive trace preserving map $\mathcal{E}_{\widetilde{T}}$, will be represented as a quasistochastic map defined by the conditional quasiprobability distribution:
\beq
\hat{\xi}(\widetilde{T})({\lambda'}|\lambda) = \mathsf{tr}(\widetilde{D}^*_{\lambda'}\mathcal{E}_{\widetilde{T}}(\widetilde{F}_\lambda))
\eeq
It is easy to see that any set of spanning and linearly independent vectors summing to identity will define a suitable dual frame $\{\widetilde{D}^*_\lambda\}$, and then the frame $\{\widetilde{F}_\lambda\}$ itself is uniquely defined by Eq.~\eqref{eq:dualFrame}. (Note in particular that the elements of the frame need not be pairwise orthogonal, nor must those of the dual frame.)

It has previously been shown that all quasiprobabilistic models of quantum theory are frame representations~\cite{ferrie2008frame}. What we learn here is that {\em diagram-preserving} quasiprobabilistic models are necessarily the simplest possible frame representations, namely those that are not overcomplete.

\subsection{Structure theorems for ontological models}

In the case of ontological models of a GPT, we obtain:
\begin{proposition}
Any diagram-preserving ontological model of a tomographically local GPT can be written as
\begin{equation}
 \begin{tikzpicture}
	\begin{pgfonlayer}{nodelayer}
		\node [style=small box] (0) at (0, 0) {$\widetilde{T}$};
		\node [style=none] (1) at (0, 1.5) {};
		\node [style=none] (2) at (0, -1.5) {};
		\node [style=none] (3) at (0.75, -1.25) {\tiny$\tilde{\xi}$};
		\node [style=none] (4) at (0, 2.5) {};
		\node [style=none] (5) at (0, -2.5) {};
		\node [style=none] (6) at (-1, 1.5) {};
		\node [style=none] (7) at (1, 1.5) {};
		\node [style=none] (8) at (1, -1.5) {};
		\node [style=none] (9) at (-1, -1.5) {};
		\node [style=right label] (10) at (0, -2.25) {$\mathds{R}^{\Lambda_A}$};
		\node [style=right label] (11) at (0, -1) {$A$};
		\node [style=right label] (12) at (0, 0.75) {$B$};
		\node [style=right label] (13) at (0, 2) {$\mathds{R}^{\Lambda_B}$};
	\end{pgfonlayer}
	\begin{pgfonlayer}{edgelayer}
			\filldraw[fill=Red!20,draw=Red!40](9.center) to (6.center) to (7.center) to (8.center) to cycle;
		\draw [style=qWire] (2.center) to (0);
		\draw [style=qWire] (0) to (1.center);
		\draw (4.center) to (1.center);
		\draw (2.center) to (5.center);
	\end{pgfonlayer}
\end{tikzpicture}
\quad = \quad \begin{tikzpicture}
	\begin{pgfonlayer}{nodelayer}
		\node [style=small box] (0) at (0, 0) {$\widetilde{T}$};
		\node [style=small box, fill={Red!20}] (1) at (0, 1.75) {$\chi_B$};
		\node [style=small box, fill={Red!20}] (2) at (0, -1.75) {$\chi_A^{-1}$};
		\node [style=none] (3) at (0, 3.25) {};
		\node [style=none] (4) at (0, -3.25) {};
		\node [style=right label] (5) at (0, -3) {$\mathds{R}^{\Lambda_A}$};
		\node [style=right label] (6) at (0, -1) {$A$};
		\node [style=right label] (7) at (0, 0.75) {$B$};
		\node [style=right label] (8) at (0, 2.75) {$\mathds{R}^{\Lambda_B}$};
	\end{pgfonlayer}
	\begin{pgfonlayer}{edgelayer}
		\draw [qWire] (2) to (0);
		\draw [qWire] (0) to (1);
		\draw (4.center) to (2);
		\draw (1) to (3.center);
	\end{pgfonlayer}
\end{tikzpicture}
,
  \label{eq:ontchi}
\end{equation}
where
\begin{equation}\label{eq:ontchidiscard}
\begin{tikzpicture}
	\begin{pgfonlayer}{nodelayer}
		\node [style=none] (0) at (-0.5, -1) {};
		\node [style=small box,fill={Red!20}] (1) at (-0.5, -0) {$\chi_A$};
		\node [style=none] (2) at (-0.5, 1) {};
		\node [style=upground] (3) at (-0.5, 1.25) {};
	\end{pgfonlayer}
	\begin{pgfonlayer}{edgelayer}
		\draw [qWire] (0.center) to (1.center);
		\draw  (1) to (2.center);
	\end{pgfonlayer}
\end{tikzpicture}
\ =\  \tikzfig{Diagrams/44_QuasiRep4}
\end{equation}
and where moreover every pair $(\chi_A^{-1},\chi_B)$ defines a {\em positive} map from the cone of transformations from $A$  to $B$ in the GPT $\widetilde{\Op}$ to the cone of substochastic maps from $\Lambda_A$ to $\Lambda_B$ in $\SubS$.
\label{prop:structureOM}
\end{proposition}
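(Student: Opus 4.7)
The plan is to obtain this proposition as a direct strengthening of the general structure theorem (Thm.~\ref{mainthm}) specialized to ontological models, following closely the template of the proof of Prop.~\ref{qrepngptstruct} and then adding the positivity clause that distinguishes $\SubS$-valued from $\QSS$-valued representations.

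First, I would note that $\SubS \subset \QSS \subset \RL$, so any diagram-preserving ontological model $\widetilde{\xi}:\widetilde{\Op}\to\SubS$ is, in particular, a convex-linear, empirically adequate, and diagram-preserving map into $\RL$. Hence Thm.~\ref{mainthm} applies verbatim, yielding invertible linear maps $\{\chi_A:A\to V_A\}$ within $\RL$ such that $\widetilde{\xi}(\widetilde{T}) = \chi_B \circ \widetilde{T} \circ \chi_A^{-1}$. Since the codomain of $\widetilde{\xi}$ is $\SubS$, whose objects are of the form $\mathds{R}^{\Lambda_A}$, each $V_A$ must coincide with $\mathds{R}^{\Lambda_A}$, giving exactly Eq.~\eqref{eq:ontchi}.

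Next, I would derive the normalization constraint \eqref{eq:ontchidiscard}. By Def.~\ref{defnontgpt}, an ontological model sends the deterministic effect to the marginalization map (\emph{discard}) on $\mathds{R}^{\Lambda_A}$. Applying the decomposition of Eq.~\eqref{eq:ontchi} to the deterministic effect and composing with $\chi_A$ on the input side yields Eq.~\eqref{eq:ontchidiscard} by precisely the same chain of diagrammatic manipulations used in the proof of Prop.~\ref{qrepngptstruct}.

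The only genuinely new content is the positivity clause, which encodes that the ontological model's image lies inside $\SubS$ rather than merely inside $\QSS$. For this, I would argue as follows: by definition, for every transformation $\widetilde{T}$ in $\widetilde{\Op}$ (equivalently, every element of the GPT cone of transformations from $A$ to $B$), the map $\widetilde{\xi}(\widetilde{T}) = \chi_B \circ \widetilde{T} \circ \chi_A^{-1}$ must be a substochastic map in $\SubS$. Since this must hold for \emph{every} element of the GPT cone, the joint operation $\widetilde{T}\mapsto \chi_B \circ \widetilde{T} \circ \chi_A^{-1}$ is, by construction, a cone-preserving (i.e.\ positive) linear map from the cone of GPT transformations $A\to B$ into the cone of substochastic maps $\Lambda_A\to\Lambda_B$. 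Conversely, any pair $(\chi_A^{-1},\chi_B)$ satisfying both the normalization in Eq.~\eqref{eq:ontchidiscard} and this joint positivity condition evidently assembles into a well-defined diagram-preserving map $\widetilde{\Op}\to\SubS$ satisfying all three clauses of Def.~\ref{defnontgpt}, closing the characterization.

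The only subtlety worth flagging is that positivity is genuinely a condition on the \emph{pair} $(\chi_A^{-1},\chi_B)$ and not separately on either factor: neither $\chi_A^{-1}$ nor $\chi_B$ need individually map GPT cones to the nonnegative cone $\mathds{R}^{\Lambda}_{\geq 0}$ of $\SubS$, only their composition with arbitrary GPT transformations sandwiched in between must. I would make this explicit in the statement of positivity to forestall a natural but incorrect reading, and I would not expect this step to introduce any significant technical difficulty beyond careful bookkeeping of cones, since all the heavy lifting has already been done in the proof of Thm.~\ref{mainthm}.
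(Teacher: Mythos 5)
Your proposal is correct and follows essentially the same route as the paper: inherit Eqs.~\eqref{eq:ontchi} and \eqref{eq:ontchidiscard} from the quasiprobabilistic case (Proposition~\ref{qrepngptstruct}, itself an instance of Theorem~\ref{thm:structure}), and then observe that positivity is automatic because the codomain is $\SubS$, with the cone of transformations understood as the positive span of actual GPT processes so that linearity of the extension does the rest. The only minor looseness is your parenthetical equating ``every transformation in $\widetilde{\Op}$'' with ``every element of the GPT cone'' --- the transformations generate the cone rather than exhaust it --- but since the extended map is linear this is immediately repaired, exactly as the paper does; your closing converse paragraph is extraneous to the stated (one-directional) proposition but harmless.
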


The proof is given in Appendix~\ref{propstructureOMproof}. Apart from positivity, the proof follows immediately from Proposition~\ref{qrepngptstruct}.

One can interpret this map from the GPT to the ontological model as an explicit embedding into a simplicial GPT as discussed in \cite{schmid2019characterization}, but generalized to the case in which both the GPT under consideration and the simplicial GPT have arbitrary processes, not just states and effects. This follows from the positivity conditions, empirical adequacy, and the preservation of the deterministic effect.

As in the case of quasiprobabilistic models, we can write out $\chi_A$ as:
\begin{equation}
\begin{tikzpicture}
	\begin{pgfonlayer}{nodelayer}
		\node [style=none] (0) at (0, -1) {};
		\node [style=small box, fill={Red!20}] (1) at (0, -0) {$\chi_A$};
		\node [style=none] (2) at (0, 1) {};
	\end{pgfonlayer}
	\begin{pgfonlayer}{edgelayer}
		\draw [qWire] (0.center) to (1);
		\draw (1) to (2.center);
	\end{pgfonlayer}
\end{tikzpicture}
\ = \sum_{\lambda\in\Lambda_A} \begin{tikzpicture}
	\begin{pgfonlayer}{nodelayer}
		\node [style=none] (0) at (0, -1.75) {};
		\node [style=copoint,fill=Red!20] (1) at (0, -.875) {$\widetilde{D}_\lambda^A$};
		\node [style=point] (2) at (0, 1) {$\lambda$};
		\node [style=none] (3) at (0, 1.75) {};
	\end{pgfonlayer}
	\begin{pgfonlayer}{edgelayer}
		\draw [qWire] (0.center) to (1);
		\draw (3.center) to (2);
	\end{pgfonlayer}
\end{tikzpicture}
,
\label{eq:chidecomp}
\end{equation}
where $\{ \widetilde{D}_\lambda^A\}_\lambda$ forms a basis for the vector space of the GPT defined by the operational theory. The positivity condition for $\chi_A$ discussed above immediately implies that $\widetilde{D}_{\lambda}^A$ is a linear functional which is positive on the state cone, and the normalization condition immediately implies that their sum over $\lambda$ is the deterministic effect. By a similar argument, each $\widetilde{F}_\lambda^A$ is a vector in the vector space of states which is positive on all GPT effects.

In the case of a GPT which satisfies the no-restriction hypothesis \cite{chiribella2010probabilistic} (e.g., quantum theory and the classical probabilistic theory),  this means that the $\widetilde{D}_\lambda^A$  are effects (forming a measurement) and that the $\widetilde{F}_\lambda^A$ are states.
In the quantum case, the notion of positivity that we have expressed here
 reduces to  positivity of the eigenvalues of the Hermitian operators. This provides another immediate proof that quantum theory, as a GPT, does not admit an ontological model---it would require an exact frame and dual frame for the space of Hermitian operators which are all positive, but it is known that such a basis and dual do not exist \cite{ferrie2008frame}.

We have shown (Prop~\ref{thm:NCOMandOM}) that every noncontextual ontological model of an operational theory is equivalent to an ontological model of the GPT defined by the operational theory.
Combining this with proposition~\ref{prop:structureOM}, it immediately follows that:
\begin{corollary}
 For operational theories whose corresponding GPT satisfies tomographic locality, any noncontextual ontological model can be written as
\[\tikzfig{Diagrams/45_NCRep1}\quad =\quad\tikzfig{Diagrams/45b_NCRep1b} \quad =\quad \tikzfig{Diagrams/46_NCRep2}\]
for
\[\tikzfig{Diagrams/61_Proof13}\ = \sum_{\lambda\in\Lambda_A} \begin{tikzpicture}
	\begin{pgfonlayer}{nodelayer}
		\node [style=none] (0) at (0, -2) {};
		\node [style=copoint,fill=Red!20] (1) at (0, -0.875) {$\widetilde{D}_\lambda^A$};
		\node [style=point] (2) at (0, 1) {$\lambda$};
		\node [style=none] (3) at (0, 2) {};
        \node [style=right label] (16) at (0, 1.75) {$\mathds{R}^{\Lambda_A}$};
        \node [style=right label] (18) at (0, -1.75) {$A$};
	\end{pgfonlayer}
	\begin{pgfonlayer}{edgelayer}
		\draw [qWire] (0.center) to (1);
		\draw (3.center) to (2);
	\end{pgfonlayer}
\end{tikzpicture},\]
where $\{ \widetilde{D}_\lambda^A\}_\lambda$ forms a basis for the vector space of the GPT defined by the operational theory.
\end{corollary}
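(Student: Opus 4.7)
The plan is to assemble the result directly from two earlier facts: Theorem~\ref{thm:NCOMandOM} (which identifies noncontextual ontological models of $\Op$ with ontological models of the associated GPT $\widetilde{\Op}$) and Proposition~\ref{prop:structureOM} (which pins down the structure of any diagram-preserving ontological model of a tomographically local GPT). Essentially nothing new is required beyond chaining these results; the main thing to be careful about is the bookkeeping that moves us from the operational theory to the GPT and back.

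First, I would take an arbitrary noncontextual ontological model $\xiNC : \Op \to \SubS$ of an operational theory $\Op$ whose associated GPT $\widetilde{\Op}$ is tomographically local. By Theorem~\ref{thm:NCOMandOM}, there is a unique diagram-preserving ontological model $\widetilde{\xi}:\widetilde{\Op}\to\SubS$ such that $\xiNC = \widetilde{\xi}\circ{\sim}$, where ${\sim}:\Op\to\widetilde{\Op}$ is the quotienting map. This gives the first of the claimed equalities, namely
\[\tikzfig{Diagrams/45_NCRep1}\quad =\quad\tikzfig{Diagrams/45b_NCRep1b},\]
since applying $\xiNC$ to a process $T$ is the same as first quotienting $T$ to its equivalence class $\widetilde{T}$ and then applying $\widetilde{\xi}$.

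Next, I would invoke Proposition~\ref{prop:structureOM} applied to $\widetilde{\xi}$. Since $\widetilde{\Op}$ is tomographically local and $\widetilde{\xi}$ is a diagram-preserving ontological model of $\widetilde{\Op}$, the proposition gives invertible linear maps $\chi_A : A \to \mathds{R}^{\Lambda_A}$ satisfying Eqs.~\eqref{eq:ontchi} and \eqref{eq:ontchidiscard}, so that $\widetilde{\xi}(\widetilde{T}) = \chi_B \circ \widetilde{T} \circ \chi_A^{-1}$ for any process $\widetilde{T}:A\to B$, with each pair $(\chi_A^{-1},\chi_B)$ defining a positive map from GPT processes to substochastic maps. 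Substituting this into the previous display yields
\[\tikzfig{Diagrams/45b_NCRep1b}\quad =\quad\tikzfig{Diagrams/46_NCRep2},\]
which is the second claimed equality.

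Finally, for the promised decomposition of $\chi_A$, I would appeal to Eq.~\eqref{eq:chidecomp} in the proof of Proposition~\ref{prop:structureOM}: decomposing the identity on $\mathds{R}^{\Lambda_A}$ in the preferred basis/cobasis of $\SubS$ and using invertibility of $\chi_A$, one obtains
\[\tikzfig{Diagrams/61_Proof13}\ = \sum_{\lambda\in\Lambda_A} \begin{tikzpicture}
    \begin{pgfonlayer}{nodelayer}
        \node [style=none] (0) at (0, -2) {};
        \node [style=copoint,fill=Red!20] (1) at (0, -0.875) {$\widetilde{D}_\lambda^A$};
        \node [style=point] (2) at (0, 1) {$\lambda$};
        \node [style=none] (3) at (0, 2) {};
    \end{pgfonlayer}
    \begin{pgfonlayer}{edgelayer}
        \draw [qWire] (0.center) to (1);
        \draw (3.center) to (2);
    \end{pgfonlayer}
\end{tikzpicture},\]
with $\widetilde{D}_\lambda^A := (\lambda\mid \circ\, \chi_A$. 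The fact that $\chi_A$ is an invertible linear map ensures that $\{\widetilde{D}_\lambda^A\}_{\lambda\in\Lambda_A}$ is a basis (cobasis) for the GPT vector space $A$, completing the argument. There is no real obstacle here; the only subtlety is confirming that the $\widetilde{D}_\lambda^A$ are bona fide GPT effects (as opposed to merely linear functionals), which follows from the positivity clause of Proposition~\ref{prop:structureOM}.
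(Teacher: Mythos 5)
Your proposal is correct and follows essentially the same route as the paper, which likewise obtains the corollary by combining Theorem~\ref{thm:NCOMandOM} with Proposition~\ref{prop:structureOM} and the decomposition of $\chi_A$ in Eq.~\eqref{eq:chidecomp}. The only quibble is your closing aside: the corollary only requires the $\widetilde{D}_\lambda^A$ to form a basis of the GPT vector space, which follows from invertibility of $\chi_A$ alone; the positivity clause only makes them linear functionals that are positive on the state cone, and they are bona fide GPT effects only under an additional assumption such as no-restriction, as the paper itself notes.
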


Previously, the notion of a noncontextual ontological representation seemed to be a highly flexible concept, but this corollary demonstrates that it in fact has a very rigid structure. Every noncontextual ontological model can be constructed in two steps: i) quotient to the associated GPT, ii) pick a basis for the GPT such that it is manifestly an ontological model. Furthermore, the only freedom in the representation is in representation of the {\em states} in the theory (via this choice of basis); after specifying this, one can uniquely extend to the representation of arbitrary processes.

\subsection{Consequences of the dimension bound}\label{subsec:conseq}

We can specialize Corollary~\ref{cor:dimbound} to the case of quasiprobabilistic representations or ontological representations of GPTs. 
In this case, it states that the dimension of the GPT vector space for a given system $A$, ${\rm dim}(A)$,
 is equal to the dimension of the codomain of the map $\chi_A$ defining the quasiprobabilistic or ontological representation of $A$, that is, the dimension of $\mathds{R}^{\Lambda_A}$.
In the case of an ontological model, this dimension is simply $|\Lambda_A|$,
  the cardinality of the set $\Lambda_A$ of ontic states of $A$, so the number of ontic states is equal to the dimension of the GPT space. Moreover, by considering Proposition~\ref{thm:NCOMandOM}, this immediately implies that for any operational theory whose corresponding GPT satisfies tomographic locality, if there exists a {\em noncontextual} ontological model thereof, then it must 
  also have a number of ontic states equal to the dimension of the GPT state space.
    In the language of Hardy \cite{Hardy2004}, this exactly means, for each system $A$, that the ``ontological excess baggage factor'', 
\beq
\gamma_A := \frac{|\Lambda_A|}{{\rm dim}(A)},
\eeq
must be exactly $1$. In other words,  {\em demanding  noncontextuality rules out ontological excess baggage}. Since Hardy showed that all ontological models of a qubit must in fact have unbounded excess baggage, his result can immediately be combined with ours to give a new proof that the full statistics of processes on a qubit do not admit a noncontextual model.

In particular, our result implies that a diagram-preserving noncontextual ontological model of a {\em qubit} must have exactly $4$ ontic states. This result extends to 
any subtheory of a qubit whose corresponding GPT is tomographically local, e.g. the stabilizer subtheory. Hence, it constitutes a stringent constraint on ontological models of the qubit stabilizer subtheory qua operational theory. 
For instance, it immediately guarantees that the $8$-state model of Ref.~\cite{8state}---which, as the name suggests, has 8 ontic states---
must be contextual.   Indeed, the $8$-state model was previously shown to be contextual by a different argument which focused on the representation of transformation procedures in prepare-transform-measure scenarios~\cite{lillystone2019single}.

Furthermore, our bound improves an algorithm first proposed in Ref.~\cite{schmid2019characterization}.
In particular,  Ref.~\cite{schmid2019characterization} gave an algorithm for determining if a GPT admits of an ontological model by testing whether or not the GPT embeds in a simplicial GPT of arbitrary dimension. 
The lack of a bound on this dimension means that there is no guarantee that the algorithm will ever terminate.
Ref.~\cite{gitton2020solvable} solves this problem by providing such a bound, namely, the square of the given GPT's dimension. Our result strengthens this bound, reducing it to the given GPT's dimension. In fact, our bound is tight, as there can never be an embedding of the GPT into a lower dimensional space.
These results simplify the algorithm dramatically: rather than testing for embedding in a sequence of simplicial GPTs of increasing dimension, one can simply perform a single test for embedding in a simplicial GPT of the same dimension as the given GPT. 

Yet another application of the dimension bound follows from the results of Ref.~\cite{karanjai2018}. Ref.~\cite{karanjai2018} demonstrates that the number of classical bits required to specify the ontic state in any (necessarily contextual) ontological model of the  qubit stabilizer subtheory is quadratic in the number of qubits. This is contrasted to the case of the {\em qutrit} stabilizer subtheory, wherein there exists a (noncontextual) ontological model  with {\em linear} scaling
 in the number of qutrits.  The quadratic scaling result for the stabilizer qubit subtheory implies that, for a collection of qubits, the number of ontic states is necessarily greater than the dimension of the space of quantum density operators. Together with our dimension bound, this fact  is sufficient to deduce the contextuality of the qubit stabilizer subtheory.  Moreover, the fact that there exists a noncontextual ontological model for the {\em qutrit} stabilizer subtheory~\cite{epistricted}, together with our dimension bound, is sufficient to deduce the linear scaling in this case.

\subsection{Diagram preservation implies ontic separability and more}\label{onticseparability}

Returning to representations of a GPT by some \colorbox{green!20}{$M:\widetilde{\Op}\to \RL$} satisfying the conditions of Theorem~\ref{thm:structure}: 
 if we make use of additional instances of diagram preservation beyond the three instances which we used in proving Theorem~\ref{thm:structure}, then we can derive additional constraints on the representation.
 
One important and immediate consequence of diagram preservation for composite systems is that the composite system $AB$ is represented by the tensor product of the representations of the components: $V_{AB}=V_A\otimes V_B$. In the special cases of quasiprobabilistic representations, this means that $\mathds{R}^{\Lambda_{AB}} = \mathds{R}^{\Lambda_A}\otimes \mathds{R}^{\Lambda_B} = \mathds{R}^{\Lambda_A\times \Lambda_B}$, which in turn means that $\Lambda_{AB}=\Lambda_A\times \Lambda_B$.  That is, the sample space of a composite system is the \emph{Cartesian product} of the sample spaces of the components. 

This constraint has particular significance if we consider the case of ontological models, \colorbox{Red!20}{$\widetilde{\xi}:\widetilde{\Op}\to\SubS$}, as this means that for an ontological model, the ontic state space of a composite system is the Cartesian product of the ontic state spaces of the components.  
We term the latter condition
 {\em ontic separability} (See Refs.~\cite{Spekkens2015,Harrigan}).  It is a species of reductionism, asserting, in effect, that composite systems have no holistic properties.  More precisely, the property ascriptions to composite systems are all and only the property ascriptions to their components. Yet another way of expressing the condition is that the properties of the whole supervene on the properties of the parts.

The assumption of ontic separability for ontological models has been discussed in many prior works \cite{Spekkens2015,Harrigan}, and has been a substantive assumption in certain arguments.  For instance, in Ref.~\cite{Spekkens2014}, ontic separability was used to demonstrate that  in a noncontextual ontological model, all and only projective measurements are represented outcome-deterministically.

It is also worth noting that the assumption of preparation independence in the PBR theorem~\cite{Pusey2012} follows from diagram preservation (e.g. Eq.~\eqref{forPBR}).
This connection between PBR and preservation of compositional structure has been previously explored in Sec.~4 of \cite{gheorghiu2019ontological}, in which they use this connection to derive a categorical version of the PBR theorem.

Moreover, by considering parallel composition we can also obtain the following:
\begin{proposition}
Via diagram preservation, parallel composition implies an additional constraint on the linear maps, $\chi_A$, namely,
\begin{equation}
\tikzfig{Diagrams/39_LinRep3}\quad = \quad \tikzfig{Diagrams/40_LinRep4}.
\label{parconst}
\end{equation} 
\end{proposition}
\proof
To begin, let us define 
\beq
\tikzfig{Diagrams/39_LinRep3}
\eeq
via
\begin{equation} \tikzfig{Diagrams/prod_chi_new}\quad =\quad \tikzfig{Diagrams/prod_M_new}.
\end{equation}
Next, applying this to a product state we have
\begin{equation} \tikzfig{Diagrams/prod_chi}\quad =\quad \tikzfig{Diagrams/prod_M}.
\end{equation}
Since $M$ is diagram-preserving, we have
\begin{equation} \tikzfig{Diagrams/prod_M}\quad =\quad \tikzfig{Diagrams/prod_M2}.
\end{equation}
Recalling the definition of $\chi_S$ again, we conclude that
\begin{equation} \tikzfig{Diagrams/prod_chi}\quad =\quad \tikzfig{Diagrams/prod_chi2}. \label{forPBR}
\end{equation}
In a tomographically local GPT, the product states span the entire state space, and so this implies Eq.~\eqref{parconst}.
\endproof

Now, if we consider the case of quasiprobabilistic representations, \colorbox{purple!20}{$\hat{\xi}: \widetilde{\Op} \to \QSS$}, we obtain that:
\beq
\tikzfig{Diagrams/new_product_rep}
\quad =\quad
\begin{tikzpicture}
	\begin{pgfonlayer}{nodelayer}
		\node [style=none] (0) at (0, -1.5) {};
		\node [style=small box, fill={purple!20}] (1) at (0, -0) {$\chi_{A}$};
		\node [style=none] (2) at (0, 1.5) {};
        \node[style=right label] (3) at (0,1.25) {$\mathds{R}^{\Lambda_{A}}$};
        \node[style=right label] (4) at (0,-1.25) {${A}$};
	\end{pgfonlayer}
	\begin{pgfonlayer}{edgelayer}
		\draw [qWire] (0.center) to (1);
		\draw (1) to (2.center);
	\end{pgfonlayer}
\end{tikzpicture}
\
\begin{tikzpicture}
	\begin{pgfonlayer}{nodelayer}
		\node [style=none] (0) at (0, -1.5) {};
		\node [style=small box, fill={purple!20}] (1) at (0, -0) {$\chi_{B}$};
		\node [style=none] (2) at (0, 1.5) {};
        \node[style=right label] (3) at (0,1.25) {$\mathds{R}^{\Lambda_{B}}$};
        \node[style=right label] (4) at (0,-1.25) {${B}$};
	\end{pgfonlayer}
	\begin{pgfonlayer}{edgelayer}
		\draw [qWire] (0.center) to (1);
		\draw (1) to (2.center);
	\end{pgfonlayer}
\end{tikzpicture}
.
\eeq
 Given Eq.~\eqref{eq:frameDecomp}, this is equivalent to
\beq\label{parallelcomposition}
\sum_{\lambda \in \Lambda_{A}, \lambda'\in \Lambda_B}
\begin{tikzpicture}
	\begin{pgfonlayer}{nodelayer}
		\node [style=none] (0) at (0.5, -2.5) {};
		\node [style=none] (1) at (0.5, -1.25) {};
		\node [style=point] (2) at (0.5, 1.5) {$\lambda$};
		\node [style=none] (3) at (0.5, 2.5) {};
		\node [style=right label] (4) at (0.5, 2.25) {$\mathds{R}^{\Lambda_{A}}$};
		\node [style=right label] (5) at (0.5, -2.25) {${A}$};
		\node [style=point] (6) at (2.5, 1.5) {$\lambda'$};
		\node [style=none] (7) at (2.5, 2.5) {};
		\node [style=right label] (8) at (2.5, 2.25) {$\mathds{R}^{\Lambda_{B}}$};
		\node [style=label] (12) at (1.5, -0.75) {$\widetilde{D}_{(\lambda,\lambda')}^{AB}$};
		\node [style=none] (13) at (2.5, -2.5) {};
		\node [style=none] (14) at (2.5, -1.25) {};
		\node [style=right label] (15) at (2.5, -2.25) {${B}$};
		\node [style=none] (16) at (-0.25, -1.25) {};
		\node [style=none] (17) at (1.5, 0.5) {};
		\node [style=none] (18) at (3.25, -1.25) {};
	\end{pgfonlayer}
	\begin{pgfonlayer}{edgelayer}
		\draw [qWire] (0.center) to (1.center);
		\draw (3.center) to (2);
		\draw (7.center) to (6);
		\draw [qWire] (13.center) to (14.center);
		\draw [fill={purple!20}, draw=black] (17.center)
			 to (18.center)
			 to (16.center)
			 to cycle;
	\end{pgfonlayer}
\end{tikzpicture}
\quad
=
\sum_{\lambda'\in \Lambda_A,\lambda''\in\Lambda_B}
\begin{tikzpicture}
	\begin{pgfonlayer}{nodelayer}
		\node [style=none] (0) at (0, -2.5) {};
		\node [style=copoint,fill=purple!20] (1) at (0, -1) {$\widetilde{D}_{\lambda'}^{A}$};
		\node [style=point] (2) at (0, 1.5) {$\lambda'$};
		\node [style=none] (3) at (0, 2.5) {};
        \node[style=right label] (4) at (0,2.25) {$\mathds{R}^{\Lambda_{A}}$};
        \node[style=right label] (5) at (0,-2.25) {${A}$};
	\end{pgfonlayer}
	\begin{pgfonlayer}{edgelayer}
		\draw [qWire] (0.center) to (1);
		\draw (3.center) to (2);
	\end{pgfonlayer}
\end{tikzpicture}
\
\begin{tikzpicture}
	\begin{pgfonlayer}{nodelayer}
		\node [style=none] (0) at (0, -2.5) {};
		\node [style=copoint,fill=purple!20] (1) at (0, -1) {$\widetilde{D}_{\lambda''}^{B}$};
		\node [style=point] (2) at (0, 1.5) {$\lambda''$};
		\node [style=none] (3) at (0, 2.5) {};
        \node[style=right label] (4) at (0,2.25) {$\mathds{R}^{\Lambda_{B}}$};
        \node[style=right label] (5) at (0,-2.25) {${B}$};
	\end{pgfonlayer}
	\begin{pgfonlayer}{edgelayer}
		\draw [qWire] (0.center) to (1);
		\draw (3.center) to (2);
	\end{pgfonlayer}
\end{tikzpicture}
.
\eeq
That is, diagram preservation implies that the frame representation must factorize across subsystems. In other words, the vector basis defining the frame representation must be a product basis.

\subsection{Converses to structure theorems}

In the above section we showed how all diagram-preserving quasiprobabilistic and ontological representations must have a particularly simple form given by a collection of invertible linear maps $\{\chi_A\}$ satisfying certain constraints. We now prove what is essentially the converse to each of these results.

Consider defining a map from $\widetilde{\Op}$ to $\RL$ by
\beq
\begin{tikzpicture}
	\begin{pgfonlayer}{nodelayer}
		\node [style=small box] (0) at (0, 0) {$\widetilde{T}$};
		\node [style=none] (1) at (0, 1.5) {};
		\node [style=none] (2) at (0, -1.5) {};
		\node [style=right label] (11) at (0, -1) {$A$};
		\node [style=right label] (12) at (0, 0.75) {$B$};
	\end{pgfonlayer}
	\begin{pgfonlayer}{edgelayer}
		\draw [style=qWire] (2.center) to (0);
		\draw [style=qWire] (0) to (1.center);
	\end{pgfonlayer}
\end{tikzpicture}
\quad \mapsto \quad \begin{tikzpicture}
	\begin{pgfonlayer}{nodelayer}
		\node [style=small box] (0) at (0, 0) {$\widetilde{T}$};
		\node [style=small box, fill={black!20}] (1) at (0, 1.75) {$\chi_B$};
		\node [style=small box, fill={black!20}] (2) at (0, -1.75) {$\chi_A^{-1}$};
		\node [style=none] (3) at (0, 3.25) {};
		\node [style=none] (4) at (0, -3.25) {};
		\node [style=right label] (5) at (0, -3) {$V_A$};
		\node [style=right label] (6) at (0, -1) {$A$};
		\node [style=right label] (7) at (0, 0.75) {$B$};
		\node [style=right label] (8) at (0, 2.75) {$V_B$};
	\end{pgfonlayer}
	\begin{pgfonlayer}{edgelayer}
		\draw [qWire] (2) to (0);
		\draw [qWire] (0) to (1);
		\draw (4.center) to (2);
		\draw (1) to (3.center);
	\end{pgfonlayer}
\end{tikzpicture}. \label{mapwconstr}
\eeq
Under what conditions on the set $\{\chi_A\}$ is this map a quasiprobabilistic or ontological representation?

To ensure that Eq.~\eqref{mapwconstr} defines a diagram-preserving map one must simply impose that:
\begin{equation}
\begin{tikzpicture}
	\begin{pgfonlayer}{nodelayer}
		\node [style=none] (0) at (0, 0) {$\chi_{AB}$};
		\node [style=none] (1) at (-1, 0.5) {};
		\node [style=none] (2) at (1, 0.5) {};
		\node [style=none] (3) at (1, -0.5) {};
		\node [style=none] (4) at (-1, -0.5) {};
		\node [style=none] (5) at (-0.5, 0.5) {};
		\node [style=none] (6) at (-0.5, 1.25) {};
		\node [style=none] (7) at (0.5, 1.25) {};
		\node [style=none] (8) at (0.5, 0.5) {};
		\node [style=none] (9) at (-0.5, -0.5) {};
		\node [style=none] (10) at (-0.5, -1.25) {};
		\node [style=none] (11) at (0.5, -0.5) {};
		\node [style=none] (12) at (0.5, -1.25) {};
		\node [style=right label] (13) at (-0.5, -1.25) {$A$};
		\node [style=right label] (14) at (0.5, -1.25) {$B$};
		\node [style=right label] (15) at (-0.5, 1) {$V_A$};
		\node [style=right label] (16) at (0.5, 1) {$V_B$};
	\end{pgfonlayer}
	\begin{pgfonlayer}{edgelayer}
			\filldraw[fill=black!20,draw=black](1.center) to (2.center) to (3.center) to (4.center) to cycle;
		\draw (6.center) to (5.center);
		\draw (7.center) to (8.center);
		\draw [qWire] (9.center) to (10.center);
		\draw [qWire] (11.center) to (12.center);
	\end{pgfonlayer}
\end{tikzpicture}
\quad = \quad \begin{tikzpicture}
	\begin{pgfonlayer}{nodelayer}
		\node [style=none] (0) at (-0.5, -1.25) {};
		\node [style=small box, fill={black!20}] (1) at (-0.5, 0) {$\chi_A$};
		\node [style=none] (2) at (-0.5, 1.25) {};
		\node [style=none] (3) at (1, 1.25) {};
		\node [style=small box, fill={black!20}] (4) at (1, 0) {$\chi_B$};
		\node [style=none] (5) at (1, -1.25) {};
		\node [style=none] (6) at (-0.5, -1) {};
		\node [style=right label] (7) at (-0.5, -1.25) {$A$};
		\node [style=right label] (8) at (1, -1.25) {$B$};
		\node [style=right label] (9) at (-0.5, 1) {$V_A$};
		\node [style=right label] (10) at (1, 1) {$V_B$};
	\end{pgfonlayer}
	\begin{pgfonlayer}{edgelayer}
		\draw [qWire] (0.center) to (1);
		\draw (1) to (2.center);
		\draw [qWire] (5.center) to (4);
		\draw (4) to (3.center);
	\end{pgfonlayer}
\end{tikzpicture}
.
\end{equation}
This condition, together with invertibility and linearity of the $\chi_A$, easily implies that diagram preservation and indeed all the assumptions of Theorem~\ref{mainthm} are satisfied.

To ensure that Eq.~\eqref{mapwconstr} defines a linear representation that is moreover a quasiprobabilistic representation, as in Def.~\ref{def:quasi}, one must impose that $V_A=\mathds{R}^{\Lambda_A}$ and that
\begin{equation}
\begin{tikzpicture}
	\begin{pgfonlayer}{nodelayer}
		\node [style=none] (0) at (-0.5, -1) {};
		\node [style={small box}, fill={black!20}] (1) at (-0.5, -0) {$\chi_A$};
		\node [style=none] (2) at (-0.5, 1.25) {};
		\node [style=upground] (3) at (-0.5, 1.5) {};
		\node [style=right label] (4) at (-0.5, -1) {$A$};
		\node [style=right label] (5) at (-0.5, 0.75) {$\mathds{R}^{\Lambda_A}$};
	\end{pgfonlayer}
	\begin{pgfonlayer}{edgelayer}
		\draw [qWire] (0.center) to (1);
		\draw (1) to (2.center);
	\end{pgfonlayer}
\end{tikzpicture}
\quad =\quad  \begin{tikzpicture}
	\begin{pgfonlayer}{nodelayer}
		\node [style=none] (0) at (0, -0.75) {};
		\node [style=none] (1) at (0, 0.25) {};
		\node [style=upground] (2) at (0, 0.5) {};
		\node [style=right label] (3) at (0, -0.5) {$A$};
	\end{pgfonlayer}
	\begin{pgfonlayer}{edgelayer}
		\draw [qWire] (0.center) to (1.center);
	\end{pgfonlayer}
\end{tikzpicture},
\end{equation}
which implies that the conditions in Def.~\ref{def:quasi} are satisfied.

Finally, to ensure that Eq.~\eqref{mapwconstr} defines a quasiprobabilistic representation that is moreover an ontological representation, as in Def.~\ref{defnontgpt}, one must introduce a positivity constraint for this map. Specifically, one must have that
\beq
\begin{tikzpicture}
	\begin{pgfonlayer}{nodelayer}
		\node [style={small box}, fill={black!20}] (0) at (0, 1.75) {$\chi_B$};
		\node [style={small box}, fill={black!20}] (1) at (0, -1.75) {$\chi_A^{-1}$};
		\node [style=none] (2) at (0, 3.25) {};
		\node [style=none] (3) at (0, -3.25) {};
		\node [style={right label}] (4) at (0, -3) {$\mathds{R}^{\Lambda_A}$};
		\node [style={right label}] (5) at (0, -1) {$A$};
		\node [style={right label}] (6) at (0, 0.75) {$B$};
		\node [style={right label}] (7) at (0, 2.75) {$\mathds{R}^{\Lambda_B}$};
		\node [style=none] (8) at (0, 0.5) {};
		\node [style=none] (9) at (0, -0.5) {};
		\node [style=none] (10) at (-0.5, 0.5) {};
		\node [style=none] (11) at (0.5, 0.5) {};
		\node [style=none] (12) at (0.5, -0.5) {};
		\node [style=none] (13) at (-0.5, -0.5) {};
	\end{pgfonlayer}
	\begin{pgfonlayer}{edgelayer}
		\draw (3.center) to (1);
		\draw (0) to (2.center);
		\draw [qWire] (0) to (8.center);
		\draw [qWire] (9.center) to (1);
		\draw[thick gray dashed edge] (10.center) to (11.center);
		\draw [thick gray dashed edge](11.center) to (12.center);
		\draw[thick gray dashed edge] (12.center) to (13.center);
		\draw [thick gray dashed edge](13.center) to (10.center);
	\end{pgfonlayer}
\end{tikzpicture}
\quad :: \quad
\begin{tikzpicture}
	\begin{pgfonlayer}{nodelayer}
		\node [style=small box] (0) at (0, 0) {$\widetilde{T}$};
		\node [style=none] (3) at (0, 1) {};
		\node [style=none] (4) at (0, -1) {};
		\node [style=right label] (6) at (0, -1) {$A$};
		\node [style=right label] (7) at (0, 0.75) {$B$};
	\end{pgfonlayer}
	\begin{pgfonlayer}{edgelayer}
		\draw [qWire] (4) to (0);
		\draw [qWire] (0) to (3);
	\end{pgfonlayer}
\end{tikzpicture}
\quad  \mapsto \quad\begin{tikzpicture}
	\begin{pgfonlayer}{nodelayer}
		\node [style=small box] (0) at (0, 0) {$\widetilde{T}$};
		\node [style=small box, fill={black!20}] (1) at (0, 1.75) {$\chi_B$};
		\node [style=small box, fill={black!20}] (2) at (0, -1.75) {$\chi_A^{-1}$};
		\node [style=none] (3) at (0, 3.25) {};
		\node [style=none] (4) at (0, -3.25) {};
		\node [style=right label] (5) at (0, -3) {$\mathds{R}^{\Lambda_A}$};
		\node [style=right label] (6) at (0, -1) {$A$};
		\node [style=right label] (7) at (0, 0.75) {$B$};
		\node [style=right label] (8) at (0, 2.75) {$\mathds{R}^{\Lambda_B}$};
	\end{pgfonlayer}
	\begin{pgfonlayer}{edgelayer}
		\draw [qWire] (2) to (0);
		\draw [qWire] (0) to (1);
		\draw (4.center) to (2);
		\draw (1) to (3.center);
	\end{pgfonlayer}
\end{tikzpicture}
\eeq
defines a positive map from the cone of transformations from $A$ to $B$ in $\widetilde{\Op}$ to the cone of stochastic maps from $\Lambda_A$ to $\Lambda_B$.

This provides a simple recipe for constructing linear representations: one simply needs to choose a family of invertible linear maps for each fundamental system (i.e., one that cannot be further decomposed into subsystems),
 and then define the others as the tensor product of these (so that general $\chi_A$ factorise over subsystems). Similarly, it provides a simple recipe for constructing quasiprobabilistic representations, using the same construction but where the $\chi_A$ preserve the deterministic effect. In the case of noncontextual ontological models, however, the recipe is less simple: one must not only choose invertible linear maps which  factorise over subsystems and preserve the deterministic effect; one must also check the positivity condition (which is nontrivial, since for any particular map $\chi_A$, one must check the condition for every $\chi_B$).

\subsection{Categorical reformulation}

There is an elegant categorical reframing of our structure theorem, as suggested to us by one of the referees:
\begin{theorem}[IV.1 categorical version]
Any convex-linear, empirically adequate and diagram-preserving map (Eq.~\eqref{eq:DPM}), \colorbox{green!20}{$M:\widetilde{\Op}\to \RL$}, where $\widetilde{\Op}$ is tomographically local, is naturally isomorphic to the canonical representation $R:\widetilde{\Op}\to \RL$ defined in Theorem~\ref{OpinRL}.  Moreover, this natural isomorphism $M\implies R$ is unique. 
\end{theorem} 
\proof
The components of the natural isomorphism are given by the $\chi_A$, as these go from $A\to V_A = M(A)$, and where we are abusing notation by denoting $R(A)$ simply as $A$. Eq.~\eqref{eq:repTrans} ensures that these do define a natural transformation.  To see this, first recall that we are abusing notation by suppressing explicitly notating the canonical representation $R$, so Eq.~\eqref{eq:repTrans} really tells us that $M(\widetilde{T}) = \chi_B \circ R(\widetilde{T}) \circ \chi_A^{-1}$ which is equivalent to $M(\widetilde{T})\circ \chi_A = \chi_B \circ R(\widetilde{T})$, which is what we need for this to be a natural transformation. Clearly it is moreover a natural isomorphism as every $\chi_A$ is invertible (Eqs.~\eqref{step3a} and \eqref{step3b}). Eq.~\eqref{parconst} then ensures that this is a monoidal natural isomorphism. To see this, note that the LHS of this equation is not quite the component of the natural isomorphism $\chi_{A\otimes B}$, instead we have $\chi_{AB}:= \mu^{-1}\circ \chi_{A\otimes B} \circ \mu$, and so Eq.~\eqref{parconst} is equivalent to the condition that $\chi_{A\otimes B} \circ \mu = \mu \circ (\chi_A\otimes\chi_B)$, which is exactly what we need for this to be a monoidal natural isomorphism.  Uniqueness of this natural isomorphism follows from the fact that the components $\chi_A$ are uniquely determined, as noted in Theorem~\ref{thm:structure}. 
\endproof

This means that ontological models, should they exist, are essentially unique, in that they are unique up to a unique natural isomorphism. There is, however, an important subtlety going on here. This natural isomorphism is given by viewing ontological models as living in $\RL$, and so the components of the natural isomorphism are just invertible real linear maps.

 Alternatively, however, one could demand a stricter notion of isomorphism between ontological models
 by viewing them as living in $\mathsf{SubStoch}$, in which case the components of a natural isomorphism would be invertible stochastic maps, i.e., permutations of the ontic states. This is a much stricter notion of isomorphism, and it is likely that in this sense there are many different ontological models for a given GPT. In certain situations, however, even this stricter notion can be proven, see, for example, Ref.~\cite{schmid2021only}.

\section{Revisiting our assumptions} \label{revisassump}

We have derived surprisingly strong constraints on the form of noncontextual ontological models of operational theories,
and so it is important to examine the assumptions that went into deriving these constraints.
These concerned both the types of operational theories under consideration and the types of ontological representations of these,
and were summarized in Section~\ref{secassumptions}. The majority of these are ubiquitous and well-motivated.
The only notable restriction on the scope of operational theories we consider is the one induced by our  assumption of tomographic locality  (as discussed further in Section~\ref{neclt}).
Similarly, the only notable restriction
on ontological (and quasiprobabilistic) models that is warranting of further discussion is that they are diagram-preserving.

\subsection{Revisiting diagram preservation}

In the case of ontological models, we will provide below a motivation for the instances of diagram preservation that we required for our proofs. Since we have defined quasiprobabilistic models as representations of operational theories wherein the {\em only} difference from an ontological model is that the probabilities are allowed to become quasiprobabilities (i.e., drawn from the reals rather than the interval $[0,1]$), it follows that these same motivations are also applicable to them.
It is worth noting that among the quasiprobabilistic representations for continuous variable quantum systems that are most studied in the literature, the Wigner representation satisfies our definition\footnote{Strictly speaking, one would need to generalize our definition to the case of infinite-dimensional GPTs.} while the Q~\cite{Husimi} and P representations~\cite{Pfunc1,Pfunc2} do not (as they are defined by overcomplete frames).  There are also examples of both types among quasiprobabilistic representations of finite-dimensional systems in quantum theory.
In particular, Ref.~\cite{citeGHW} defines a family of discrete Wigner representations, some of which satisfy the assumption of diagram preservation, and some of which do not. Of special note among those that satisfy the assumption is Gross' discrete Wigner representation~\cite{gross2006},  which is the unique representation in this family that satisfies a natural covariance property. 
Ref.~\cite{schmid2021only} further shows that this is the unique noncontextual ontological model for stabilizer subtheories in odd dimensions.

Although we endorse diagram preservation in its most general form, it is worth noting that
our main results (given in Section~\ref{mainresults}) require only the following very specific {\em instances} of that assumption:
\begin{enumerate}[label=(\roman*)]
\item diagram preservation for prepare-measure scenarios,
\begin{equation}\label{dp1}
\widetilde{\xi}::\
	\begin{tikzpicture}
	\begin{pgfonlayer}{nodelayer}
		\node [style=copoint] (0) at (0, .75) {$\widetilde{E}$};
		\node [style=point] (1) at (0, -.75) {$\widetilde{P}$};
		\node [style=none] (4) at (0.75, 0.25) {};
		\node [style=none] (13) at (0, -0.25) {};
	\end{pgfonlayer}
	\begin{pgfonlayer}{edgelayer}
		\draw [qWire](1) to (0);
	\end{pgfonlayer}
\end{tikzpicture}
\quad \mapsto\quad \begin{tikzpicture}
	\begin{pgfonlayer}{nodelayer}
		\node [style=copoint] (0) at (0, 1) {$\widetilde{E}$};
		\node [style=point] (1) at (0, -1) {$\widetilde{P}$};
		\node [style=none] (2) at (0.5, 0.375) {\tiny$\tilde{\xi}$};
		\node [style=none] (3) at (0.5, -1.5) {\tiny$\tilde{\xi}$};
		\node [style=none] (4) at (0.75, 0.125) {};
		\node [style=none] (5) at (0.75, 1.75) {};
		\node [style=none] (6) at (-0.75, 1.75) {};
		\node [style=none] (7) at (-0.75, 0.125) {};
		\node [style=none] (8) at (-0.75, -0.25) {};
		\node [style=none] (9) at (-0.75, -1.75) {};
		\node [style=none] (10) at (0.75, -1.75) {};
		\node [style=none] (11) at (0.75, -0.25) {};
		\node [style=none] (12) at (0, 0.25) {};
		\node [style=none] (13) at (0, -0.25) {};
	\end{pgfonlayer}
	\begin{pgfonlayer}{edgelayer}
		\filldraw[fill=Red!20,draw=Red!40](5.center) to (6.center) to (7.center) to (4.center) to cycle;
		\filldraw[fill=Red!20,draw=Red!40](8.center) to (9.center) to (10.center) to (11.center) to cycle;
		\draw [qWire](1) to (13.center);
		\draw (13.center) to (12.center);
		\draw [qWire](12.center) to (0);
	\end{pgfonlayer}
\end{tikzpicture}
.
\end{equation}
and diagram preservation for measure-and-reprepare processes
\beq \label{dp2}
\widetilde{\xi}::\
	\begin{tikzpicture}
	\begin{pgfonlayer}{nodelayer}
		\node [style=none] (0) at (0, 2) {};
		\node [style=none] (1) at (0, -2) {};
		\node [style=copoint] (10) at (0, -1) {$\widetilde{E}$};
		\node [style=point] (11) at (0, 1) {$\widetilde{P}$};
	\end{pgfonlayer}
	\begin{pgfonlayer}{edgelayer}
		\draw [style=qWire] (10) to (1.center);
		\draw [style=qWire] (11) to (0.center);
	\end{pgfonlayer}
\end{tikzpicture}
\quad \mapsto\quad \begin{tikzpicture}
	\begin{pgfonlayer}{nodelayer}
		\node [style=none] (0) at (0, 2) {};
		\node [style=none] (1) at (0, -2) {};
		\node [style=none] (2) at (0.75, -1.75) {\tiny$\tilde{\xi}$};
		\node [style=none] (3) at (0, 2.75) {};
		\node [style=none] (4) at (0, -2.75) {};
		\node [style=none] (5) at (-1, -0.25) {};
		\node [style=none] (6) at (1, -0.25) {};
		\node [style=none] (7) at (1, -2) {};
		\node [style=none] (8) at (-1, -2) {};
		\node [style=copoint] (9) at (0, -1.25) {$\widetilde{E}$};
		\node [style=point] (10) at (0, 1.25) {$\widetilde{P}$};
		\node [style=none] (11) at (0.75, 0.5) {\tiny$\tilde{\xi}$};
		\node [style=none] (12) at (-1, 2) {};
		\node [style=none] (13) at (-1, 0.25) {};
		\node [style=none] (14) at (1, 0.25) {};
		\node [style=none] (15) at (1, 2) {};
	\end{pgfonlayer}
	\begin{pgfonlayer}{edgelayer}
		\filldraw[fill=Red!20,draw=Red!40](5.center) to (6.center) to (7.center) to (8.center) to cycle;
		\filldraw[fill=Red!20,draw=Red!40](12.center) to (13.center) to (14.center) to (15.center) to cycle;
		\draw (3.center) to (0.center);
		\draw (1.center) to (4.center);
		\draw [style=qWire] (9) to (1.center);
		\draw [style=qWire] (10) to (0.center);
	\end{pgfonlayer}
\end{tikzpicture}
,
\eeq
\item  diagram preservation for the identity process
\begin{equation}\label{dp3}
\widetilde{\xi}::\ \begin{tikzpicture}
	\begin{pgfonlayer}{nodelayer}
		\node [style=none] (0) at (0, 1) {};
		\node [style=none] (1) at (0, -1) {};
		\node [style=right label] (2) at (0, -.5) {$A$};
	\end{pgfonlayer}
	\begin{pgfonlayer}{edgelayer}
		\draw [qWire](0.center) to (1.center);
	\end{pgfonlayer}
\end{tikzpicture}
\quad \mapsto\quad \begin{tikzpicture}
	\begin{pgfonlayer}{nodelayer}
		\node [style=none] (0) at (0, 1) {};
		\node [style=none] (1) at (0, -1) {};
		\node [style=right label] (2) at (0, -.5) {$\Lambda_A$};
	\end{pgfonlayer}
	\begin{pgfonlayer}{edgelayer}
		\draw (0.center) to (1.center);
	\end{pgfonlayer}
\end{tikzpicture}.
\end{equation}
\end{enumerate}

These are easily justified.

Eq.~\eqref{dp1} captures the idea that the ontic state is the complete causal mediary between the preparation and the effect.  This assumption is built into the very definition of the standard ontological models framework
(implicitly in early work~\cite{Spekkens2005,Harrigan} and explicitly in later work~\cite{Krishna_2017,Schmid2018}), and is 
 assumed in virtually every past work on ontological models.

Eq.~\eqref{dp2} is a similarly natural assumption.
The natural view of the process
\beq
\begin{tikzpicture}
	\begin{pgfonlayer}{nodelayer}
		\node [style=none] (0) at (0, 2) {};
		\node [style=none] (1) at (0, -2) {};
		\node [style=none] (3) at (0, 2.75) {};
		\node [style=none] (4) at (0, -2.75) {};
		\node [style=none] (5) at (-1, -0.25) {};
		\node [style=none] (6) at (1, -0.25) {};
		\node [style=none] (7) at (1, -2) {};
		\node [style=none] (8) at (-1, -2) {};
		\node [style=copoint] (9) at (0, -1) {$\widetilde{E}$};
		\node [style=point] (10) at (0, 1) {$\widetilde{P}$};
		\node [style=none] (12) at (-1, 2) {};
		\node [style=none] (13) at (-1, 0.25) {};
		\node [style=none] (14) at (1, 0.25) {};
		\node [style=none] (15) at (1, 2) {};
	\end{pgfonlayer}
	\begin{pgfonlayer}{edgelayer}
		\draw [style=qWire] (9) to (1.center);
		\draw [style=qWire] (10) to (0.center);
	\end{pgfonlayer}
\end{tikzpicture}
\eeq
is that one has observed effect $\widetilde{E}$ and then one has independently implemented the preparation $\widetilde{P}$. There need not be any system acting as a causal mediary between $\widetilde{E}$ and $\widetilde{P}$.
  The natural ontological representation, therefore,
is one wherein
there is no ontic state mediating the two processes,
as depicted in Eq.~\eqref{dp2}.  Although we are not aware of this assumption having been made in previous works, it is directly analogous to the
preparation-independence assumption made in Ref.~\cite{Pusey2012} (which involved two independent states, rather than an independent effect and state).

Eq.~\eqref{dp3} can be justified by noting that within the equivalence class of procedures associated to the identity operation in the GPT, there is the one which corresponds to waiting for a vanishing amount of time.
In any reasonable physical theory, no evolution is possible in vanishing time, and hence the only valid ontological representation of such an equivalence class of procedures is the identity map on the ontic state space.

Because we consider the full assumption of diagram preservation to be a natural generalization of all of these specific assumptions, we have endorsed it in our definitions.
 See Appendix B of Ref.~\cite{schmid2020unscrambling} for a more thorough defense of this full assumption.

\subsection{Necessity of tomographic locality} \label{neclt}
The assumption of tomographic locality is common in the GPT literature, so we will not attempt a defense of it here. Nevertheless, it is natural to ask if the assumption is actually necessary to obtain our structure theorems. Here we provide an example which shows that it is. The operational theory we consider in our example is the real-amplitude version of the qutrit stabilizer subtheory of quantum theory\footnote{Note that in order for this to be an operational theory according to our definitions herein, we are here referring to the `convexified' stabilizer subtheory in which all convex combinations of, for example, the pure stabilizer states are permitted.}. In this subtheory, two systems are described by $45$ parameters, whereas only $6^2 = 36$ parameters are available from local measurements, which immediately implies that the theory is not tomographically local (just as the real-amplitude version of the full quantum theory fails to be tomographically local~\cite{hardy2012limited}).

To begin with, consider the standard (complex-amplitude) qutrit stabilizer subtheory. Gross's discrete Wigner function~\cite{gross2006} 
provides a (diagram-preserving) quasiprobability representation for qutrits for which the stabilizer subtheory is positively represented.  By Corollary~\ref{cor:three}, this corresponds to a noncontextual ontological model of the subtheory. 
Indeed, this ontological model has been examined in Ref.~\cite{epistricted}, where it is shown that it can be reconstructed from an ``epistemic restriction''. Since the standard qutrit stabilizer subtheory is tomographically local, these models obey our structure theorems. In particular, the representation of $n$ qutrits uses $9^n$ ontic states, matching the dimension of the relevant space of density matrices.

Now consider the subtheory consisting of only those qutrit stabilizer procedures that can be represented using real amplitudes. This does not introduce any new operational equivalences, and so the model discussed above is still noncontextual when restricted to this subtheory. But now our structure theorem does not hold, because this model still uses $9^n$ ontic states even though the density matrices now live in a $\frac12 3^n(3^n+1)$-dimensional space.

Moreover, we can show that this sort of example is rather generic, that is, that there is no hope of obtaining a structure theorem with our dimension bound for any theory that is not tomographically local. 

Suppose that we have any ontological representation $\widetilde{\xi}$ of some GPT wherein each GPT system of dimension $d$ is represented by an ontic state space of cardinality $d$. Then the GPT is necessarily tomographically local.

To see this, consider the representation of GPT transformations from this state space to itself. Then, the map $\widetilde{\xi}$ is a linear map from the space of transformations to
$d\times d$ substochastic matrices, which are $d^2$ dimensional. By empirical
adequacy $\xi$ is injective, and so the space of GPT transformations is at most $d^2$
dimensional. But the effect-prepare channels already span $d^2$
dimensions, so there cannot be any channels outside this span. Hence,
by Corollary~\ref{corolTL}, the theory is tomographically local.


\section{Outlook}

These results can be directly applied to the study of contextuality in specific scenarios and theories. For instance, we have already seen that our dimension bound is a useful tool for obtaining novel proofs of contextuality (e.g., via Hardy's ontological excess baggage theorem \cite{Hardy2004} or for the 8-state model of Ref.~\cite{8state}), and for providing novel algorithms for deriving noise-robust noncontextuality inequalities (namely, the algorithm in Ref.~\cite{schmid2019characterization} but informed by our dimension bound).
 It remains to be seen whether other algorithms for witnessing nonclassicality, such as those in Ref.~\cite{Schmid2018} or Ref~\cite{Krishna_2017}, could be extended within our framework to more general compositional scenarios.

Our formalism is also ideally suited to understanding the information-theoretic advantages afforded by  contextual operational theories, such as for computational speedup, since it has the compositional flexibility to describe arbitrary scenarios, such as families of circuits which arise in the gate-based model of computation.
In fact, our structure theorem is a major first step in simplifying the proof that contextuality is a necessary resource for the state-injection model of quantum computing~\cite{magic,schmid2021only}. Ref.~\cite{schmid2021only} shows that such a proof can proceed by applying our structure theorem to show that the {\em only} positive quasiprobabilistic models of the (classically-simulable) stabilizer subtheory for odd dimensions are given by Gross's discrete Wigner function~\cite{gross2006}; then, the known fact that the injected resource states necessarily have negative representation in this particular model establishes the result in a direct and elegant fashion.

The key limitation of our results is the assumption that the  GPT associated to the operational theory under consideration is tomographically local.
There are two potential approaches to dealing with this limitation. On the one hand, one could provide an argument that theories which are not tomographically local are undesirable in some principled sense. For example, it seems likely that one can rule them out on the grounds that they violate Leibniz's methodological principle~\cite{Leibniz}. From a practical perspective, wherein the goal is to experimentally verify nonclassicality in a theory-independent manner, one would instead be motivated to seek experimental evidence that nature truly satisfies tomographic locality, independent of the validity of quantum theory. One possible approach to this end would be to extend
 the techniques introduced in \cite{bootstraptomography} to composite systems.

\section*{Acknowledgements}
We thank Matt Leifer, Lorenzo Catani, Shane Mansfield, Martti Karvonen, Alex Wilce, and our reviewers for useful discussions.
We thank Giulio Chiribella for useful discussions on quotiented operational theories and their relation to GPTs.
RWS thanks Bob Coecke for early discussions on the topic of understanding noncontextuality in a category-theoretic way. JHS thanks Lucien Hardy for many discussions on tomographic locality and the duotensor formalism during PSI and since, and thanks John van de Wetering for helpful discussions on his work on quasistochastic representations of quantum theory. 
JHS also thanks Matt Wilson for his help on understanding the relationship between diagram preserving maps and strong monoidal functors. 
We also acknowledge the First Perimeter Institute-Chapman Workshop on Quantum Foundations (PIMan) as the venue that spawned this collaborative project.
DS was supported by a Vanier Canada Graduate Scholarship. MFP is supported by the Royal Commission for the Exhibition of 1851. JHS and DS are supported by the
National Science Centre, Poland (Opus project, Categorical
Foundations of the Non-Classicality of Nature, project
no. 2021/41/B/ST2/03149). This research was supported by Perimeter Institute for Theoretical Physics. Research at Perimeter Institute is supported in part by the Government of Canada through the Department of Innovation, Science and Economic Development Canada and by the Province of Ontario through the Ministry of Colleges and Universities.

\bibliographystyle{apsrev4-2}
\bibliography{context}

\appendix
\onecolumngrid

\section{Context-dependence in representations of GPTs}\label{contextsingpts}

In the main text, we stated that the notion of an ontological model of a GPT that we have defined cannot be said to be either generalized-contextual or generalized-noncontextual (unlike our notion of an ontological model of an operational theory). 
We will now elaborate on this point.

Consider the contexts that one may wish to associate to a GPT state. 
One of the examples which appears in the literature corresponds to different decompositions of the GPT state into mixtures of other GPT states, for example:
\beq
\sum_i p_i s_i = s = \sum_j q_j s'_j.\label{gptconvexcontext}
\eeq
Now consider any ontological representation map which has the GPT as its domain. In the GPT, all three terms in Eq.~\eqref{gptconvexcontext} are {\em strictly equal}, and hence all three map to the same probability distribution over $\Lambda$. As such, there is no possibility for the map to represent an $s$ arising from the LHS mixture differently from how it represents an $s$ arising from the RHS mixture.

A natural question one might consider in light of this is how one should represent {\em ensembles} of states ontologically. The ensembles of relevance in the example just given are $\{(p_i,s_i)\}$, $\{(q_j,s'_j)\}$, and $\{(1,s)\}$; all of these are operationally equivalent:
\beq
\{(p_i,s_i)\} \sim \{(1,s)\} \sim \{(q_j,s'_j)\},
\eeq
but not strictly equal. If one defines a new kind of ontological representation map which acts on such objects, then it {\em could} take these distinct ensembles to distinct probability distributions over $\Lambda$. One could then meaningfully talk about whether such a representation depended on context or not.

However, the notion of ontological representation for a GPT that we have defined herein has as its domain processes within the GPT (such as states), {\em not} ensembles of such processes. This is also true for the more general quasistochastic representations of GPTs. As such, applying the notion of generalized contextuality to them is a category mistake, just as it would be a category mistake to ask whether a variable $X$ depends on another variable $Y$ if $Y$ cannot possibly vary~\cite{schmid2019characterization}. Because standard quasiprobability representations (such as Wigner's or Gross's) are instances of our definition (and in particular, because they take the domain of the representation to be states and effects rather than ensembles of states or ensembles of effects), it is equally meaningless to ask whether they are noncontextual or contextual. 


Of course, one {\em could} define a map which has as its domain {\em the set of ensembles of GPT processes}. For such a map, it would be appropriate to ask whether or not the map is noncontextual. This is similar to what is done in the causal-inferential framework of Ref.~\cite{schmid2020unscrambling}, where the central objects of study are ensembles of processes corresponding to an agent's knowledge of what process occurred (although with the difference that in this case we consider ensembles of {\em unquotiented} processes). In that context, we formalize the resulting notion of an ontological representation, as well as the natural generalization of the notion of `noncontextuality' that arises for it.

A similar story holds for the notion of context that is relevant for the study of Kochen-Specker contextuality. Consider two measurements, $M_1$ and $M_2$, which we conceptualize as processes with a GPT input and a classical output. Suppose that these each have a particular outcome, labeled $a$ and $b$ respectively, which correspond to the same GPT effect:
\beq
\tikzfig{Diagrams/newapp1}\quad = \quad \tikzfig{Diagrams/newapp2}.
\eeq

The fact that the effect associated to getting outcome $a$ in measurement $M_1$ is {\em strictly equal} to the effect associated to getting outcome $b$ in measurement $M_2$ implies that any map which has the GPT effect space as its domain must represent the two cases identically.  Again, one finds that there is no possibility for a representation map with this specific choice of domain to depend on whether or not the effect was realized using measurement $M_1$ or $M_2$.

But, also as above, one could choose to consider a different kind of ontological representation map in which the domain is no longer the set of GPT processes per se, but something else, which includes, for instance, measurement-outcome pairs. In this particular case, we are interested in pairs $(M_1,a)$ and $(M_2,b)$ which are operationally equivalent, 
\beq
\left(\tikzfig{Diagrams/newapp4},\tikzfig{Diagrams/newapp5}\right) \sim \left(\tikzfig{Diagrams/newapp6},\tikzfig{Diagrams/newapp7}\right),
\eeq
but not strictly equal. 
If one defines a new kind of ontological representation map which acts on such objects, then it {\em could} take distinct such objects 
 to distinct  response functions. One could then meaningfully talk about whether such a representation depended on context or not. This is what is typically done (if only implicitly) in the study of Kochen-Specker noncontextuality.

\section{Proof of the structure theorem (Theorem~\ref{mainthm})} \label{mainproof}

We now complete the proof of Theorem~\ref{mainthm}, as sketched in the main text.

\begin{proof}
Since we are assuming tomographic locality of the GPT, Corollary~\ref{cor:Tdecomp} immediately gives
\begin{equation}
\tikzfig{Diagrams/50_Proof2}\quad =\quad \tikzfig{Diagrams/51_Proof3}.
\end{equation}
Since $M$ is convex-linear  and preserves the zero processes\footnote{ This follows from the fact that one can construct the zero process by composing a state and an effect with the zero scalar as $0_A^B = \widetilde{P}\circ 0 \circ \widetilde{E}$. Then, by empirical adequacy of $M$, one has $M(0)=0$, and so diagram-preservation of $M$ then gives $M(0_A^B) = M(\widetilde{P})\circ M(0) \circ M(\widetilde{E}) = M(\widetilde{P})\circ 0 \circ M(\widetilde{E}) = 0_{V_A}^{V_B}$.},
 and since  
the effect-state channels span the vector space, $M$ can be uniquely extended to a linear map $\hat{M}$.  Hence, 
\begin{equation}
\tikzfig{Diagrams/51_Proof3}\quad =\quad \tikzfig{Diagrams/51_Proof3RHS}.
\end{equation}
Now, using the linearity of $\hat{M}$, we have
\begin{equation}
\tikzfig{Diagrams/51_Proof3RHS}\quad =\ \sum_{ij} r_{ij}\ \tikzfig{Diagrams/54_Proof6hat}.
\end{equation}
Noting that in this diagram, $\hat{M}$ is only applied to objects in the domain of $M$, on which the two maps act identically (by the fact that the former is the linear extension of the latter), one has
\begin{equation}
\sum_{ij} r_{ij} \tikzfig{Diagrams/54_Proof6hat}\quad =\ \sum_{ij} r_{ij}\ \tikzfig{Diagrams/54_Proof6}\quad =\ \sum_{ij} r_{ij}\ \tikzfig{Diagrams/55_Proof7}.
\end{equation}
where the last step follows from the fact that $M$ is diagram-preserving.
In summary, we have shown that
\begin{equation}
\tikzfig{Diagrams/50_Proof2}\ =\ \sum_{ij} r_{ij} \ \tikzfig{Diagrams/55_Proof7},\label{eq:step1conc}
\end{equation}
as claimed in Eq.~\eqref{step1}.

Next, we analyse $M$ in the specific case of a state $\widetilde{P}_i$:
\tikeq[.]{59_Proof11}
Since the DP map $M$  has a unique linear extension which  takes the vector space of GPT states $B$ to the vector space $V_B$, and since both of these are in $\RL$, one can uniquely re-interpret the action of $M$ as a process $\chi$ within $\RL$:
\begin{equation}\tikzfig{Diagrams/59_Proof11}\quad =\quad \tikzfig{Diagrams/60_Proof12}.\label{eq:step1state}\end{equation}
 In particular, we are using the fact that a linear map $L:\mathcal{L}(\mathds{R},V)\to \mathcal{L}(\mathds{R},V')$ can always be uniquely represented by a linear map $l:V\to V'$ by exploiting the fact that $\mathcal{L}(\mathds{R},V) \cong V$.    The fact that $\chi_B$ is the unique linear map satisfying Eq.~\eqref{eq:step1state}, means that there is no possibility for making other choices for the $\chi_A$ appearing in Eq.~\eqref{eq:repTrans}.

Similarly, $M$ on effects $\widetilde{E}_j$ has a unique linear extension and takes functionals on GPT states to functionals on $V_A$; in other words, $M$ is the adjoint of a process $\phi$ within $\RL$:
\begin{equation}\tikzfig{Diagrams/65_Proof17}\quad =\quad \tikzfig{Diagrams/66_Proof18},\label{eq:step1effect}\end{equation}
 In particular, we are using the fact that a linear map $L:\mathcal{L}(V,\mathds{R}) \to \mathcal{L}(V',\mathds{R})$ can always be uniquely represented by a linear map $l:V'\to V$ by exploiting the fact that $\mathcal{L}(V,\mathds{R})\cong V^*$ and that $\mathcal{L}(V^*,V'^*)\cong \mathcal{L}(V',V)$. 
Combining this with Eq.~\eqref{eq:step1conc}, we have
\begin{equation}
	\tikzfig{Diagrams/67_Proof19}\quad =\ \tikzfig{Diagrams/101}\quad =\quad \tikzfig{Diagrams/68_Proof20}.\label{eq:step2conc}
\end{equation}

All that remains is to show that $\chi_A$ and $\phi_A$ are inverses. Consider the special case that $\widetilde{T}$ is the identity, then Eq.~\eqref{eq:step2conc} becomes
\begin{equation}
	\tikzfig{Diagrams/70_Proof22}\quad =\quad \tikzfig{Diagrams/71_Proof23}.
\end{equation}
Since $M$ is diagram-preserving, it maps identity to identity, and so this becomes
\begin{equation}
	\tikzfig{Diagrams/69_Proof21}\quad =\quad \tikzfig{Diagrams/71_Proof23}.\label{eq:step3a}
\end{equation}

Now consider a state $\widetilde{P}$ followed by an effect $\widetilde{E}$. This gives a probability, and since $M$ is empirically adequate it must preserve this probability:
\begin{equation}
	\tikzfig{Diagrams/72_Proof24}\quad =\quad \tikzfig{Diagrams/73b_oneM},
\end{equation}
and since $M$ is diagram-preserving,
\begin{equation}
	\tikzfig{Diagrams/73b_oneM}\quad =\quad \tikzfig{Diagrams/73_Proof25}.
\end{equation}
Combining this with Eqs.~\eqref{eq:step1state} and \eqref{eq:step1effect} gives
\begin{equation}
	\tikzfig{Diagrams/72_Proof24} = \tikzfig{Diagrams/74_Proof26}.
\end{equation}
Since this holds for all $\widetilde{E}$ and $\widetilde{P}$, tomographic locality implies that the $\widetilde{E}$ span $A^*$ and the $\widetilde{P}$ span $A$, and we have that 
\begin{equation}
	\tikzfig{Diagrams/75_Proof27} = \tikzfig{Diagrams/76_Proof28}.
\end{equation}
Combining this with Eq.~\eqref{eq:step3a} gives that $\chi$ and $\phi$ are inverses of each other.  Hence, we can write that $\phi_A=\chi_A^{-1}$ and so rewrite Eq.~\eqref{eq:step2conc} as
\begin{equation}
	\tikzfig{Diagrams/67_Proof19}\quad =\quad \tikzfig{Diagrams/68_Proof20}\quad = \quad \tikzfig{Diagrams/42_QuasiRep2},
\end{equation}
which completes the proof. 
\end{proof}

\section{Completing the proof of Proposition~\ref{thm:NCOMandOM}} \label{comptheproof}
The key argument required to establish Prop~\ref{thm:NCOMandOM} was given just after the proposition itself, but we now complete the proof.

We now prove that $\xiNC :=   \widetilde{\xi} \circ \sim$ is indeed a valid ontological model of an operational theory if $\widetilde{\xi}$ is a valid ontological model of a GPT. To do so, we show that each of the three properties (enumerated in Definition~\ref{defnontop}) that $\xiNC$ should satisfy is implied by the corresponding property (enumerated in Definition~\ref{defnontgpt}) that $\widetilde{\xi}$ is assumed to satisfy by virtue of being an ontological model of a GPT.

First, recall that we assumed that all deterministic effects in the operational theory are operationally equivalent. Hence, the map $\sim$ will take any such deterministic effect to the unique deterministic effect in the GPT, which (by property 1 of Definition~\ref{defnontgpt}) must be represented by the unit vector $\bf{1}$. Hence, $\xiNC$ represents all deterministic effects in the operational theory appropriately, namely as the unit vector $\bf{1}$.

Second, recall that $\sim$ preserves the operational predictions of the operational theory; hence, the fact that (by property 2 of Definition~\ref{defnontgpt}) $\widetilde{\xi}$ preserves the operational predictions of the GPT implies that $\xiNC :=   \widetilde{\xi} \circ \sim$ preserves the operational predictions of the operational theory.

Third, recall that if, in the operational theory, $P_1$ is a procedure that is a mixture of $P_2$ and $P_3$ with weights $\omega$ and $1-\omega$, then it follows that under $\sim$, one has
\beq
\widetilde{P_1} = \omega \widetilde{P_2} + (1-\omega) \widetilde{P_3}.
\eeq
Hence, the fact that (by property 3 of Definition~\ref{defnontgpt}) the representations of these three processes under $\widetilde{\xi}$ satisfy
 \beq
\begin{tikzpicture}
	\begin{pgfonlayer}{nodelayer}
		\node [style=none] (0) at (0, 0) {$
\widetilde{P_1}$};
		\node [style=none] (1) at (-0.5, 0.5) {};
		\node [style=none] (2) at (0.5, 0.5) {};
		\node [style=none] (3) at (0.5, -0.5) {};
		\node [style=none] (4) at (-0.5, -0.5) {};
		\node [style=none] (5) at (0, 0.5) {};
		\node [style=none] (6) at (0, 1.5) {};
		\node [style=none] (7) at (0, -0.5) {};
		\node [style=none] (8) at (0, -1.5) {};
		\node [style=none] (13) at (1.25, -1.25) {\tiny $\widetilde{\xi}$};
		\node [style=none] (14) at (1.5, -1.5) {};
		\node [style=none] (15) at (1.5, 1.5) {};
		\node [style=none] (16) at (-1.5, 1.5) {};
		\node [style=none] (17) at (-1.5, -1.5) {};
		\node [style=none] (18) at (0, 2) {};
		\node [style=none] (19) at (0, -2) {};
	\end{pgfonlayer}
	\begin{pgfonlayer}{edgelayer}
			\filldraw[fill=Red!20,draw=Red!40](14.center) to (15.center) to (16.center) to (17.center) to cycle;
		\filldraw[fill=white,draw=black] (1.center) to (2.center) to (3.center) to (4.center) to cycle;
		\draw [qWire] (5.center) to (6.center);
		\draw [qWire] (8.center) to (7.center);
		\draw (18.center) to (6.center);
		\draw (8.center) to (19.center);
	\end{pgfonlayer}
\end{tikzpicture}
\ = \omega\
\begin{tikzpicture}
	\begin{pgfonlayer}{nodelayer}
		\node [style=none] (0) at (0, 0) {$
\widetilde{P_2}$};
		\node [style=none] (1) at (-0.5, 0.5) {};
		\node [style=none] (2) at (0.5, 0.5) {};
		\node [style=none] (3) at (0.5, -0.5) {};
		\node [style=none] (4) at (-0.5, -0.5) {};
		\node [style=none] (5) at (0, 0.5) {};
		\node [style=none] (6) at (0, 1.5) {};
		\node [style=none] (7) at (0, -0.5) {};
		\node [style=none] (8) at (0, -1.5) {};
		\node [style=none] (13) at (1.25, -1.25) {\tiny $\widetilde{\xi}$};
		\node [style=none] (14) at (1.5, -1.5) {};
		\node [style=none] (15) at (1.5, 1.5) {};
		\node [style=none] (16) at (-1.5, 1.5) {};
		\node [style=none] (17) at (-1.5, -1.5) {};
		\node [style=none] (18) at (0, 2) {};
		\node [style=none] (19) at (0, -2) {};
	\end{pgfonlayer}
	\begin{pgfonlayer}{edgelayer}
			\filldraw[fill=Red!20,draw=Red!40](14.center) to (15.center) to (16.center) to (17.center) to cycle;
		\filldraw[fill=white,draw=black] (1.center) to (2.center) to (3.center) to (4.center) to cycle;
		\draw [qWire] (5.center) to (6.center);
		\draw [qWire] (8.center) to (7.center);
		\draw (18.center) to (6.center);
		\draw (8.center) to (19.center);
	\end{pgfonlayer}
\end{tikzpicture}
\ +(1-\omega)\
\begin{tikzpicture}
	\begin{pgfonlayer}{nodelayer}
		\node [style=none] (0) at (0, 0) {$
\widetilde{P_3}$};
		\node [style=none] (1) at (-0.5, 0.5) {};
		\node [style=none] (2) at (0.5, 0.5) {};
		\node [style=none] (3) at (0.5, -0.5) {};
		\node [style=none] (4) at (-0.5, -0.5) {};
		\node [style=none] (5) at (0, 0.5) {};
		\node [style=none] (6) at (0, 1.5) {};
		\node [style=none] (7) at (0, -0.5) {};
		\node [style=none] (8) at (0, -1.5) {};
		\node [style=none] (13) at (1.25, -1.25) {\tiny $\widetilde{\xi}$};
		\node [style=none] (14) at (1.5, -1.5) {};
		\node [style=none] (15) at (1.5, 1.5) {};
		\node [style=none] (16) at (-1.5, 1.5) {};
		\node [style=none] (17) at (-1.5, -1.5) {};
		\node [style=none] (18) at (0, 2) {};
		\node [style=none] (19) at (0, -2) {};
	\end{pgfonlayer}
	\begin{pgfonlayer}{edgelayer}
			\filldraw[fill=Red!20,draw=Red!40](14.center) to (15.center) to (16.center) to (17.center) to cycle;
		\filldraw[fill=white,draw=black] (1.center) to (2.center) to (3.center) to (4.center) to cycle;
		\draw [qWire] (5.center) to (6.center);
		\draw [qWire] (8.center) to (7.center);
		\draw (18.center) to (6.center);
		\draw (8.center) to (19.center);
	\end{pgfonlayer}
\end{tikzpicture}
\eeq
implies that the representations of $P_1$, $P_2$, and $P_3$ satisfy
\beq
\begin{tikzpicture}
	\begin{pgfonlayer}{nodelayer}
		\node [style=none] (0) at (0, 0) {$
{P_1}$};
		\node [style=none] (1) at (-0.5, 0.5) {};
		\node [style=none] (2) at (0.5, 0.5) {};
		\node [style=none] (3) at (0.5, -0.5) {};
		\node [style=none] (4) at (-0.5, -0.5) {};
		\node [style=none] (5) at (0, 0.5) {};
		\node [style=none] (6) at (0, 1.5) {};
		\node [style=none] (7) at (0, -0.5) {};
		\node [style=none] (8) at (0, -1.5) {};
		\node [style=none] (13) at (1.25, -1.25) {\tiny ${\xiNC}$};
		\node [style=none] (14) at (1.5, -1.5) {};
		\node [style=none] (15) at (1.5, 1.5) {};
		\node [style=none] (16) at (-1.5, 1.5) {};
		\node [style=none] (17) at (-1.5, -1.5) {};
		\node [style=none] (18) at (0, 2) {};
		\node [style=none] (19) at (0, -2) {};
	\end{pgfonlayer}
	\begin{pgfonlayer}{edgelayer}
			\filldraw[fill=black!30!BurntOrange!30,draw=black!40!BurntOrange!40](14.center) to (15.center) to (16.center) to (17.center) to cycle;
		\filldraw[fill=white,draw=black] (1.center) to (2.center) to (3.center) to (4.center) to cycle;
		\draw [qWire] (5.center) to (6.center);
		\draw [qWire] (8.center) to (7.center);
		\draw (18.center) to (6.center);
		\draw (8.center) to (19.center);
	\end{pgfonlayer}
\end{tikzpicture}
\ = \omega\
\begin{tikzpicture}
	\begin{pgfonlayer}{nodelayer}
		\node [style=none] (0) at (0, 0) {$
{P_2}$};
		\node [style=none] (1) at (-0.5, 0.5) {};
		\node [style=none] (2) at (0.5, 0.5) {};
		\node [style=none] (3) at (0.5, -0.5) {};
		\node [style=none] (4) at (-0.5, -0.5) {};
		\node [style=none] (5) at (0, 0.5) {};
		\node [style=none] (6) at (0, 1.5) {};
		\node [style=none] (7) at (0, -0.5) {};
		\node [style=none] (8) at (0, -1.5) {};
		\node [style=none] (13) at (1.25, -1.25) {\tiny ${\xiNC}$};
		\node [style=none] (14) at (1.5, -1.5) {};
		\node [style=none] (15) at (1.5, 1.5) {};
		\node [style=none] (16) at (-1.5, 1.5) {};
		\node [style=none] (17) at (-1.5, -1.5) {};
		\node [style=none] (18) at (0, 2) {};
		\node [style=none] (19) at (0, -2) {};
	\end{pgfonlayer}
	\begin{pgfonlayer}{edgelayer}
			\filldraw[fill=black!30!BurntOrange!30,draw=black!40!BurntOrange!40](14.center) to (15.center) to (16.center) to (17.center) to cycle;
		\filldraw[fill=white,draw=black] (1.center) to (2.center) to (3.center) to (4.center) to cycle;
		\draw [qWire] (5.center) to (6.center);
		\draw [qWire] (8.center) to (7.center);
		\draw (18.center) to (6.center);
		\draw (8.center) to (19.center);
	\end{pgfonlayer}
\end{tikzpicture}
\ +(1-\omega)\
\begin{tikzpicture}
	\begin{pgfonlayer}{nodelayer}
		\node [style=none] (0) at (0, 0) {$
{P_3}$};
		\node [style=none] (1) at (-0.5, 0.5) {};
		\node [style=none] (2) at (0.5, 0.5) {};
		\node [style=none] (3) at (0.5, -0.5) {};
		\node [style=none] (4) at (-0.5, -0.5) {};
		\node [style=none] (5) at (0, 0.5) {};
		\node [style=none] (6) at (0, 1.5) {};
		\node [style=none] (7) at (0, -0.5) {};
		\node [style=none] (8) at (0, -1.5) {};
		\node [style=none] (13) at (1.25, -1.25) {\tiny ${\xiNC}$};
		\node [style=none] (14) at (1.5, -1.5) {};
		\node [style=none] (15) at (1.5, 1.5) {};
		\node [style=none] (16) at (-1.5, 1.5) {};
		\node [style=none] (17) at (-1.5, -1.5) {};
		\node [style=none] (18) at (0, 2) {};
		\node [style=none] (19) at (0, -2) {};
	\end{pgfonlayer}
	\begin{pgfonlayer}{edgelayer}
			\filldraw[fill=black!30!BurntOrange!30,draw=black!40!BurntOrange!40](14.center) to (15.center) to (16.center) to (17.center) to cycle;
		\filldraw[fill=white,draw=black] (1.center) to (2.center) to (3.center) to (4.center) to cycle;
		\draw [qWire] (5.center) to (6.center);
		\draw [qWire] (8.center) to (7.center);
		\draw (18.center) to (6.center);
		\draw (8.center) to (19.center);
	\end{pgfonlayer}
\end{tikzpicture}.
\eeq

Hence $\xiNC$ satisfies all the properties of an ontological model of an operational theory.

Conversely, we prove that $\widetilde{\xi}:=\xiNC \circ C$ is a valid ontological model of a GPT if $\xiNC$ is a valid noncontextual ontological model of an operational theory. To do so, we show that each of the three properties (enumerated in Definition~\ref{defnontgpt}) that $\widetilde{\xi}$  should satisfy is implied by the corresponding property (enumerated in Definition~\ref{defnontop}) that $\xiNC$ is assumed to satisfy by virtue of being an ontological model of a GPT.

First, consider the unique deterministic effect in the GPT. Applying $C$ to this process yields one of the many deterministic effects in the operational theory. Because (by property 1 of Definition~\ref{defnontop}) $\xiNC$ maps every one of these to the unit vector $\bf{1}$, it follows that $\widetilde{\xi}:=\xiNC \circ C$ maps the unique deterministic effect to the unit vector $\bf{1}$.

Second, recall that the context of a process is irrelevant for the operational predictions it makes, and that consequently, the map $C$ preserves the operational predictions. Given that (by property 2 of Definition~\ref{defnontop}) $\xiNC$ preserves the operational predictions, $\widetilde{\xi}:=\xiNC \circ C$ also preserves the operational predictions.

Third, consider three processes $\widetilde{P}_1$, $\widetilde{P}_2$, and $\widetilde{P}_3$ such that $\widetilde{P}_1 = \omega \widetilde{P}_2 + (1-\omega)\widetilde{P}_3$ in the GPT. Under $C$, one has processes
$C(\widetilde{P}_1)=(\widetilde{P}_1,c_1)$, $C(\widetilde{P}_2)=(\widetilde{P}_1,c_2)$, and $C(\widetilde{P}_3)=(\widetilde{P}_1,c_3)$ in the operational theory, where $c_i$ are arbitrary contexts specified by the map $C$. The fact that $\widetilde{P}_1 = \omega \widetilde{P}_2 + (1-\omega)\widetilde{P}_3$ implies that $C(\widetilde{P}_1)$ is operationally equivalent to the effective procedure $P_{\rm mix}$ defined as the mixture of $C(\widetilde{P}_2)$
and $C(\widetilde{P}_3)$
with weights $\omega$ and $1-\omega$, respectively. ($C(\widetilde{P}_1)$ may not actually {\em be} this mixture, depending on its context $c_i$, which depends on one's choice of $C$.) By property 3 of Definition~\ref{defnontop}, $\xiNC$ must satisfy
\beq
\begin{tikzpicture}
	\begin{pgfonlayer}{nodelayer}
		\node [style=small box] (0) at (0, 0) {$
P_{mix}$};
		\node [style=none] (1) at (-0.5, 0.5) {};
		\node [style=none] (2) at (0.5, 0.5) {};
		\node [style=none] (3) at (0.5, -0.5) {};
		\node [style=none] (4) at (-0.5, -0.5) {};
		\node [style=none] (5) at (0, 0.5) {};
		\node [style=none] (6) at (0, 1.5) {};
		\node [style=none] (7) at (0, -0.5) {};
		\node [style=none] (8) at (0, -1.5) {};
		\node [style=none] (13) at (1.25, -1.25) {\tiny ${\xiNC}$};
		\node [style=none] (14) at (1.5, -1.5) {};
		\node [style=none] (15) at (1.5, 1.5) {};
		\node [style=none] (16) at (-1.5, 1.5) {};
		\node [style=none] (17) at (-1.5, -1.5) {};
		\node [style=none] (18) at (0, 2) {};
		\node [style=none] (19) at (0, -2) {};
	\end{pgfonlayer}
	\begin{pgfonlayer}{edgelayer}
			\filldraw[fill=black!30!BurntOrange!30,draw=black!40!BurntOrange!40](14.center) to (15.center) to (16.center) to (17.center) to cycle;
		\draw [qWire] (5.center) to (6.center);
		\draw [qWire] (8.center) to (7.center);
		\draw (18.center) to (6.center);
		\draw (8.center) to (19.center);
	\end{pgfonlayer}
\end{tikzpicture}
\ = \omega\
\begin{tikzpicture}
	\begin{pgfonlayer}{nodelayer}
		\node [style=small box] (0) at (0, 0) {$
C(\widetilde{P_2})$};
		\node [style=none] (1) at (-0.5, 0.5) {};
		\node [style=none] (2) at (0.5, 0.5) {};
		\node [style=none] (3) at (0.5, -0.5) {};
		\node [style=none] (4) at (-0.5, -0.5) {};
		\node [style=none] (5) at (0, 0.5) {};
		\node [style=none] (6) at (0, 1.5) {};
		\node [style=none] (7) at (0, -0.5) {};
		\node [style=none] (8) at (0, -1.5) {};
		\node [style=none] (13) at (1.25, -1.25) {\tiny ${\xiNC}$};
		\node [style=none] (14) at (1.5, -1.5) {};
		\node [style=none] (15) at (1.5, 1.5) {};
		\node [style=none] (16) at (-1.5, 1.5) {};
		\node [style=none] (17) at (-1.5, -1.5) {};
		\node [style=none] (18) at (0, 2) {};
		\node [style=none] (19) at (0, -2) {};
	\end{pgfonlayer}
	\begin{pgfonlayer}{edgelayer}
			\filldraw[fill=black!30!BurntOrange!30,draw=black!40!BurntOrange!40](14.center) to (15.center) to (16.center) to (17.center) to cycle;
		\draw [qWire] (5.center) to (6.center);
		\draw [qWire] (8.center) to (7.center);
		\draw (18.center) to (6.center);
		\draw (8.center) to (19.center);
	\end{pgfonlayer}
\end{tikzpicture}
\ +(1-\omega)\
\begin{tikzpicture}
	\begin{pgfonlayer}{nodelayer}
		\node [style=small box] (0) at (0, 0) {$
C(\widetilde{P_3})$};
		\node [style=none] (1) at (-0.5, 0.5) {};
		\node [style=none] (2) at (0.5, 0.5) {};
		\node [style=none] (3) at (0.5, -0.5) {};
		\node [style=none] (4) at (-0.5, -0.5) {};
		\node [style=none] (5) at (0, 0.5) {};
		\node [style=none] (6) at (0, 1.5) {};
		\node [style=none] (7) at (0, -0.5) {};
		\node [style=none] (8) at (0, -1.5) {};
		\node [style=none] (13) at (1.25, -1.25) {\tiny ${\xiNC}$};
		\node [style=none] (14) at (1.5, -1.5) {};
		\node [style=none] (15) at (1.5, 1.5) {};
		\node [style=none] (16) at (-1.5, 1.5) {};
		\node [style=none] (17) at (-1.5, -1.5) {};
		\node [style=none] (18) at (0, 2) {};
		\node [style=none] (19) at (0, -2) {};
	\end{pgfonlayer}
	\begin{pgfonlayer}{edgelayer}
			\filldraw[fill=black!30!BurntOrange!30,draw=black!40!BurntOrange!40](14.center) to (15.center) to (16.center) to (17.center) to cycle;
		\draw [qWire] (5.center) to (6.center);
		\draw [qWire] (8.center) to (7.center);
		\draw (18.center) to (6.center);
		\draw (8.center) to (19.center);
	\end{pgfonlayer}
\end{tikzpicture}.
\eeq
But since $\xiNC$ is a noncontextual model and since $C(\widetilde{P}_1)$ is operationally equivalent to $P_{\rm mix}$, it follows that
\beq
\begin{tikzpicture}
	\begin{pgfonlayer}{nodelayer}
		\node [style=small box] (0) at (0, 0) {$
C(\widetilde{P}_1)$};
		\node [style=none] (1) at (-0.5, 0.5) {};
		\node [style=none] (2) at (0.5, 0.5) {};
		\node [style=none] (3) at (0.5, -0.5) {};
		\node [style=none] (4) at (-0.5, -0.5) {};
		\node [style=none] (5) at (0, 0.5) {};
		\node [style=none] (6) at (0, 1.5) {};
		\node [style=none] (7) at (0, -0.5) {};
		\node [style=none] (8) at (0, -1.5) {};
		\node [style=none] (13) at (1.25, -1.25) {\tiny ${\xiNC}$};
		\node [style=none] (14) at (1.5, -1.5) {};
		\node [style=none] (15) at (1.5, 1.5) {};
		\node [style=none] (16) at (-1.5, 1.5) {};
		\node [style=none] (17) at (-1.5, -1.5) {};
		\node [style=none] (18) at (0, 2) {};
		\node [style=none] (19) at (0, -2) {};
	\end{pgfonlayer}
	\begin{pgfonlayer}{edgelayer}
			\filldraw[fill=black!30!BurntOrange!30,draw=black!40!BurntOrange!40](14.center) to (15.center) to (16.center) to (17.center) to cycle;
		\draw [qWire] (5.center) to (6.center);
		\draw [qWire] (8.center) to (7.center);
		\draw (18.center) to (6.center);
		\draw (8.center) to (19.center);
	\end{pgfonlayer}
\end{tikzpicture}
\ = \omega\
\begin{tikzpicture}
	\begin{pgfonlayer}{nodelayer}
		\node [style=small box] (0) at (0, 0) {$
C(\widetilde{P_2})$};
		\node [style=none] (1) at (-0.5, 0.5) {};
		\node [style=none] (2) at (0.5, 0.5) {};
		\node [style=none] (3) at (0.5, -0.5) {};
		\node [style=none] (4) at (-0.5, -0.5) {};
		\node [style=none] (5) at (0, 0.5) {};
		\node [style=none] (6) at (0, 1.5) {};
		\node [style=none] (7) at (0, -0.5) {};
		\node [style=none] (8) at (0, -1.5) {};
		\node [style=none] (13) at (1.25, -1.25) {\tiny ${\xiNC}$};
		\node [style=none] (14) at (1.5, -1.5) {};
		\node [style=none] (15) at (1.5, 1.5) {};
		\node [style=none] (16) at (-1.5, 1.5) {};
		\node [style=none] (17) at (-1.5, -1.5) {};
		\node [style=none] (18) at (0, 2) {};
		\node [style=none] (19) at (0, -2) {};
	\end{pgfonlayer}
	\begin{pgfonlayer}{edgelayer}
			\filldraw[fill=black!30!BurntOrange!30,draw=black!40!BurntOrange!40](14.center) to (15.center) to (16.center) to (17.center) to cycle;
		\draw [qWire] (5.center) to (6.center);
		\draw [qWire] (8.center) to (7.center);
		\draw (18.center) to (6.center);
		\draw (8.center) to (19.center);
	\end{pgfonlayer}
\end{tikzpicture}
\ +(1-\omega)\
\begin{tikzpicture}
	\begin{pgfonlayer}{nodelayer}
		\node [style=small box] (0) at (0, 0) {$
C(\widetilde{P_3})$};
		\node [style=none] (1) at (-0.5, 0.5) {};
		\node [style=none] (2) at (0.5, 0.5) {};
		\node [style=none] (3) at (0.5, -0.5) {};
		\node [style=none] (4) at (-0.5, -0.5) {};
		\node [style=none] (5) at (0, 0.5) {};
		\node [style=none] (6) at (0, 1.5) {};
		\node [style=none] (7) at (0, -0.5) {};
		\node [style=none] (8) at (0, -1.5) {};
		\node [style=none] (13) at (1.25, -1.25) {\tiny ${\xiNC}$};
		\node [style=none] (14) at (1.5, -1.5) {};
		\node [style=none] (15) at (1.5, 1.5) {};
		\node [style=none] (16) at (-1.5, 1.5) {};
		\node [style=none] (17) at (-1.5, -1.5) {};
		\node [style=none] (18) at (0, 2) {};
		\node [style=none] (19) at (0, -2) {};
	\end{pgfonlayer}
	\begin{pgfonlayer}{edgelayer}
			\filldraw[fill=black!30!BurntOrange!30,draw=black!40!BurntOrange!40](14.center) to (15.center) to (16.center) to (17.center) to cycle;
		\draw [qWire] (5.center) to (6.center);
		\draw [qWire] (8.center) to (7.center);
		\draw (18.center) to (6.center);
		\draw (8.center) to (19.center);
	\end{pgfonlayer}
\end{tikzpicture}.
\eeq
Hence we see that $\widetilde{\xi}:=\xiNC \circ C$ satisfies property 3 of Definition~\ref{defnontgpt}, as required.

\section{Proof of Proposition~\ref{prop:structureOM}} \label{propstructureOMproof}

\proof
Since the ontological model $\widetilde{\xi}$ satisfies the requirements of Proposition~\ref{qrepngptstruct} we immediately obtain Eqs.~\eqref{eq:ontchi} and \eqref{eq:ontchidiscard}.

We however also obtain additional constraints arising from the fact that the codomain of $\widetilde{\xi}$ is  $\mathbf{SubStoch}$ rather than $\mathbf{QuasiSubStoch}$. This additional constraint can be viewed as a set of positivity conditions as we will now explain.

 The notion of positivity we require is defined for linear maps between ordered vector spaces. A positive cone $V^+$ for a real vector space $V$ defines an ordering:
 \beq
 v\leq v' \ \ \iff \ \ \exists w \in V^+ \text{ s.t. } v + w = v'.
 \eeq 
 Given two such ordered vector spaces, $(V,V^+)$ and $(W,W^+)$, maps between these ordered vector spaces are  linear maps on the underlying vector spaces $L:V\to W$ and are said to be positive if and only if:
 \beq
 L(V^+)\subseteq W^+.
 \eeq

Now, the question is: what are the relevant ordered vector spaces which we want to consider here?

 In the GPT $\widetilde{\Op}$, we can define an ordered vector space for each pair of systems $(A,B)$ as the vector space spanned by the transformations from $A$ to $B$, which we denote by $\mathsf{Span}[\widetilde{\Op}_A^B]$. The positive cone is defined by the vectors in this space which can be expressed as a positive linear combination of vectors in $\widetilde{\Op}_A^B$:
 \beq
 \mathsf{Cone}[\widetilde{\Op}_A^B]:= \left\{\sum_i r_i \widetilde{f_i} \middle|  r_i \in \mathds{R}^+, \ \widetilde{f_i} \in \widetilde{\Op}_A^B \right\}.
 \eeq
Then it is clear that, $( \mathsf{Span}[\widetilde{\Op}_A^B], \mathsf{Cone}[\widetilde{\Op}_A^B])$ defines an ordered vector space.

 Similarly, in $\mathbf{SubStoch}$ we can define an ordered vector space for each pair of systems $(\mathds{R}^\Lambda, \mathds{R}^{\Lambda'})$ as the vector space spanned by the substochastic maps from $\mathds{R}^\Lambda$ to $\mathds{R}^{\Lambda'}$, which we denote by $\mathsf{Span}[\mathbf{SubStoch}_\Lambda^{\Lambda'}]$ which is a subspace of the real vector space of linear maps from $\mathds{R}^\Lambda$ to $\mathds{R}^{\Lambda'}$. The positive cone is defined by the vectors in this space which can be expressed as a positive linear combination of vectors in $\mathbf{SubStoch}_\Lambda^{\Lambda'}$:
 \beq
 \mathsf{Cone}[\mathbf{SubStoch}_\Lambda^{\Lambda'}] := \left\{\sum_i r_i s_i \middle| r_i \in \mathds{R}^+, \ \ s_i \in \mathbf{SubStoch}_\Lambda^{\Lambda'}\right\}.
\eeq 
Then it is clear that, $( \mathsf{Span}[\mathbf{SubStoch}_\Lambda^{\Lambda'}], \mathsf{Cone}[\mathbf{SubStoch}_\Lambda^{\Lambda'})$ defines an ordered vector space.

Then, for a pair of GPT systems $(A,B)$ the action of the map $\widetilde{\xi}$ from $\widetilde{\Op}_A^B$ to $\mathbf{SubStoch}_{\Lambda_A}^{\Lambda_B}$ can be extended to a linear map from $\mathsf{Span}[\widetilde{\Op}_A^B]$ to $\mathsf{Span}[\mathbf{SubStoch}_{\Lambda_A}^{\Lambda_B}]$. Moreover, it is clear that this will be a positive linear map---in the sense that we defined above---as it maps the positive cone $\mathsf{Cone}[\widetilde{\Op}_A^B]$  into the positive cone  $\mathsf{Cone}[\mathbf{SubStoch}_{\Lambda_A}^{\Lambda_B}]$, that is:
\begin{align}
\widetilde{\xi}\left(\mathsf{Cone}[\widetilde{\Op}_A^B]\right) &= \widetilde{\xi}\left( \left\{\sum_i r_i \widetilde{f_i} \middle|  r_i \in \mathds{R}^+, \ \widetilde{f_i} \in \widetilde{\Op}_A^B \right\}\right)\\
&= \left\{\sum_i r_i \widetilde{\xi}(\widetilde{f_i}) \middle|  r_i \in \mathds{R}^+, \ \widetilde{f_i} \in \widetilde{\Op}_A^B \right\}\\
&= \left\{\sum_i r_i s_i \middle|  r_i \in \mathds{R}^+, \ s_i \in \widetilde{\xi}(\widetilde{\Op}_A^B) \right\} \\
&\subseteq  \left\{\sum_i r_i s_i \middle|  r_i \in \mathds{R}^+, \ s_i \in \mathbf{SubStoch}_{\Lambda_A}^{\Lambda_B} \right\} \\
&= \mathsf{Cone}[\mathbf{SubStoch}_{\Lambda_A}^{\Lambda_B}].
\end{align}
In summary, for each pair $(A,B)$, the representation map $\widetilde{\xi}$ defines a positive linear map from $(\mathsf{Span}[\widetilde{\Op}_A^B], \mathsf{Cone}[\widetilde{\Op}_A^B])$ to $( \mathsf{Span}[\mathbf{SubStoch}_\Lambda^{\Lambda'}], \mathsf{Cone}[\mathbf{SubStoch}_\Lambda^{\Lambda'})$.
\endproof

This positivity condition is all fairly abstract so let us consider some more concrete consequences of this result. If, rather than considering transformations from one GPT system to another, we consider just the states of a single system $A$ then everything simplifies considerably. The vector space we consider in the domain is simply the vector space spanned by the GPT state space, and the positive cone is then just the standard cone of GPT states. The vector space we consider in the codomain is simply the vector space $\mathds{R}^{\Lambda_A}$ with positive cone given by the cone of unnormalised probability distributions. Moreover, the linearly extended action of $\widetilde{\xi}$ is nothing but the linear map $\chi_A$ so we find that $\chi_A$ must be a positive map in the sense defined above.

Similarly, if we consider the contravariant action of $\chi^{-1}_A$ on the space of GPT effects (that is, by composing the effect onto the outgoing wire of $\chi^{-1}_A$) then we arrive at a similar result. Here we find that the contravariant action of $\chi^{-1}_A$ is a positive linear map from the dual of the GPT vector space ordered by the effect cone to the dual of $\mathds{R}^{\Lambda_A}$ ordered by the cone of response functions.

\section{Proofs for preliminaries}

\subsection{Proof that quotienting is diagram-preserving} \label{app:quotDP}

In order to see that the quotienting map is diagram-preserving, we must first define what it means for processes in the quotiented theory to be composed. That is, given a suitable pair of equivalence class $\widetilde{T}$ and $\widetilde{R}$, we must define $\widetilde{R}\circ \widetilde{T}$ (assuming that the relevant type matching constraint is satisfied) and  $\widetilde{R}\otimes \widetilde{T}$. We define these via composition of some choice of representative elements, $r \in \widetilde{R}$ and $t\in \widetilde{T}$, for each equivalence class, as
\beq
\tikzfig{Diagrams/n1-seqComp1}\ \ :=\ \ \tikzfig{Diagrams/n2-seqComp2} \qquad \text{and}\qquad \tikzfig{Diagrams/n3-parComp1}\ \ :=\ \ \tikzfig{Diagrams/n4-parComp2}. 
\eeq
For this to be well defined, it must be independent of the choices of representatives, i.e. for any $t_1, t_2 \in \widetilde{T}$ and $r_1, r_2\in \widetilde{R}$, one has
\beq 
\begin{tikzpicture}
	\begin{pgfonlayer}{nodelayer}
		\node [style=none] (0) at (0, 0) {$\widetilde{r_1\circ t_1}$};
		\node [style=none] (2) at (0, -1.25) {};
		\node [style=none] (3) at (0, 1.25) {};
		\node [style=none] (4) at (-1.25, 0.75) {};
		\node [style=none] (5) at (1.25, 0.75) {};
		\node [style=none] (6) at (1.25, -0.75) {};
		\node [style=none] (7) at (-1.25, -0.75) {};
	\end{pgfonlayer}
	\begin{pgfonlayer}{edgelayer}
		\draw [qWire] (0.center) to (2.center);
		\draw [qWire] (3.center) to (0.center);
		\draw [fill=white] (6.center)
			 to (7.center)
			 to (4.center)
			 to (5.center)
			 to cycle;
	\end{pgfonlayer}
\end{tikzpicture}
\ \ = \ \ 
\begin{tikzpicture}
	\begin{pgfonlayer}{nodelayer}
		\node [style=none] (0) at (0, 0) {$\widetilde{r_2\circ t_2}$};
		\node [style=none] (2) at (0, -1.25) {};
		\node [style=none] (3) at (0, 1.25) {};
		\node [style=none] (4) at (-1.25, 0.75) {};
		\node [style=none] (5) at (1.25, 0.75) {};
		\node [style=none] (6) at (1.25, -0.75) {};
		\node [style=none] (7) at (-1.25, -0.75) {};
	\end{pgfonlayer}
	\begin{pgfonlayer}{edgelayer}
		\draw [qWire] (0.center) to (2.center);
		\draw [qWire] (3.center) to (0.center);
		\draw [fill=white] (6.center)
			 to (7.center)
			 to (4.center)
			 to (5.center)
			 to cycle;
	\end{pgfonlayer}
\end{tikzpicture}
\qquad\text{and}\qquad
\begin{tikzpicture}
	\begin{pgfonlayer}{nodelayer}
		\node [style=none] (7) at (-0.75, -0.75) {};
		\node [style=none] (8) at (-0.75, -1.5) {};
		\node [style=none] (10) at (0.75, -0.75) {};
		\node [style=none] (11) at (0.75, -1.5) {};
		\node [style=none] (12) at (-0.75, 1.5) {};
		\node [style=none] (13) at (-0.75, 0.75) {};
		\node [style=none] (14) at (0.75, 1.5) {};
		\node [style=none] (15) at (0.75, 0.75) {};
		\node [style=none] (16) at (0, 0) {$\widetilde{r_1\otimes t_1}$};
		\node [style=none] (17) at (-1.25, 0.75) {};
		\node [style=none] (18) at (-1.25, -0.75) {};
		\node [style=none] (19) at (1.25, -0.75) {};
		\node [style=none] (20) at (1.25, 0.75) {};
	\end{pgfonlayer}
	\begin{pgfonlayer}{edgelayer}
		\draw [qWire] (7.center) to (8.center);
		\draw [qWire] (10.center) to (11.center);
		\draw [qWire] (12.center) to (13.center);
		\draw [qWire] (14.center) to (15.center);
		\draw [fill=white] (18.center)
			 to (17.center)
			 to (20.center)
			 to (19.center)
			 to cycle;
	\end{pgfonlayer}
\end{tikzpicture}
\ \ = \ \ 
\begin{tikzpicture}
	\begin{pgfonlayer}{nodelayer}
		\node [style=none] (7) at (-0.75, -0.75) {};
		\node [style=none] (8) at (-0.75, -1.5) {};
		\node [style=none] (10) at (0.75, -0.75) {};
		\node [style=none] (11) at (0.75, -1.5) {};
		\node [style=none] (12) at (-0.75, 1.5) {};
		\node [style=none] (13) at (-0.75, 0.75) {};
		\node [style=none] (14) at (0.75, 1.5) {};
		\node [style=none] (15) at (0.75, 0.75) {};
		\node [style=none] (16) at (0, 0) {$\widetilde{r_2\otimes t_2}$};
		\node [style=none] (17) at (-1.25, 0.75) {};
		\node [style=none] (18) at (-1.25, -0.75) {};
		\node [style=none] (19) at (1.25, -0.75) {};
		\node [style=none] (20) at (1.25, 0.75) {};
	\end{pgfonlayer}
	\begin{pgfonlayer}{edgelayer}
		\draw [qWire] (7.center) to (8.center);
		\draw [qWire] (10.center) to (11.center);
		\draw [qWire] (12.center) to (13.center);
		\draw [qWire] (14.center) to (15.center);
		\draw [fill=white] (18.center)
			 to (17.center)
			 to (20.center)
			 to (19.center)
			 to cycle;
	\end{pgfonlayer}
\end{tikzpicture}
\eeq
or equivalently
\beq \label{LHEappone}
\tikzfig{Diagrams/n10-cong1}\ \  \sim \ \  \tikzfig{Diagrams/n11-cong2} \qquad \text{and} \qquad \tikzfig{Diagrams/n12-cong3}\ \  \sim \ \  \tikzfig{Diagrams/n14-cong4}.
\eeq
If this is the case, then the quotienting map is a structure-preserving equivalence relation, or \emph{congruence relation}, for the process theory.

It is straightforward to show that the first equality in Eq.~\eqref{LHEappone} is equivalent to the conditions
\beq \label{firstequivappone}
\forall\ r \quad \tikzfig{Diagrams/n19-congSeq1} \ \sim  \ \tikzfig{Diagrams/n20-congSeq2}
\quad \text{and}  \quad \forall\ t \quad 
\tikzfig{Diagrams/n21-congSeq3} \ \sim  \ \tikzfig{Diagrams/n22-congSeq4}
\ . 
\eeq
To verify the nontrivial direction of this equivalence, consider the special case of these where $r=r_1$ (in the first) and where $t=t_2$ (in the second); then, one has
\beq
\begin{tikzpicture}
	\begin{pgfonlayer}{nodelayer}
		\node [style=small box] (0) at (0, -0.75) {$t_1$};
		\node [style=small box] (1) at (0, 0.75) {$r_1$};
		\node [style=none] (2) at (0, -1.75) {};
		\node [style=none] (3) at (0, 1.75) {};
	\end{pgfonlayer}
	\begin{pgfonlayer}{edgelayer}
		\draw [qWire] (3.center) to (1);
		\draw [qWire] (1) to (0);
		\draw [qWire] (0) to (2.center);
	\end{pgfonlayer}
\end{tikzpicture}
\ \ \sim \ \ 
\begin{tikzpicture}
	\begin{pgfonlayer}{nodelayer}
		\node [style=small box] (0) at (0, -0.75) {$t_2$};
		\node [style=small box] (1) at (0, 0.75) {$r_1$};
		\node [style=none] (2) at (0, -1.75) {};
		\node [style=none] (3) at (0, 1.75) {};
	\end{pgfonlayer}
	\begin{pgfonlayer}{edgelayer}
		\draw [qWire] (3.center) to (1);
		\draw [qWire] (1) to (0);
		\draw [qWire] (0) to (2.center);
	\end{pgfonlayer}
\end{tikzpicture}
\ \ = \ \ 
\begin{tikzpicture}
	\begin{pgfonlayer}{nodelayer}
		\node [style=small box] (0) at (0, -0.75) {$t_2$};
		\node [style=small box] (1) at (0, 0.75) {$r_1$};
		\node [style=none] (2) at (0, -1.75) {};
		\node [style=none] (3) at (0, 1.75) {};
	\end{pgfonlayer}
	\begin{pgfonlayer}{edgelayer}
		\draw [qWire] (3.center) to (1);
		\draw [qWire] (1) to (0);
		\draw [qWire] (0) to (2.center);
	\end{pgfonlayer}
\end{tikzpicture}
\ \ \sim \ \
\begin{tikzpicture}
	\begin{pgfonlayer}{nodelayer}
		\node [style=small box] (0) at (0, -0.75) {$t_2$};
		\node [style=small box] (1) at (0, 0.75) {$r_2$};
		\node [style=none] (2) at (0, -1.75) {};
		\node [style=none] (3) at (0, 1.75) {};
	\end{pgfonlayer}
	\begin{pgfonlayer}{edgelayer}
		\draw [qWire] (3.center) to (1);
		\draw [qWire] (1) to (0);
		\draw [qWire] (0) to (2.center);
	\end{pgfonlayer}
\end{tikzpicture}
\eeq
Similarly, the second equality in Eq.~\eqref{LHEappone} is equivalent to the conditions
\beq
\forall\ r\quad \tikzfig{Diagrams/n15-congPar1} \ \sim  \ \tikzfig{Diagrams/n16-congPar2}
\quad \text{and} \quad \forall\ t \quad
\tikzfig{Diagrams/n17-congPar3} \ \sim  \ \tikzfig{Diagrams/n18-congPar4}\ .
\eeq
We now prove that the first of these four conditions (namely, the first equivalence in Eq.~\eqref{firstequivappone}) holds:
\begin{align}
\tikzfig{Diagrams/n23-proof1}\ \sim\ \tikzfig{Diagrams/n24-proof2} \quad &\iff \quad \forall \tau \ \ \tikzfig{Diagrams/n25-proof3} \  = \  \tikzfig{Diagrams/n26-proof4} \\
                    & \implies \quad \forall \tau' \ \ \tikzfig{Diagrams/n27-proof5} \  = \  \tikzfig{Diagrams/n28-proof6} \\
                    & \iff \quad \tikzfig{Diagrams/n19-congSeq1} \ \sim  \ \tikzfig{Diagrams/n20-congSeq2}
\end{align}
where in the second step we are noting that
\beq
\tikzfig{Diagrams/n29-proof7}
\eeq
is an example of a tester $\tau$ for any $\tau'$ and $r$. 
The argument for the other three conditions is analogous.

This establishes that the notion of composition that we have defined is independent of the choice of representative elements, and so we can simply write 
\beq
\tikzfig{Diagrams/n1-seqComp1}\ \ =\ \ \tikzfig{Diagrams/n2-seqComp2}\ \ = \ \ 
\begin{tikzpicture}
	\begin{pgfonlayer}{nodelayer}
		\node [style=none] (0) at (0, 0) {$\widetilde{R\circ T}$};
		\node [style=none] (2) at (0, -1.25) {};
		\node [style=none] (3) at (0, 1.25) {};
		\node [style=none] (4) at (-1.25, 0.75) {};
		\node [style=none] (5) at (1.25, 0.75) {};
		\node [style=none] (6) at (1.25, -0.75) {};
		\node [style=none] (7) at (-1.25, -0.75) {};
	\end{pgfonlayer}
	\begin{pgfonlayer}{edgelayer}
		\draw [qWire] (0.center) to (2.center);
		\draw [qWire] (3.center) to (0.center);
		\draw [fill=white] (6.center)
			 to (7.center)
			 to (4.center)
			 to (5.center)
			 to cycle;
	\end{pgfonlayer}
\end{tikzpicture}
 \qquad \text{and}\qquad \tikzfig{Diagrams/n3-parComp1}\ \ =\ \ \tikzfig{Diagrams/n4-parComp2} \ \ = \ \ 
 \begin{tikzpicture}
	\begin{pgfonlayer}{nodelayer}
		\node [style=none] (7) at (-0.75, -0.75) {};
		\node [style=none] (8) at (-0.75, -1.5) {};
		\node [style=none] (10) at (0.75, -0.75) {};
		\node [style=none] (11) at (0.75, -1.5) {};
		\node [style=none] (12) at (-0.75, 1.5) {};
		\node [style=none] (13) at (-0.75, 0.75) {};
		\node [style=none] (14) at (0.75, 1.5) {};
		\node [style=none] (15) at (0.75, 0.75) {};
		\node [style=none] (16) at (0, 0) {$\widetilde{R\otimes T}$};
		\node [style=none] (17) at (-1.25, 0.75) {};
		\node [style=none] (18) at (-1.25, -0.75) {};
		\node [style=none] (19) at (1.25, -0.75) {};
		\node [style=none] (20) at (1.25, 0.75) {};
	\end{pgfonlayer}
	\begin{pgfonlayer}{edgelayer}
		\draw [qWire] (7.center) to (8.center);
		\draw [qWire] (10.center) to (11.center);
		\draw [qWire] (12.center) to (13.center);
		\draw [qWire] (14.center) to (15.center);
		\draw [fill=white] (18.center)
			 to (17.center)
			 to (20.center)
			 to (19.center)
			 to cycle;
	\end{pgfonlayer}
\end{tikzpicture}
\eeq
Now, recalling that
\beq
\tikzfig{Diagrams/n30-proofextra1}\ \ := \ \ \tikzfig{Diagrams/n31-proofextra2}
\eeq
it follows that the quotienting map is diagram-preserving:
\begin{align}
\tikzfig{Diagrams/n5-seqCompPres1}\ \  
&= \ \ \tikzfig{Diagrams/n6-seqCompPres3}\ \ = \ \
\begin{tikzpicture}
	\begin{pgfonlayer}{nodelayer}
		\node [style=none] (0) at (0, 0) {$\widetilde{R\circ T}$};
		\node [style=none] (2) at (0, -1.25) {};
		\node [style=none] (3) at (0, 1.25) {};
		\node [style=none] (4) at (-1.25, 0.75) {};
		\node [style=none] (5) at (1.25, 0.75) {};
		\node [style=none] (6) at (1.25, -0.75) {};
		\node [style=none] (7) at (-1.25, -0.75) {};
	\end{pgfonlayer}
	\begin{pgfonlayer}{edgelayer}
		\draw [qWire] (0.center) to (2.center);
		\draw [qWire] (3.center) to (0.center);
		\draw [fill=white] (6.center)
			 to (7.center)
			 to (4.center)
			 to (5.center)
			 to cycle;
	\end{pgfonlayer}
\end{tikzpicture}\ \ = \ \ \ \ \ \tikzfig{Diagrams/n1-seqComp1} \ \ \ \ \ \ = \ \ \ \ \ \ \tikzfig{Diagrams/n6-seqCompPres2}\ , \\
\tikzfig{Diagrams/n9-parCompPres1} \ \  &= \ \ \tikzfig{Diagrams/n8-parCompPres2} \ \ = \ \  \begin{tikzpicture}
	\begin{pgfonlayer}{nodelayer}
		\node [style=none] (7) at (-0.75, -0.75) {};
		\node [style=none] (8) at (-0.75, -1.5) {};
		\node [style=none] (10) at (0.75, -0.75) {};
		\node [style=none] (11) at (0.75, -1.5) {};
		\node [style=none] (12) at (-0.75, 1.5) {};
		\node [style=none] (13) at (-0.75, 0.75) {};
		\node [style=none] (14) at (0.75, 1.5) {};
		\node [style=none] (15) at (0.75, 0.75) {};
		\node [style=none] (16) at (0, 0) {$\widetilde{R\otimes T}$};
		\node [style=none] (17) at (-1.25, 0.75) {};
		\node [style=none] (18) at (-1.25, -0.75) {};
		\node [style=none] (19) at (1.25, -0.75) {};
		\node [style=none] (20) at (1.25, 0.75) {};
	\end{pgfonlayer}
	\begin{pgfonlayer}{edgelayer}
		\draw [qWire] (7.center) to (8.center);
		\draw [qWire] (10.center) to (11.center);
		\draw [qWire] (12.center) to (13.center);
		\draw [qWire] (14.center) to (15.center);
		\draw [fill=white] (18.center)
			 to (17.center)
			 to (20.center)
			 to (19.center)
			 to cycle;
	\end{pgfonlayer}
\end{tikzpicture} \ \ = \ \  \tikzfig{Diagrams/n3-parComp1} \ \ = \ \ \tikzfig{Diagrams/n7-parCompPres3}\ .
\end{align}

\subsection{Proof of Lemma~\ref{bilinearityprf}}\label{sec:bilinearity}
We now prove Lemma~\ref{bilinearityprf}, restated here:
\begin{lemma}
The operation $\smallsquare$ can be uniquely extended to a bilinear map
\beq
\smallsquare:\left(\mathds{R}^{m^{B\to C}},\mathds{R}^{m^{A\to B}}\right)\to \mathds{R}^{m^{A\to C}},
\eeq
and the operation $\smallboxtimes$ can be uniquely extended to  a bilinear map
\beq
\smallboxtimes : \left(\mathds{R}^{m^{A\to B}}, \mathds{R}^{m^{C\to D}}\right)\to \mathds{R}^{m^{{AC}\to{BD}}}.
\eeq
\end{lemma}

\proof
 
Here we show the proof for $\smallsquare$. The proof for $\smallboxtimes$ follows similarly.

To begin, note that the vectors $\mathbf{R}_{\widetilde{T}}$ with $\widetilde{T}:A\to B$ span the vector space $\mathds{R}^{m^{A\to B}}$, as we have taken $\mathcal{F}^{A\to B}$ to be a \emph{minimal} fiducial set of testers. Consequently, we can always (nonuniquely) write an arbitrary $\mathbf{U}\in\mathds{R}^{m^{B\to C}}$ as $\sum_i u_i \mathbf{R}_{\widetilde{T}_i}$ for some transformations $\widetilde{T}_i:B\to C$, $u_i \in \mathds{R}$, and can write an arbitrary $\mathbf{V}\in \mathds{R}^{m^{A\to B}}$ as $\sum_j v_j \mathbf{R}_{\widetilde{T}'_j}$ for some transformations $\widetilde{T}'_j:A\to B$, $v_j\in \mathds{R}$. Hence, we propose the linear extension be defined by $\mathbf{U}\smallsquare \mathbf{V}:= \sum_{ij} u_i v_j (\mathbf{R}_{\widetilde{T}_i} \smallsquare \mathbf{R}_{\widetilde{T'}_j})= \sum_{ij} u_i v_j \mathbf{R}_{\widetilde{T}_i\circ \widetilde{T'}_j}$. For this to be a valid definition, however, it must be the case that this is independent of the chosen decomposition of $\mathbf{U}$ and $\mathbf{V}$. We now show that this is indeed the case.

To begin, let us consider two distinct decompositions in the second argument of $\smallsquare$. That is, given
\begin{equation} \label{linconstraint2}
\sum_i u_i \mathbf{R}_{\widetilde{T}_i}= \mathbf{U}=\sum_j {u'}_j \mathbf{R}_{\widetilde{T'}_j},
\end{equation}
we want to show that 
\beq
\sum_i u_i (\mathbf{R}_{\widetilde{T}}\smallsquare\mathbf{R}_{\widetilde{T}_i}) = \sum_j {u'}_j (\mathbf{R}_{\widetilde{T}}\smallsquare\mathbf{R}_{\widetilde{T'}_j})
\eeq
for all $\widetilde{T}$. 

To begin, we use linearity of $E^{A\to B}$ (as defined in Section~\ref{Edefn}) to give us that 
\beq
\sum_i u_i \mathbf{K}_{\widetilde{T}_i}=\sum_j {u'}_j \mathbf{K}_{\widetilde{T'}_j}.
\eeq
Unpacking the definition of $\mathbf{K}$ gives us that 
 for all testers $\widetilde{\tau}$,
\beq \label{eq:proofII1}
\begin{tikzpicture}
	\begin{pgfonlayer}{nodelayer}
		\node [style=none] (17) at (6.5, 0.75) {};
		\node [style=none] (19) at (7, 0.75) {};
		\node [style=none] (20) at (8.25, 1.25) {};
		\node [style=none] (21) at (7.75, 0.75) {};
		\node [style=small box] (22) at (7, 0) {$\widetilde{T'}_j$};
		\node [style=none] (23) at (6.5, -0.75) {};
		\node [style=none] (25) at (7, -0.75) {};
		\node [style=none] (26) at (8.25, -1.25) {};
		\node [style=none] (27) at (6.5, 1.25) {};
		\node [style=none] (29) at (6.5, -1.25) {};
		\node [style=none] (30) at (7.75, -0.75) {};
		\node [style=none] (31) at (8, 0) {$\widetilde{\tau}$};
		\node [style=none] (34) at (5, 0) {$=\sum_j {u'}_j$};
		\node [style=none] (35) at (1.5, 0.75) {};
		\node [style=none] (36) at (2, 0.75) {};
		\node [style=none] (37) at (3.25, 1.25) {};
		\node [style=none] (38) at (2.75, 0.75) {};
		\node [style=small box] (39) at (2, 0) {$\widetilde{T}_i$};
		\node [style=none] (40) at (1.5, -0.75) {};
		\node [style=none] (41) at (2, -0.75) {};
		\node [style=none] (42) at (3.25, -1.25) {};
		\node [style=none] (43) at (1.5, 1.25) {};
		\node [style=none] (44) at (1.5, -1.25) {};
		\node [style=none] (45) at (2.75, -0.75) {};
		\node [style=none] (46) at (3, 0) {$\widetilde{\tau}$};
		\node [style=none] (47) at (0, 0) {$\sum_i u_i$};
	\end{pgfonlayer}
	\begin{pgfonlayer}{edgelayer}
		\draw [fill=white, draw=black] (29.center)
			 to (26.center)
			 to (20.center)
			 to (27.center)
			 to (17.center)
			 to (21.center)
			 to (30.center)
			 to (23.center)
			 to cycle;
		\draw [style=qWire] (22) to (19.center);
		\draw [style=qWire] (22) to (25.center);
		\draw [fill=white, draw=black] (44.center)
			 to (42.center)
			 to (37.center)
			 to (43.center)
			 to (35.center)
			 to (38.center)
			 to (45.center)
			 to (40.center)
			 to cycle;
		\draw [style=qWire] (39) to (36.center);
		\draw [style=qWire] (39) to (41.center);
	\end{pgfonlayer}
\end{tikzpicture}
.
\eeq
Now, define
\beq
\begin{tikzpicture}
	\begin{pgfonlayer}{nodelayer}
		\node [style=none] (3) at (-1, 1.25) {};
		\node [style=none] (4) at (-1, 1.75) {};
		\node [style=none] (5) at (1, 1.25) {};
		\node [style=none] (6) at (1, -1.25) {};
		\node [style=none] (7) at (2, 1.75) {};
		\node [style=none] (8) at (2, -1.75) {};
		\node [style=none] (9) at (-1, -1.75) {};
		\node [style=none] (10) at (-1, -1.25) {};
		\node [style=none] (11) at (1.5, 0) {$\widetilde{\tau}'$};
		\node [style=none] (12) at (0, -1.25) {};
		\node [style=none] (13) at (0, -0.75) {};
		\node [style=none] (14) at (0, 0.75) {};
		\node [style=none] (15) at (0, 1.25) {};
	\end{pgfonlayer}
	\begin{pgfonlayer}{edgelayer}
		\draw (4.center) to (3.center);
		\draw (3.center) to (5.center);
		\draw (5.center) to (6.center);
		\draw (6.center) to (10.center);
		\draw (10.center) to (9.center);
		\draw (9.center) to (8.center);
		\draw (8.center) to (7.center);
		\draw (7.center) to (4.center);
		\draw [qWire] (13.center) to (12.center);
		\draw [qWire] (15.center) to (14.center);
	\end{pgfonlayer}
\end{tikzpicture}
\ \ :=\ \
\begin{tikzpicture}
	\begin{pgfonlayer}{nodelayer}
		\node [style=small box] (0) at (0, 0.75) {$\widetilde{T}$};
		\node [style=none] (1) at (0, -0.25) {};
		\node [style=none] (2) at (0, 1.75) {};
		\node [style=none] (3) at (-1, 1.75) {};
		\node [style=none] (4) at (-1, 2.25) {};
		\node [style=none] (5) at (1, 1.75) {};
		\node [style=none] (6) at (1, -1.75) {};
		\node [style=none] (7) at (2, 2.25) {};
		\node [style=none] (8) at (2, -2.25) {};
		\node [style=none] (9) at (-1, -2.25) {};
		\node [style=none] (10) at (-1, -1.75) {};
		\node [style=none] (11) at (1.5, -0.25) {$\widetilde{\tau}$};
		\node [style=none] (12) at (0, -1.75) {};
		\node [style=none] (13) at (0, -1.25) {};
	\end{pgfonlayer}
	\begin{pgfonlayer}{edgelayer}
		\draw [qWire] (1.center) to (0);
		\draw [qWire] (0) to (2.center);
		\draw (4.center) to (3.center);
		\draw (3.center) to (5.center);
		\draw (5.center) to (6.center);
		\draw (6.center) to (10.center);
		\draw (10.center) to (9.center);
		\draw (9.center) to (8.center);
		\draw (8.center) to (7.center);
		\draw (7.center) to (4.center);
		\draw [qWire] (13.center) to (12.center);
	\end{pgfonlayer}
\end{tikzpicture}
\eeq
for some arbitrary $\widetilde{\tau}$ and transformation $\widetilde{T}$. Substituting tester $\widetilde{\tau}'$ into Eq.~\eqref{eq:proofII1}, we find that
\beq
\sum_i u_i \begin{tikzpicture}
	\begin{pgfonlayer}{nodelayer}
		\node [style=small box] (0) at (0, 0.75) {$\widetilde{T}$};
		\node [style=small box] (1) at (0, -0.75) {$\widetilde{T}_i$};
		\node [style=none] (2) at (0, 1.75) {};
		\node [style=none] (3) at (-1, 1.75) {};
		\node [style=none] (4) at (-1, 2.25) {};
		\node [style=none] (5) at (1, 1.75) {};
		\node [style=none] (6) at (1, -1.75) {};
		\node [style=none] (7) at (2, 2.25) {};
		\node [style=none] (8) at (2, -2.25) {};
		\node [style=none] (9) at (-1, -2.25) {};
		\node [style=none] (10) at (-1, -1.75) {};
		\node [style=none] (11) at (1.5, -0.25) {$\widetilde{\tau}$};
		\node [style=none] (12) at (0, -1.75) {};
		\node [style=none] (13) at (0, -0.75) {};
	\end{pgfonlayer}
	\begin{pgfonlayer}{edgelayer}
		\draw [qWire] (1) to (0);
		\draw [qWire] (0) to (2.center);
		\draw (4.center) to (3.center);
		\draw (3.center) to (5.center);
		\draw (5.center) to (6.center);
		\draw (6.center) to (10.center);
		\draw (10.center) to (9.center);
		\draw (9.center) to (8.center);
		\draw (8.center) to (7.center);
		\draw (7.center) to (4.center);
		\draw [qWire] (13.center) to (12.center);
	\end{pgfonlayer}
\end{tikzpicture}
\ \ = \sum_j v_j
\begin{tikzpicture}
	\begin{pgfonlayer}{nodelayer}
		\node [style=small box] (0) at (0, 0.75) {$\widetilde{T}$};
		\node [style=small box] (1) at (0, -0.75) {$\widetilde{T'}_j$};
		\node [style=none] (2) at (0, 1.75) {};
		\node [style=none] (3) at (-1, 1.75) {};
		\node [style=none] (4) at (-1, 2.25) {};
		\node [style=none] (5) at (1, 1.75) {};
		\node [style=none] (6) at (1, -1.75) {};
		\node [style=none] (7) at (2, 2.25) {};
		\node [style=none] (8) at (2, -2.25) {};
		\node [style=none] (9) at (-1, -2.25) {};
		\node [style=none] (10) at (-1, -1.75) {};
		\node [style=none] (11) at (1.5, -0.25) {$\widetilde{\tau}$};
		\node [style=none] (12) at (0, -1.75) {};
		\node [style=none] (13) at (0, -0.75) {};
	\end{pgfonlayer}
	\begin{pgfonlayer}{edgelayer}
		\draw [qWire] (1) to (0);
		\draw [qWire] (0) to (2.center);
		\draw (4.center) to (3.center);
		\draw (3.center) to (5.center);
		\draw (5.center) to (6.center);
		\draw (6.center) to (10.center);
		\draw (10.center) to (9.center);
		\draw (9.center) to (8.center);
		\draw (8.center) to (7.center);
		\draw (7.center) to (4.center);
		\draw [qWire] (13.center) to (12.center);
	\end{pgfonlayer}
\end{tikzpicture}
.
\eeq
As this holds for all $\widetilde{\tau}$,
and so, in particular, for our fiducial testers, we therefore have that 
\beq
 \sum_i u_i \mathbf{R}_{\widetilde{T}\circ\widetilde{T}_i}  = \sum_j {u'}_j \mathbf{R}_{\widetilde{T}\circ\widetilde{T'}_j}.
\eeq
Finally, using the fact that $\mathbf{R}_{\widetilde{T}\circ \widetilde{T_i}} = \mathbf{R}_{\widetilde{T}}\smallsquare \mathbf{R}_{\widetilde{T}_i}$ and similarly that $\mathbf{R}_{\widetilde{T}\circ \widetilde{T'}_j} = \mathbf{R}_{\widetilde{T}}\smallsquare \mathbf{R}_{\widetilde{T'}_j}$ we find 
\beq
\sum_i u_i (\mathbf{R}_{\widetilde{T}}\smallsquare\mathbf{R}_{\widetilde{T}_i})=\sum_j {u'}_j (\mathbf{R}_{\widetilde{T}}\smallsquare\mathbf{R}_{\widetilde{T'}_j}),
\eeq
which is our desired result. 

One can similarly show linearity in the first argument, namely,
\beq
\sum_k v_k (\mathbf{R}_{\widetilde{T}''_k} \smallsquare \mathbf{R}_{\widetilde{T}}) = \sum_{l} {v'}_l (\mathbf{R}_{\widetilde{T}'''_l} \smallsquare \mathbf{R}_{\widetilde{T}}).
\eeq
Putting these together, we obtain full bilinearity of $\smallsquare$, namely
\beq
\mathbf{U}\smallsquare \mathbf{V}= \sum_{ij} u_i v_j (\mathbf{R}_{\widetilde{T}_i} \smallsquare \mathbf{R}_{\widetilde{T'}_j}) = \sum_{kl} {u'}_k {v'}_l (\mathbf{R}_{\widetilde{T}''_k} \smallsquare \mathbf{R}_{\widetilde{T}'''_l}),
\eeq
as required.
\endproof

\subsection{Proof of Lemma~\ref{lemmadecomp}} \label{proofTL}

We now prove Lemma~\ref{lemmadecomp}, restated here:
\begin{lemma}
 A GPT is tomographically local if and only if one can decompose the identity process for every system $A$, denoted $\widetilde{\mathds{1}}_A$, as
\begin{equation}
\tikzfig{Diagrams/identityA}
\quad = \quad \sum_{ij} [\mathbf{M}_{\widetilde{\mathds{1}}_A}]_i^j
\tikzfig{Diagrams/measprepA},
\label{eq:identitydecomp2}
\end{equation}
where $\mathbf{M}_{{\widetilde{\mathds{1}}}_A}$ is the matrix inverse of the transition matrix $\mathbf{N}_{{\widetilde{\mathds{1}}}_A}$ of the identity process, that is,
\beq
\mathbf{M}_{{\widetilde{\mathds{1}}}_A}:= \mathbf{N}_{\widetilde{\mathds{1}}_A}^{-1}.
\eeq
\label{lem:identitydecomp2}
\end{lemma}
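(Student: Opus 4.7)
The strategy is to treat the two directions separately, with the forward direction ($\Rightarrow$, tomographic locality implies decomposition) being the substantive one. For the ($\Leftarrow$) direction, assuming the identity on every system admits the decomposition in Eq.~\eqref{eq:identitydecomp2}, I would take an arbitrary process $\widetilde{T}:A\to B$ and write $\widetilde{T}=\widetilde{\mathds{1}}_B\circ \widetilde{T}\circ\widetilde{\mathds{1}}_A$. Substituting the identity decompositions for $\widetilde{\mathds{1}}_A$ and $\widetilde{\mathds{1}}_B$ yields an expression for $\widetilde{T}$ as a linear combination of processes of the form $\widetilde{P}_l^B\circ\widetilde{E}_i^A$, with coefficients built from $\mathbf{M}_{\widetilde{\mathds{1}}_A}$, $\mathbf{M}_{\widetilde{\mathds{1}}_B}$, and the transition matrix $\mathbf{N}_{\widetilde{T}}$. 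Since composite-system spanning sets are, by our standing convention, taken to be product states and product effects, this constitutes a decomposition in terms of local preparations and effects on the inputs and outputs. Hence any two processes with the same transition matrix must coincide, which is exactly Eq.~\eqref{loctom}.

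For the ($\Rightarrow$) direction, assume tomographic locality. The first step is to verify that the transition matrix $\mathbf{N}_{\widetilde{\mathds{1}}_A}$ is invertible. Suppose some vector $v$ lies in its kernel, so $\sum_j v^j(\widetilde{E}_k^A\circ \widetilde{P}_j^A)=0$ for all $k$. Then the state $\sum_j v^j\widetilde{P}_j^A$ is annihilated by every element of a spanning set of effects, hence by every effect; by tomography (which is implied by tomographic locality), this state must be zero, and linear independence of $\{\widetilde{P}_j^A\}$ forces $v=0$. So $\mathbf{N}_{\widetilde{\mathds{1}}_A}$ is invertible, and I define $\mathbf{M}_{\widetilde{\mathds{1}}_A}:=\mathbf{N}_{\widetilde{\mathds{1}}_A}^{-1}$.

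The second step is to verify the proposed decomposition. Let $R$ denote the RHS of Eq.~\eqref{eq:identitydecomp2}. I would compose both $\widetilde{\mathds{1}}_A$ and $R$ with an arbitrary basis state $\widetilde{P}_m^A$ on the input and basis effect $\widetilde{E}_l^A$ on the output. The LHS yields $[\mathbf{N}_{\widetilde{\mathds{1}}_A}]_m^l$, while the RHS yields
\begin{equation}
\sum_{ij}[\mathbf{M}_{\widetilde{\mathds{1}}_A}]_i^j[\mathbf{N}_{\widetilde{\mathds{1}}_A}]_j^l[\mathbf{N}_{\widetilde{\mathds{1}}_A}]_m^i
= \sum_i[\mathbf{N}_{\widetilde{\mathds{1}}_A}]_m^i\,\delta_i^l = [\mathbf{N}_{\widetilde{\mathds{1}}_A}]_m^l,
\end{equation}
where the first equality uses $\mathbf{M}\mathbf{N}=\mathds{1}$. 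Thus $\widetilde{\mathds{1}}_A$ and $R$ agree under all spanning-basis tests, and since these tests are of the locally-composed form to which tomographic locality applies (product in the composite-system case), one concludes $\widetilde{\mathds{1}}_A=R$.

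The main obstacle is the composite-system bookkeeping: one must check that the spanning sets used (which by convention factorize into product states and product effects on composite systems) really do allow closure of the argument in both directions. In the ($\Rightarrow$) direction this is automatic because the tests used are already local; in the ($\Leftarrow$) direction this is precisely why the identity decomposition—being a decomposition into product states and product effects—propagates to a product decomposition for arbitrary processes, which is exactly what Eq.~\eqref{loctom} demands. Beyond this, the remaining work is routine linear algebra.
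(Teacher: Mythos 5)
Your proof is correct. The forward direction (tomographic locality implies the decomposition) follows the paper's argument essentially step for step: the same kernel argument for the invertibility of $\mathbf{N}_{\widetilde{\mathds{1}}_A}$ (which the paper relegates to a footnote), the same index computation $\sum_{ij}[\mathbf{M}_{\widetilde{\mathds{1}}_A}]_i^j[\mathbf{N}_{\widetilde{\mathds{1}}_A}]_j^l[\mathbf{N}_{\widetilde{\mathds{1}}_A}]_m^i=[\mathbf{N}_{\widetilde{\mathds{1}}_A}]_m^l$ showing that the candidate process reproduces every sandwich by spanning states and effects, and the same final appeal to tomographic locality to upgrade agreement on local probes to equality with $\widetilde{\mathds{1}}_A$. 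Where you genuinely diverge is the converse. The paper takes two processes with identical local statistics, inserts the identity decomposition on each of the four wires connecting them to an arbitrary tester, observes that the closed diagram then factorizes into the local statistics times a tester-dependent remainder, and concludes that the two processes assign equal probabilities to every tester and are therefore operationally equal. You instead sandwich an arbitrary $\widetilde{T}$ between the decompositions of $\widetilde{\mathds{1}}_A$ and $\widetilde{\mathds{1}}_B$, exhibiting $\widetilde{T}$ itself as a linear combination of effect-then-prepare processes whose coefficients are functions of $\mathbf{N}_{\widetilde{T}}$ alone---in effect deriving the decomposition of Corollary~\ref{corolTL} en route---so that equal local statistics (equal $\mathbf{N}_{\widetilde{T}}$, given product spanning sets on composites) force equal processes outright. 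Your route is shorter and arguably more transparent, but it leans on the linear structure established earlier (injectivity of the vector representation and bilinearity of composition, Lemma~\ref{bilinearityprf}), whereas the paper's tester argument works directly from the operational definition of process equality. Both arguments share one mild imprecision: the candidate $R$ (the paper's $f$) is a priori only an element of the linear span of processes, while Eq.~\eqref{loctom} quantifies over processes, so strictly one invokes the linearly extended form of tomographic locality; neither you nor the paper belabors this, so it is not a gap relative to the intended standard of rigor.
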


\begin{proof}
First, we prove that if a GPT satisfies tomographic locality, then the identity has a decomposition of the form in Eq.~\eqref{eq:identitydecomp2}. We do this by defining a particular process $f$ as a linear expansion into states and effects with the carefully chosen set of coefficients $[\mathbf{M}_{{\widetilde{\mathds{1}}}_A}]_i^j$, and then we prove that $f=\widetilde{\mathds{1}}_A$.

Take any minimal spanning set $\{ \widetilde{P}_i^A \}_i$ of GPT states and spanning set $\{ \widetilde{E}_j^A \}_j$ of GPT effects, and consider the transition matrix $\mathbf{N}_{\widetilde{\mathds{1}}_A}$ with entries given by
\beq
[\mathbf{N}_{\widetilde{\mathds{1}}_A}]_i^j:= \ \
\begin{tikzpicture}
	\begin{pgfonlayer}{nodelayer}
		\node [style=point] (0) at (0, -0.75) {$\widetilde{P}_i^A$};
		\node [style=copoint] (1) at (0, 0.75) {$\widetilde{E}_j^A$};
	\end{pgfonlayer}
	\begin{pgfonlayer}{edgelayer}
		\draw [qWire] (1) to (0);
	\end{pgfonlayer}
\end{tikzpicture}.
\label{eq:gdef}
\eeq
Next, define $\mathbf{M}_{{\widetilde{\mathds{1}}}_A}^{-1} := \mathbf{N}_{\widetilde{\mathds{1}}_A}$, that is the inverse $\mathbf{N}_{\widetilde{\mathds{1}}_A}$ with matrix elements  $[\mathbf{M}_{{\widetilde{\mathds{1}}}_A}]_i^j$ satisfying
\begin{equation} \label{invertgf}
\sum_j [\mathbf{M}_{{\widetilde{\mathds{1}}}_A}]_i^j [\mathbf{N}_{{\widetilde{\mathds{1}}}_A}]_j^k = \delta_{ik},
\end{equation}

 The matrix inverse of $\mathbf{N}_{\widetilde{\mathds{1}}_A}$ exists, since the rows of $\mathbf{N}_{\widetilde{\mathds{1}}_A}$ are linearly independent. 
That is, we show that $\sum_i a_i [\mathbf{N}_{\widetilde{\mathds{1}}_A}]_i^j=0$ if and only if $a_i=0$ for all $i$. First, note that we have \[0=\sum_ia_i[\mathbf{N}_{\widetilde{\mathds{1}}_A}]_i^j = \sum_i a_i 
.
\end{align}

\end{document}